\theoremstyle{plain}
\newtheorem{theorem}{Theorem}[section]
\newtheorem{lemma}[theorem]{Lemma}
\theoremstyle{definition}
\newtheorem{definition}[theorem]{Definition}
\newtheorem{assumption}[theorem]{Assumption}
\theoremstyle{remark}
\begin{document}

%

%

\twocolumn[

\aistatstitle{Online Decision-Making in Tree-Like Multi-Agent Games with Transfers}


\aistatsauthor{ Antoine Scheid \And Etienne Boursier \And  Alain Durmus \And Eric Moulines \And Michael I. Jordan}

\aistatsaddress{ CMAP, CNRS \\ École polytechnique
\And 
Inria Saclay, LMO \\
Universit\'e Paris Saclay
\And
CMAP, CNRS \\
École polytechnique
\And
CMAP, CNRS \\
École polytechnique
\And
Ecole Normale Sup\'erieure \\
Inria Paris \\
University of California, Berkeley} ]

\begin{abstract}
The widespread deployment of Machine Learning systems raises challenges, such as dealing with interactions or competition between multiple learners. In this work, we investigate multi-agent sequential decision-making in the context of principal–agent interactions structured as a tree. In this problem, the reward of a player is influenced by the actions of her children, who are all self-interested and non-cooperative, hence the complexity of making good decisions. Our main finding is that it is possible to steer all the players towards the globally optimal set of actions by simply allowing single-step transfers between them. A \textit{transfer} is established between a principal and one of her agents: the principal actually offers the proposed payment if the agent picks the recommended action. The analysis poses specific challenges due to the intricate interactions between the nodes of the tree and the propagation of the regret within this tree. To the best of our knowledge, it is the first formulation of multilayered principal–agent interactions via by incentives. Considering a bandit setup, we propose algorithmic solutions for the players to end up being no-regret with respect to the optimal pair of actions and incentives. In the long run, allowing transfers between players makes them act as if they were collaborating together, although they remain self-interested and non-cooperative: \textit{transfers restore efficiency}.
\end{abstract}

\section{Introduction}

The increasing deployment of Machine Learning (ML) systems in real-world environments introduces specific challenges, particularly regarding interactions and competition among self-interested learning agents. This has led to a growing literature that incorporates economic considerations into ML settings. A large part of these works focuses on \textit{mechanism design} \citep{laffont2009theory, nisan1999algorithmic, myerson1989mechanism}, where a principal (\textit{she}) wants an agent (\textit{he}) to perform a specific task and needs to incentivize the agent so he participates. Usually, the mechanism is a payment based on the agent's behavior. The majority of this literature assumes that one principal interacts with one agent or more. Unfortunately, real-world scenarios generally present much more intricate structures. When an agent accepts a payment from a principal, it is likely that this one agent then delegates a part of his effort to another agent, and so on and so forth. Extending this reasoning reveals a natural \textit{hierarchy of players}, forming a chain of interactions through principal-agent relations. It raises an issue about the behavior of players organized in a general structure of relations, who interact together through transfers in a repeated game; put differently:

\textit{Can self-interested learning agents converge to social welfare maximization through online incentives with their neighbors?}

\looseness=-1
Although the analysis of equilibrium in learning games has been extensively studied \citep{wang2002reinforcement, celli2020no, chakrabarti2024efficient, daskalakis2024efficient}, these settings solely involve one principal, her agents, and sometimes a central planner. As a first step towards solving our question in a general decentralized structure with no mediator, we assume a principal-agent structure of the players, organized as a \textit{tree} \citep[such structures are often studied in games, see,][]{zhang2022team, fiegel2023adapting}. The players sequentially take actions on a bandit instance, with their action influencing both their own reward and the reward of their principal. Our objective is to analyze the convergence of the entire set of players towards the global optimum, defined as the set of actions that maximizes the sum of all utilities. However, such convergence cannot occur endogenously, as individual players have no incentive to account for their principal's reward in their decision-making. To resolve this, we introduce \textit{transfers} -- also mentioned as \textit{incentives} -- (learned over time) from the principal to the agent, that help to align the players' utilities. This question is related to \textit{Contract Theory}, a field devoted to study how a principal can enforce an agent to take specific actions in order to obtain given outcomes. Contracting often presents two primary challenges:
\textit{Hidden Type}, also known as \textit{Adverse Selection}, which refers to the scenario where the principal cannot observe the agent's type; and \textit{Hidden Action} \citep[a.k.a. \textit{Moral Hazard}, see,][]{mirrlees1999theory}, which refers to the scenario where the principal cannot observe the actions taken by the agent. In this work, we are interested in scenari where the principal observes the agent's action.

In the field of Machine Learning, consider the example of delegated data collection. In such case, agents must be incentivized to participate \citep[see, e.g.,][]{zhan2020learning, zhan2021survey, karimireddy2022mechanisms, liu2022contract, capitaine2024unravelling}. If a principal pays agents to perform a learning/data collection task, we would model it as selecting an action. These agents, in turn, may delegate part of their task to data servers, offering them contracts to collect and process data—which again corresponds to choosing an action in a bandit instance. This cascading delegation naturally leads to a hierarchical structure of decision-making and incentives. This example highlights the need to develop models that handle complex and multi-layered incentivization mechanisms. While prior works studied the delegation of learning tasks between a principal and an agent through the lens of contract theory \citep[see, e.g.,][]{ananthakrishnan2024delegating, saig2024delegated}, this is -- to our knowledge -- the first work to consider a layered incentivization structure between principals and agents.

Therefore, we present a general formulation of an extensive form game between players organized as a tree with directed relations. At each step, a node (a player) observes the \textit{incentive} (a recommended action and an associated payment) offered by her parent. Then, the player takes an action on her multi-armed bandit instance and offers an incentive to her children. The reward of a node depends on her action as well as the actions taken by her children. Note that with regards to this delegated mechanism, the reward of a player $\vl$ is not \textit{directly} influenced by actions of players further down the tree. However, the actions of players several layers below affect their parents’ actions, and this influence propagates upward, ultimately leading to an \textit{indirect} influence on $\vl$.

\looseness=-1
\textbf{Contributions and Techniques.} After introducing our setting, we present $\alg$: \texttt{Multi-Agent Incentivized Learning}, an algorithm that enables the players to learn the optimal incentives to offer before executing a bandit subroutine on the updated problem instance, which accounts for the estimated payments. Each player must wait until their children have sufficiently learned and are best-responding with enough accuracy before initiating the process of exploring their preferences. An important challenge is to ensure that children almost best-response with high probability, since learning is not possible otherwise for the parent. Finally, we derive regret bounds for the players as well as an upper bound for the \textit{social welfare regret} that evolves as $o(T)$. From a technical perspective, the regret bound results are achieved by decomposing the regret of each player into three components: the error due to her own \textit{suboptimal actions}, the error arising from \textit{inaccuracies in the incentives}, and the error caused by the \textit{suboptimal decisions taken by her agents}. Interestingly, by controlling individual regrets while accounting for transfers, we are able to manage the collective regret of all players, effectively treating them as if they were collaborating. \textit{It reveals that it is possible to orchestrate selfish players without the need for a central server, solely by enabling transfers.}


\section{Related Works}

The study of social welfare (also referred to as global utility) in interactions between a principal and an agent \citep{grossman1992analysis, bendor2001theories} has been a long-standing area of research in the mechanism design literature that recently benefited from data-driven approaches \citep{bergemann2024data} and, more recently, applications of large language models \citep{duetting2024mechanism, dubey2024auctions}. The study of learning through sequential interactions has also gained attention \citep{han2024learning, kolumbus2024paying, zuo2024new, zuo2024optimizing}, especially with the recent rise of algorithmic contract design \citep{dutting2022combinatorial, dutting2024algorithmic, bacchiocchi2023learning, guruganesh2024contracting, bollini2024contracting, collina2024repeated}. While prior work has examined interactions between a principal and multiple agents \citep{dutting2023multi}, to the best of our knowledge, \textit{this paper is the first} to address the more complex scenario of a nested chain of interacting principals and agents. From a statistical point of view, our work relies on ideas from the bandit literature \citep{lattimore2020bandit, slivkins2019introduction} but diverges significantly from the literature on multi-player or competing bandits \citep{liu2020competing, boursier2022survey}, with simultaneous actions and no transfers.


\begin{figure}[t]
  \centering
  \includegraphics[width=\columnwidth]{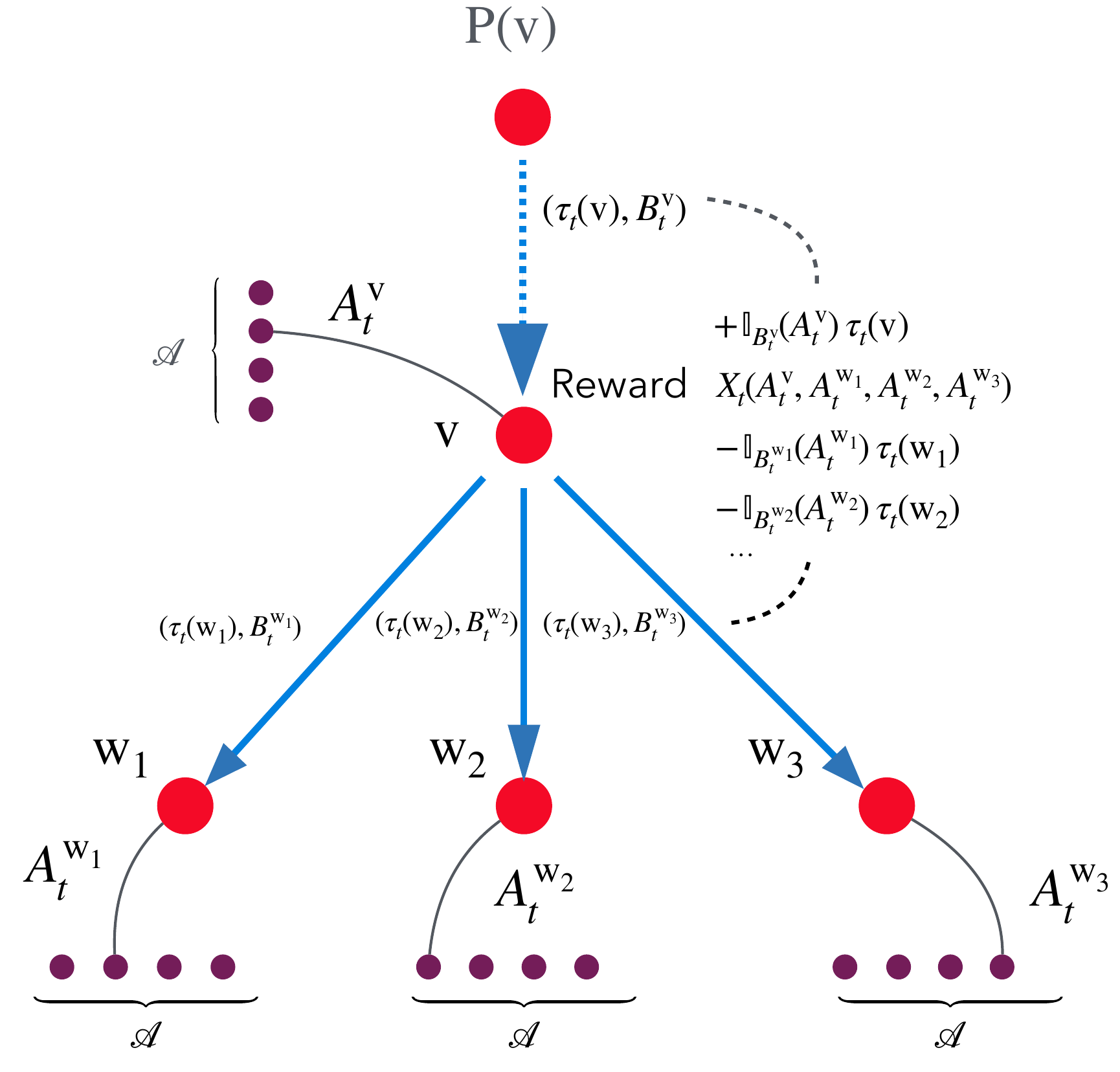}
  \caption{Illustration of the game structure: ...}
  \label{Illustration of the game structure: each node of the tree (in red) represents a player and the blue edges correspond to the principal/agent interactions between the players.}
\end{figure}

More importantly, our work also relies on the interactions between principal-agent bandit learners, as they are explored in \citet{scheid2024learning, scheid2024incentivized}, but their model only considers one principal and one agent. The concept of interacting bandit players has been first introduced in \citet{dogan2023estimating,dogan2023repeated}, with notable extensions to Markov Decision Processes \citep{ivanov2024principal, wu2024contractual} and frameworks where agents are constrained in probability to stay nearby their best responses \citep{liu2024principal}. The extension of our scenario to MDPs is definitely a long step ahead since \citet{ben2024principal} show that even in the two-player case—where a principal sends a bonus reward to an agent whose actions influence the principal’s utility—computing the principal’s optimal policy is NP-hard.

The study of principal–agent problems through the lens of learning is a growing field. Among others, \citet{wu2025learning} consider a principal interacting with strategic agents who have private information, while \citet{liu2024principal} examine a related setting in which agents take random exploratory steps. \citet{widmer2025steering} extend this line of work to mean-field games. Our work advances this literature a step further by introducing layered interactions among players who simultaneously act as both principals and agents.

\section{Setting}

\textbf{Model.} We consider a set of players $\trV$ organized as a tree $\cT = (\trV, \trE)$ where $\trE$ represents the set of edges between the players. These players engage in an online learning game over a horizon of $T$ steps. At each time step $t \in \iint{1}{T}$, each player $\vl \in \trV$ selects an action $A_t^\vl$ from a common action set $\cA$ of size $K$ (it is equivalent to consider the same action set or distinct action sets for the players).

\looseness=-1
We assume that each player is an expected-utility maximizer and does not strategically manipulate the others. Although it is possible for a learning agent to achieve its Stackelberg value through strategic interactions \citep{goktas2022robust, zhao2023online, collina2024efficient, haghtalab2024calibrated}, this is not always the case \citep[as shown in][]{ananthakrishnan2024knowledge}. Furthermore, we emphasize that our focus here is not on achieving a \textit{Stackelberg optimum} since we consider a game with transfers.

\looseness=-1
A key distinction from existing literature is that the players in our framework are organized in a hierarchical tree structure with depth $\mD$ and breadth $\mB$, reflecting \textit{principal-agent} relationships. Each player $\vl$ (represented as a node in the tree) serves as the principal for her children, denoted $\ch(\vl)$, while simultaneously acting as an agent for her parent, $\pa(\vl)$. Without loss of generality, we assume that each player has exactly $\mB$ children. The tree is indexed by its depth $d \in \iint{1}{\mD}$, where $d = 1$ corresponds to the leaf nodes, and $d = \mD$ represents the root. The total number of vertices in the tree is $\Card(\trV) = \sum_{k=1}^\mD \mB^k = (\mB^{\mD+1} - 1)/(\mB - 1)$. For the sake of clarity, we refer to the specific player under consideration as $\vl$, her children as $\wl$, and her parent as $\ul$ throughout the analysis.

\textbf{Reward.} Within this tree structure, the reward of a player $\vl$ at time $t$ depends on her action $A_t^\vl$ and the actions of all her children, denoted as $A_t^{\ch(\vl)} = (A_t^\wl)_{\wl \in \ch(\vl)}$. The reward of player $\vl$ at round $t$ is given by $X_t^\vl(A_t^\vl, A_t^{\ch(\vl)})$, which is drawn at random following:
\begin{equation}\label{equation:definition_bandit_reward}
X_t^\vl(a, a^{\ch(\vl)}) = \theta^\vl(a, a^{\ch(\vl)}) + z_t^\vl \eqsp,
\end{equation}
where $(z_t^\vl)_{t \in \N^\star}$ are i.i.d. zero-mean sub-Gaussian random variables, and $(\theta^\vl(a, a^{\ch(\vl)}))_{(a, a^{\ch(\vl)}) \in \cA^{\mB+1}} \in \R^{K^{\mB+1}}$. For any $(a, a^{\ch(\vl)}) \in \cA^{\mB+1}$, the expected reward is given by: $\theta^\vl(a, a^{\ch(\vl)}) = \E[X_t^\vl(a, a^{\ch(\vl)})]$, which represents the mean reward of player $\vl$ when the actions $(a, a^{\ch(\vl)})$ are played. The quantities $(\theta^\vl(a, a^{\ch(\vl)}))_{(a, a^{\ch(\vl)}) \in \cA^{\mB+1}}$ are unknown to the players, whose objective is to learn the actions with the highest mean rewards. This is achieved through a sequential selection of actions. We impose the following assumption on the mean rewards, which is equivalent to assuming bounded rewards, as they can always be rescaled.

\begin{assumption}\label{assumption:bounded_rewards}
    For any player $\vl \in \trV$ and any actions $a \in \cA, a^{\ch(\vl)} \in \cA^\mB$, we have that $\theta^\pl(a, a^{\ch(\vl)}) \in [0,1]$.
\end{assumption}

\looseness=-1
\textbf{On the Necessity of Transfers.} In this setup, as in any game, the primary difficulty lies in the fact that a player's reward depends on the actions of other players. Without additional mechanisms, the global utility of the participants is unlikely to be maximized: some players (e.g., leaf nodes) may achieve reasonably good utility, while others may gain little to no utility \citep[see e.g.,][for two players]{scheid2024learning}. To address this issue, we allow players to establish payments from one to another.

We define an \textit{incentive} = \textit{short-term contract} as a pair $(B_t^\wl, \icv_t(\wl)) \in \cA \times \R_+$, where player $\vl$ offers a transfer $\icv_t(\wl)$ to her child $\wl$ if the latter plays the recommended action $B_t^\wl$. Player $\wl$ can either accept the incentive and play $B_t^\wl$ to receive the transfer $\icv_t(\wl)$ or reject it and play any action of his choice, forfeiting the transfer. At each step $t \in \iint{1}{T}$, player $\vl$ first observes the incentive $(B_t^\vl, \icv_t(\vl))$ offered by her parent $\pa(\vl)$. Then, $\vl$ chooses the action of her choice $A_t^\vl$ and offers an incentive $\cC_t^\wl = (B_t^\wl, \icv_t(\wl))$ to any child $\wl \in \ch(\vl)$, which includes recommended actions $B_t^{\ch(\vl)} = (B_t^\wl)_{\wl \in \ch(\vl)}$ and transfers $(\icv_t(\wl))_{\wl \in \ch(\vl)}$. The resulting utility is:
\begin{equation}\label{equation:utility}
\begin{aligned}
    \ut_t^\vl 
    & = X_t^\vl(A_t^\vl, A_t^{\ch(\vl)}) + \1_{B_t^\vl}(A_t^\vl) \cdot \icv_t(\vl) \\
    & \quad - \sum_{\wl \in \ch(\vl)} \1_{B_t^\wl}(A_t^\wl) \cdot \icv_t(\wl) \eqsp,
\end{aligned}
\end{equation}
which accounts for the incentive received from $\pa(\vl)$. At the end of each round, in addition to observing the bandit feedback $X_t^\vl(A_t^\vl, A_t^{\ch(\vl)})$, player $\vl$ also observes her children's actions $(A_t^\wl)_{\wl \in \ch(\vl)}$ but nothing more.

\textbf{Extensive Form Game.} We can see that at each round, the players play an extensive form game, where the root player plays first, and actions are then played in a sequential manner while descending the tree until the leaves who observe their offered incentive and then pull an arm.

\textbf{Challenges.} Note that $\vl$ also acts as an agent for her parent $\pa(\vl)$ and receives an incentive $(B_t^{\vl}, \icv_t(\vl))$. As a result, $\vl$ faces a three-fold trade-off: balancing the contract  received from her parent, the action she chooses, and the contracts she offers to her children. 
\section{Algorithm and Results}

\textbf{Players' Policies.} Consider a player $\vl$, whose \textit{history} is defined as $\cH_0^\vl = \varnothing$, and for any $t > 0$:
\begin{equation*}\label{equation:definition_history}     \cH_t^\vl = \cH_{t-1}^\vl \cup \{U_t^\vl, B_t^\vl, \icv_t(\vl), A_t^{\ch(\vl)}, X_t(A_t^\vl, A_t^{\ch(\vl)})\},
\end{equation*}
where $((U_t^\vl)_{t \in \N^\star})_{\vl \in \trV}$ are families of independent uniform random variables in $[0,1]$ used for policy randomization. A \textit{policy} for player $\vl \in \trV$ is defined as a mapping:
\begin{equation}\label{equation:definition_policy}
\begin{aligned}
    \pi^\vl &\colon (U_{t+1}^\vl, \cH_t^\vl, B_{t+1}^\vl, \icv_{t+1}(\vl))  \\
    & \quad \mapsto (B_{t+1}^{\ch(\vl)}, \icv_{t+1}(\ch(\vl)), A_{t+1}^\vl) \eqsp.
\end{aligned}
\end{equation}

In this setup, player $\vl$ must determine her optimal action $A^\vl$ while simultaneously ensuring that her children $\ch(\vl)$ perform actions $B^{\ch(\vl)}$ that maximize her mean reward $\theta^\vl(A^\vl, B^{\ch(\vl)})$. To achieve this, the contract $(B^{\ch(\vl)}, \icv(\ch(\vl)))$ must incentivize the children to follow the recommended actions. The transfer $\icv(\ch(\vl))$ poses a \textit{critical trade-off}: if set too high, it reduces $\vl$’s utility; if set too low, the children may ignore the recommendations. Designing such a policy requires sophisticated algorithmic strategies to balance these competing objectives effectively.

\textbf{Optimal Incentives.} For any players $\vl \in \trV$ and $\wl \in \ch(\vl)$, any action $b \in \cA$ that $\vl$ wants to be played by $\wl$, we define the optimal incentive $(b,\icvs_b(\wl))$ as the minimal transfer in hindsight that $\vl$ needs to offer to $\wl$ so that $b$ is in expectation the best action that $\wl$ could play.

The definition of $\icvs$ appears by induction starting with the optimal incentives for the leaves and then ascending up to the root. Consider a player $\vl$ at depth $2$ who interacts with one of her children $\wl$ at depth $1$ (a leaf) and assume that $\wl$ is best-responding. Since $\wl$ is a leaf and has no agent, $\ch(\wl) = \varnothing$. Hence, if $\vl$ offers an incentive $\cC^\wl=(B^\wl,\icv(\wl))$ to $\wl$ and wants to make $B^\wl$ better than any other action, we need to have for any $a \ne B^\wl$, we must have $\theta^\ql(B^\wl) + \icv(\wl) > \theta^\ql(a)$,
which gives, since we want $\icvs_{B^\wl}(\wl)$ as small as possible
\begin{equation}\label{equation:transfer_last_layer}
    \icvs_{B^\wl}(\wl) = \max_{a \in \cA} \theta^\wl(a) - \theta^\wl(B^\wl) \eqsp.
\end{equation}
Therefore, for a set of actions $(A_t^\vl,B^{\ch(\vl)}_t)$: if $\vl$ knows the optimal incentives $(b,\icvs_b(\wl))_{b \in \cA, \wl \in \ch(\vl)}$ and her agents are best-responding, she obtains an expected utility at step $t$
\begin{equation}\label{equation:utility_last_layer}
\begin{aligned}
    \E[\cU_t^\vl
    ] = \theta^\vl(A_t^\vl,B^{\ch(\vl)}_t) + \1_{B^\vl_t}(A^\vl_t)\icv_t(\vl) - \sum_{\wl \in \ch(\vl)} \icvs_{B^\wl_t}(\wl) \eqsp.
\end{aligned}
\end{equation}

\looseness=-1
\textbf{Optimization Problem in Hindsight.} The quantities in \eqref{equation:transfer_last_layer} and \eqref{equation:utility_last_layer} are straightforward to compute for players at the leaves of the tree, but the task becomes significantly more complex for nodes in the core. Consider a player $\vl \in \trV$ at depth $d \geq 3$. We assume fully rational players with perfect knowledge of the game, i.e.,  all children $\wl \in \ch(\vl)$ are aware of the optimal average utility $\reww(b)$ they can extract from any action $b \in \cA$. This utility is well-defined as long as the optimal incentive schemes are specified, as in \eqref{equation:transfer_last_layer}.

For a single round, player $\vl$'s objective is to identify a set of actions $(a^\star, b^{\star, \ch(\vl)})$ and propose an optimal incentive $\cC^\star(b^{\star, \ch(\vl)}) = (b^{\star, \ch(\vl)}, \icvs_{b^{\star, \ch(\vl)}}(\ch(\vl)))$ that solves the following optimization problem:
\begin{align}\label{equation:optimization_problem}
    &\max_{a, \tau} \; \theta^\vl(a, b^{\ch(\vl)}) - \sum_{\wl \in \ch(\vl)} \icv(\wl) \\
    & \nonumber \text{s.t.  } \icv(\wl) \in \R_+, \; b^\wl \in \argmax_{b \in \cA} \{\reww(b) + \1_{b^\wl}(b)\icv(\wl)\} \eqsp,
\end{align}
for all $\wl \in \ch(\vl)$. This problem requires knowledge of the optimal transfers $(b^\wl, \icvs_{b^\wl}(\wl))$ for each child $\wl \in \ch(\vl)$ to be solved exactly. The transfer $\icvs_{b^\wl}(\wl)$ is defined as the minimum utility that must be transferred from $\vl$ to $\wl$ to ensure that $\wl$ selects $b^\wl$ as their optimal action in hindsight.

\textbf{Social Welfare} is defined as the total reward accumulated across all players. \Cref{equation:optimization_problem} shows that the introduction of contracts incentivizes players to select actions that not only optimize their own rewards, but also contribute to the overall benefit of others. Actions that lead to a higher collective reward thus become less costly, \textit{aligning individual incentives with social welfare}.

\begin{restatable}{lemma}{optimalincentives}\label{lemma:optimal_incentives}
    For any player $\vl \in \trV$, the optimal incentives $\cC^\star(b^{\ch(\vl)}) = (b^{\wl}, \icvs_{b^{\wl}}(\wl))_{\wl \in \ch(\vl)}$ for any $b^{\ch(\vl)} = (b^\wl)_{\wl \in \ch(\vl)} \in \cA^\mB$ are given by $\icvs_{b^{\wl}}(\wl) = \max_{b \in \cA} \reww(b) - \reww(b^\wl),
    $ and the expected utility for the couple of actions $(a,b^{\ch(\vl)})$ with the optimal incentives $\cC^\star(b^{\ch(\vl)})$ are well-defined and unique, following \begin{equation}\label{equation:mu_utility}
    \mu^\vl(a, b^{\ch(\vl)}) = \theta^\vl(a, \b^{\ch(\vl)}) - \sum_{\wl \in \ch(\vl)} \icvs_{\b^\wl}(w) \eqsp,
    \end{equation}
    as well as the best utility $\vl$ can obtain from action $a$ and any couple $b^{\ch(\vl)} \in \cA^\mB$ as
    \begin{equation}\label{equation:mu_utility_star}
    \rewv(a) = \max_{b^{\ch(\vl)} \in \cA^\mB} \mu^\vl(a, b^{\ch(\vl)}) \eqsp.
\end{equation}
We also have that   $\sum_{\vl \in \trV} \max_{a^\vl \in \cA}
        \rewv(a^\vl) =  \max_{(a^\vl) \in \cA^{\trV}} \sum_{\vl \in \trV} 
        \theta(a^\vl,a^{\ch(\vl)})$.
\end{restatable}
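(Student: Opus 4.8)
\emph{The plan is} to dispatch the three assertions in order: the closed form for $\icvs$, the well-definedness and uniqueness of $\mu^\vl$ and $\rewv$, and finally the welfare identity, the last one carried by an induction running from the leaves up to the root. Throughout I write $\cT_\vl$ for the subtree rooted at $\vl$ (including $\vl$), $\mathrm{desc}(\vl)$ for its strict descendants, and $S_\ul := \max_{b \in \cA}\rew^\ul(b)$ for the best utility node $\ul$ can extract over its own action (the quantity $\max_{a^\vl}\rewv(a^\vl)$ specialized to $\ul$); in particular $S_\wl = \max_b \reww(b)$.

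\textbf{Step 1 (optimal transfers).} I would read the incentive-compatibility constraint of \eqref{equation:optimization_problem} directly. Fixing a target $b^\wl$, the child's best response is $\argmax_{b}\{\reww(b)+\1_{b^\wl}(b)\icv(\wl)\}$, so $b^\wl$ lies in this set iff $\reww(b^\wl)+\icv(\wl)\ge\reww(b)$ for every $b\in\cA$, i.e. $\icv(\wl)\ge \max_{b}\reww(b)-\reww(b^\wl)$. Since paying more only lowers $\vl$'s objective, the minimizer saturates this bound, giving $\icvs_{b^\wl}(\wl)=\max_b\reww(b)-\reww(b^\wl)\ge 0$; substituting into the objective of \eqref{equation:optimization_problem} yields \eqref{equation:mu_utility} and hence \eqref{equation:mu_utility_star}. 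Well-definedness and uniqueness then follow because $\reww$ is produced by this very construction applied one layer below, so the whole family $(\icvs,\mu^\vl,\rewv)$ is pinned down by a single bottom-up recursion with base case \eqref{equation:transfer_last_layer} (where a leaf has $\reww(b)=\theta^\wl(b)$). I would flag the one delicate point: at the minimal transfer $\wl$ is exactly indifferent between $b^\wl$ and his unconstrained optimum, so the statement implicitly invokes the standard contract-theoretic convention of tie-breaking in the principal's favour.

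\textbf{Step 2 (the induction).} The identity is the crux. The naive guess that $\rewv(a)$ equals the optimal welfare of $\cT_\vl$ is false, because the accumulated transfer costs $-\sum_{\wl\in\ch(\vl)}S_\wl$ get in the way; the correct invariant I would carry is
\[
\rewv(a)=W_\vl(a)-\sum_{\ul\in\mathrm{desc}(\vl)}S_\ul \eqsp,
\]
where $W_\vl(a)$ is the maximal total reward $\sum_{\ul\in\cT_\vl}\theta^\ul(\cdot)$ over the subtree with $\vl$'s own action fixed to $a$. Plugging the Step~1 formula into \eqref{equation:mu_utility_star} gives
\[
\rewv(a)=\max_{b^{\ch(\vl)}}\Bigl[\theta^\vl(a,b^{\ch(\vl)})+\sum_{\wl\in\ch(\vl)}\reww(b^\wl)\Bigr]-\sum_{\wl\in\ch(\vl)}S_\wl \eqsp.
\]
I then substitute the inductive hypothesis $\reww(b)=W_\wl(b)-\sum_{\ul\in\mathrm{desc}(\wl)}S_\ul$ into the bracket: the descendant-sums do not depend on $b^{\ch(\vl)}$ and pull out of the maximization, the residual $\max_{b^{\ch(\vl)}}[\theta^\vl(a,b^{\ch(\vl)})+\sum_\wl W_\wl(b^\wl)]$ equals $W_\vl(a)$ by the recursive definition of subtree welfare, and the constants reassemble as $\sum_{\wl\in\ch(\vl)}\bigl(S_\wl+\sum_{\ul\in\mathrm{desc}(\wl)}S_\ul\bigr)=\sum_{\ul\in\mathrm{desc}(\vl)}S_\ul$, since the subtrees $\cT_\wl$ partition $\mathrm{desc}(\vl)$. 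This closes the induction.

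\textbf{Step 3 (root) and the main obstacle.} Taking $\max_a$ in the invariant gives $S_\vl=W_\vl-\sum_{\ul\in\mathrm{desc}(\vl)}S_\ul$ with $W_\vl:=\max_a W_\vl(a)$, i.e. $\sum_{\ul\in\cT_\vl}S_\ul=W_\vl$; applied at the root, where $\cT_\vl=\trV$ and $W_\vl$ is the global welfare, this is exactly $\sum_{\vl\in\trV}\max_{a^\vl}\rewv(a^\vl)=\max_{(a^\vl)\in\cA^{\trV}}\sum_{\vl\in\trV}\theta(a^\vl,a^{\ch(\vl)})$. The hard part is guessing the right invariant rather than the routine algebra: once one notices that the per-subtree extraction constants $\sum_{\ul\in\mathrm{desc}(\wl)}S_\ul$ are independent of the optimized action profile and therefore factor out of every maximization, the transfers telescope and the pure welfare recursion emerges — which is just the statement that transfers are internal and zero-sum, hence invisible to social welfare.
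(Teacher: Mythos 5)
Your proposal is correct, and Step 1 coincides with the paper's argument: the same incentive-compatibility inequality $\reww(b^\wl)+\icv(\wl)\ge\reww(b)$ forces the lower bound, saturation gives the formula, and the tie-breaking subtlety you flag is handled in the paper by noting that any transfer strictly above the bound makes $b^\wl$ strictly dominant and then taking the infimum — the same convention in different clothing. Where you genuinely diverge is the welfare identity. The paper proves it by a global, layer-by-layer telescoping computation: it writes $\sum_{\vl\in\trV}\max_{a^\vl}\rewv(a^\vl)$ as a sum over depths, substitutes $\icvs_{a^\wl}(\wl)=\max_{b\in\cA}\reww(b)-\reww(a^\wl)$, cancels the extraction terms $\max_b\reww(b)$ of each layer against that layer's own optimal utilities, and merges the resulting maximizations upward one depth at a time. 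You instead strengthen the induction hypothesis to a per-node invariant, $\rewv(a)=W_\vl(a)-\sum_{\ul\in\mathrm{desc}(\vl)}S_\ul$ (constrained subtree welfare minus the descendants' extraction constants), verified by pulling the action-independent constants out of the maximization and using that the child subtrees partition $\mathrm{desc}(\vl)$; the root case of this invariant is the claimed identity. Both proofs rest on the same cancellation — transfers are internal and zero-sum — but your formulation is more modular and actually establishes a pointwise statement stronger than the lemma (an interpretation of each $\rewv(a)$ as subtree welfare shifted by a constant), at the cost of introducing the auxiliary notation $W_\vl(a)$, $S_\ul$; the paper's computation stays within the quantities appearing in the statement but requires more careful bookkeeping across layers. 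One small point worth making explicit if you write this up: the recursion $W_\vl(a)=\max_{b^{\ch(\vl)}}\{\theta^\vl(a,b^{\ch(\vl)})+\sum_{\wl\in\ch(\vl)}W_\wl(b^\wl)\}$ is a consequence of the locality of rewards (each node's reward depends only on its own and its children's actions), not part of the definition of $W_\vl$, though the verification is immediate.
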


The quantities arising in \Cref{lemma:optimal_incentives} are well-defined for the leaves for which it is easy to quantify (see \eqref{equation:transfer_last_layer}). Then, an induction allows to define these quantities for the upper layers (see the proofs in \Cref{appendix:omitted_proofs}). In the last equation from \Cref{lemma:optimal_incentives}, the left hand side represents the target of the players playing on their own while the right hand side represents the target of collaborative players. This lemma shows that if the \textit{behavior} of their children is known, computing explicitly the optimal policy for a node can be done. For the whole tree, the complexity of the computation is $\Card(V) \, K^{B+1}$. As stated formally in \Cref{lemma:spne_outcome} in the appendix, the action profiles (including the recommended actions and incentives) described in \Cref{lemma:optimal_incentives} correspond to the actions played by players whose strategy profile forms a \textit{subgame perfect Nash equilibrium} -- see \Cref{equation:definition_spne} in the appendix.

\textbf{Regret Definition.} Since we define $\mu^\vl(a,b^{\ch(\vl)})$ as the utility for player $\vl$ with action $a$ and her children playing actions $b^{\ch(\vl)}$, we can now define her (pseudo)-regret \citep[see,][]{auer2016algorithm, lattimore2020bandit} for the time window $\iint{s}{t}$, implying the optimal incentives in hindsight as well as the received contract, following
\begin{align}
    \nonumber \regret^\vl(s,t) & = \sum_{l=s}^{t} \max_{a, b^{\ch(\vl)}} \left\{\mu^\vl(a, b^{\ch(\vl)}) + \1_{B_l^\vl}(a) \, \icv_l(\vl) \right\} \\
    & \label{equation:definition_regret} \quad - \sum_{l=s}^t \1_{B_l^\vl}(A_l^\vl)\, \icv_l(\vl) \\
    & \nonumber - \sum_{l=s}^t \left\{ \theta^\pl(A_l^\vl, A_l^{\ch(\vl)}) - \sum_{\wl \in \up(\pl)} \1_{B_l^\wl}(A^\wl_l)\icv_l(\wl)\right\} .
\end{align}
This notion of regret for our game takes into account the contract $(B_t^\vl,\icv_t(\vl))$ that player $\vl$ receives from $\pa(\vl)$ at each step $l\in \iint{s}{t}$. This is coherent in the sense that if a huge transfer is offered to $\vl$ for some specific action $B^\vl$, it would be a bad move not to play it -- hence the presence of $\1_{B_l^\vl}(A_l^\vl)\, \icv_l(\vl)$ in the regret. Finally, we define $\regret^\vl$ on any window $\iint{s}{t} \subseteq \N$ because analyzing regret on specific time intervals is crucial in this work. For the ease of notation, we also define the regret over the whole horizon as
\begin{equation*}
    \regret^\vl(T) = \regret^\vl(1,T) \eqsp,
\end{equation*}
and we now introduce a key lemma that exposes the various sources of regret experienced by a player.

\begin{restatable}{lemma}{decompositionregret}\label{lemma:decomposition_regret}
  For any $s,t\in\N^*$ and $\vl \in \trV$, we can decompose the regret into several terms, following
    \begin{align*}
    &\regret^\vl(s,t) \leq \actionregret^\vl(s,t) + \paymentregret^\vl(s,t) + \deviationregret^\vl(s,t) \eqsp,
    \end{align*}
    where the \textit{action regret} is defined as
\begin{equation*}
\begin{aligned}
    \actionregret^\vl(s,t) & = \sum_{l=s}^{t} \max_{a \in \cA} \{\rewv_a + \1_{B_l^\vl}(a)\, \icv_l(\vl)\} \\
    & \quad - \sum_{l=s}^{t} \left\{\mu^\vl(A_l^\vl, B_l^{\ch(\vl)}) + \1_{B_l^\vl}(A_l^\vl)\, \icv_l(\vl))\right\} \eqsp,
    \end{aligned}
\end{equation*}
the \textit{payment regret} as: $$\paymentregret^\vl(s,t) = \sum_{l=s}^t \sum_{\wl \in \ch(\vl)} (\icv_{B_l^\wl}(\wl)-\icvs_{B_l^\wl}(\wl))_+$$
and the \textit{deviation regret} as: $$\deviationregret^\vl(s,t)=\sum_{l=s}^t (\theta^\vl(A_t^\vl, B_t^{\ch(\vl)}) -\theta^\vl(A_t^\vl, A_t^{\ch(\vl)}))_+ .$$
\end{restatable}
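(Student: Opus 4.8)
The plan is to establish the bound summand by summand over $l \in \iint{s}{t}$, by inserting the quantity $\mu^\vl(A_l^\vl, B_l^{\ch(\vl)}) + \1_{B_l^\vl}(A_l^\vl)\,\icv_l(\vl)$ into \eqref{equation:definition_regret} via an add-and-subtract. The terms that remain attached to the hindsight benchmark will reproduce $\actionregret^\vl(s,t)$ exactly, and the leftover residual will split into a reward-difference piece (controlled by $\deviationregret$) and a transfer-difference piece (controlled by $\paymentregret$). First I would simplify the benchmark: since $\1_{B_l^\vl}(a)\,\icv_l(\vl)$ depends only on $a$ and not on $b^{\ch(\vl)}$, maximizing over $b^{\ch(\vl)}$ first and invoking the definition \eqref{equation:mu_utility_star} of $\rewv$ gives
\begin{equation*}
  \max_{a, b^{\ch(\vl)}} \bigl\{\mu^\vl(a, b^{\ch(\vl)}) + \1_{B_l^\vl}(a)\,\icv_l(\vl)\bigr\} = \max_{a \in \cA} \bigl\{\rewv(a) + \1_{B_l^\vl}(a)\,\icv_l(\vl)\bigr\} \eqsp,
\end{equation*}
which is precisely the benchmark appearing in $\actionregret^\vl$.

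After the add-and-subtract, the portion of $\regret^\vl(s,t)$ that matches $\actionregret^\vl(s,t)$ separates off, and it remains to bound the residual
\begin{equation*}
  \Delta_l := \mu^\vl(A_l^\vl, B_l^{\ch(\vl)}) - \theta^\vl(A_l^\vl, A_l^{\ch(\vl)}) + \sum_{\wl \in \ch(\vl)} \1_{B_l^\wl}(A_l^\wl)\,\icv_l(\wl) \eqsp.
\end{equation*}
Substituting the explicit form \eqref{equation:mu_utility} of $\mu^\vl$, namely $\mu^\vl(A_l^\vl, B_l^{\ch(\vl)}) = \theta^\vl(A_l^\vl, B_l^{\ch(\vl)}) - \sum_{\wl \in \ch(\vl)} \icvs_{B_l^\wl}(\wl)$, the residual splits cleanly as
\begin{equation*}
  \Delta_l = \bigl(\theta^\vl(A_l^\vl, B_l^{\ch(\vl)}) - \theta^\vl(A_l^\vl, A_l^{\ch(\vl)})\bigr) + \sum_{\wl \in \ch(\vl)} \bigl(\1_{B_l^\wl}(A_l^\wl)\,\icv_l(\wl) - \icvs_{B_l^\wl}(\wl)\bigr) \eqsp.
\end{equation*}
The first bracket is bounded above by its positive part, and summing over $l$ produces $\deviationregret^\vl(s,t)$ directly.

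The main (and essentially only) subtle step is bounding each summand of the transfer part by $(\icv_{B_l^\wl}(\wl) - \icvs_{B_l^\wl}(\wl))_+$, using the identification $\icv_l(\wl) = \icv_{B_l^\wl}(\wl)$ of the round-$l$ transfer with the transfer attached to the recommended action. I would argue by a case split on each child $\wl$: if $A_l^\wl = B_l^\wl$, the indicator is $1$ and the summand equals $\icv_{B_l^\wl}(\wl) - \icvs_{B_l^\wl}(\wl) \leq (\icv_{B_l^\wl}(\wl) - \icvs_{B_l^\wl}(\wl))_+$; if instead $A_l^\wl \ne B_l^\wl$, the indicator vanishes and the summand equals $-\icvs_{B_l^\wl}(\wl) \leq 0 \leq (\icv_{B_l^\wl}(\wl) - \icvs_{B_l^\wl}(\wl))_+$, where the first inequality uses the nonnegativity of the optimal transfer $\icvs \in \R_+$ guaranteed by \Cref{lemma:optimal_incentives}. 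In both regimes the per-child bound holds, so summing over $\wl \in \ch(\vl)$ and $l \in \iint{s}{t}$ yields $\paymentregret^\vl(s,t)$.

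Combining the three pieces gives $\regret^\vl(s,t) = \actionregret^\vl(s,t) + \sum_{l=s}^t \Delta_l \leq \actionregret^\vl(s,t) + \paymentregret^\vl(s,t) + \deviationregret^\vl(s,t)$, as claimed. The presence of positive parts in the definitions of the payment and deviation regrets is exactly what absorbs the two sign-indefinite residuals coming from a child deviating from her recommended action and from over- or under-payment relative to the optimal incentive; this is also the reason the statement is an inequality rather than an identity.
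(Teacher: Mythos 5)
Your proposal is correct and follows essentially the same route as the paper's own proof: an add-and-subtract that isolates the action regret, followed by substituting $\mu^\vl(A_l^\vl, B_l^{\ch(\vl)}) = \theta^\vl(A_l^\vl, B_l^{\ch(\vl)}) - \sum_{\wl \in \ch(\vl)} \icvs_{B_l^\wl}(\wl)$ and bounding the residual reward-difference and transfer-difference terms by their positive parts. Your write-up is in fact slightly more explicit than the paper's on two points it leaves implicit, namely the benchmark identity $\max_{a, b^{\ch(\vl)}} \{\mu^\vl(a, b^{\ch(\vl)}) + \1_{B_l^\vl}(a)\,\icv_l(\vl)\} = \max_{a \in \cA} \{\rewv(a) + \1_{B_l^\vl}(a)\,\icv_l(\vl)\}$ and the case split (using $\icvs_{B_l^\wl}(\wl) \geq 0$) showing that $\1_{B_l^\wl}(A_l^\wl)\,\icv_l(\wl) - \icvs_{B_l^\wl}(\wl) \leq (\icv_{B_l^\wl}(\wl)-\icvs_{B_l^\wl}(\wl))_+$ in both regimes.
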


Note that the \textit{action regret} $\actionregret^\vl$ corresponds to player $\vl$'s error in the action she chooses to play. The \textit{payment regret} $\paymentregret^\vl$ corresponds to the payment excess of $\vl$ to her children with respect to the optimal transfers, while the \textit{deviation regret} $\deviationregret^\vl$ corresponds to how much the children deviated from the proposed contracts.

\textbf{Good Behavior.} The following assumption about the players' \textit{action regret} is the most decisive tool of this work. Intrinsically, our results hold because we are able to show that if the children of a node satisfy this assumption, then this node satisfies the assumption as well, with potentially different parameters. Similar assumptions are often done in the literature \citep[see, e.g.,][]{donahue2024impact} when online principal-agent interactions are considered.

\begin{restatable}{assumption}{boundedregretbis}\label{assumption:bounded_regret} Considering a node $\vl \in \trV$, there exists parameters $(\wait_\vl, \acst_\vl, \kappa_\vl, \zeta_\vl)$ such that for any steps $s > \wait_\vl$, $t \in \N^\star$ with $s+t\leq T$ and transfers from $\pa(\vl)$ such that $(B_l^\vl, \icv_l(\vl)) = (B_l^\vl, \icv_{B_l^\vl})$ for any step $l \in \iint{s+1}{s+t}$, then
\begin{align*}
    \actionregret^\vl(s+1,s+t) \leq \acst_\vl \, t^{\kappa_\vl} \eqsp,
\end{align*}
    with probability at least $1-1/t^{\zeta_\vl}$. We call $\wait_\vl$ the \textit{$\waittext$} associated with player $\vl$.
\end{restatable}
It is important to note that \Cref{assumption:bounded_regret} is solely a \textit{no-regret assumption} on the players. We write \Cref{assumption:bounded_regret} for the \textit{Action Regret} in order to avoid a dramatic \textit{propagation} of the errors from the leaves to the roots, as explained in the penultimate paragraph of the section. The literature generally considers single principal-agent interactions \citep{haghtalab2022learning,mansour2022strategizing}. This work is one of the first in the learning community to study nested interactions between players, as well as how the corresponding parameters evolve along the tree. This is why it requires some new techniques, such as \Cref{assumption:bounded_regret}. Consider a leaf $\vl \in \trV$: there are no children, hence $\ch(\vl) =\varnothing$ and $\rewv(a) = \theta^\vl(a)$ for any $a \in \cA$. Thus, her \textit{action regret} can be written
\begin{align*}
    \actionregret^\vl(s+1,s+t) & = \sum_{l=s+1}^{s+t} \max_{a \in \cA} \{\theta(a) + \1_{B_l^\vl}(a)\, \icv_l(\vl)\} \\
    & \quad - \sum_{l=s+1}^{s+t} \left\{\theta^\vl(A_l^\vl) + \1_{B_l^\vl}(A_l^\vl)\, \icv_l(\vl))\right\} ,
\end{align*}
and \citet[Proposition 2]{scheid2024learning} show that in that case, if $\vl$ uses a \texttt{UCB} subroutine (see \Cref{algorithm:ucb}), \Cref{assumption:bounded_regret} is satisfied with constants $(0,8\sqrt{K\log(KT^3)}, 1/2,2)$. It also holds with other generic bandit algorithms, such as \texttt{Explore Then Commit} for instance.

Our goal is to design an algorithm that is no-regret for a player $\vl$, when all of her children $\wl$ satisfy \cref{assumption:bounded_regret}. Additionally, we want the player $\vl$ to also satisfy \cref{assumption:bounded_regret} when following this algorithm. In that purpose, we introduce the algorithm $\alg$, with its pseudo-code in \cref{algorithm:algorithm}.

\textbf{Our Algorithm.} Learning from not well-behaving agents is impossible. This is why the first step of $\alg$ is to wait $\max_{\wl \in \ch(\vl)} \wait_{\wl}$ steps so the children best-respond almost all of the time to the contracts \citep[for instance in][the agent also approximately best-responds, which gives power to the principal]{lin2024persuading}. Since the children all satisfy \Cref{assumption:bounded_regret}, we are then ensured that they have a small regret. Thus, we use an epoched binary search-like procedure to get an upper estimate $(\icvest_b(\wl))_{b\in \cA,\wl \in \ch(\vl)}$ of the optimal payments, following \eqref{equation:estimated_incentives_definition}. This is achieved using the auxiliary procedure $\expsub$ (see \Cref{algorithm:search_subroutine}). Then, $\alg$ uses a black-box bandit subroutine (typically \texttt{UCB}, see, e.g., \Cref{algorithm:ucb}) which runs on a $K^{\mB+1}$-armed bandit instance with shifted rewards defined as $\theta^\vl(A,B^{\ch(\vl)})-\sum_{\wl \in \ch(\vl)} \icvest_{B^\wl}(\wl)$ and takes into account $\vl$'s parent incentives, also shifting the rewards. We call a \textit{black-box} subroutine any bandit algorithm that runs on a multi-armed bandit instance and is \textit{no-regret}. For a set of actions $(A^\vl,B^{\ch(\vl)})$ recommended by the black-box subroutine, $\alg$ offers an incentive $(B^{\ch(\vl)}, \icvest_{B^{\ch(\vl)}}(\ch(\vl)))$ to her children and plays the action $A^\vl$.

\textbf{Incentive Exploration Subroutine.} \Cref{algorithm:algorithm} is based on the payment exploration subroutine $\expsub$ which estimates $\icvs_b(\wl)$ in the range $[\licv_b, \hicv_b]$ and shrinks the gap $\hicv_b(\lambda)-\licv_b(\lambda)$ along the iterations $\lambda$. More precisely, consider any player $\vl$ and action $b \in \cA$. Thanks to \Cref{assumption:bounded_rewards}, $\expsub$ starts with an estimate $[\licv_b(0), \hicv_b(0)] = [0,1]$ that contains $\icvs_b(\wl)$ almost surely. For a batch $\lambda$ of $\lceil T^\alpha\rceil$ iterations, $\vl$ offers an incentive $(b, (\hicv_b(\lambda)+\licv_b(\lambda))/2)$ to $\wl$. At the end of the batch $\lambda$, the number of steps for which the incentive has been accepted allows to \textit{guess} whether $\icvs_b(\wl)>(\hicv_b(\lambda)+\licv_b(\lambda))/2$ or $\icvs_b(\wl) < (\hicv_b(\lambda)+\licv_b(\lambda))/2$. $\expsub$ runs for $K \lceil T^\alpha\rceil \lceil \log T^\beta\rceil$ steps until a precision $1/T^\beta$ is obtained on $\icvs_b(\wl)$ for any $b\in [K], \wl\in \ch(\vl)$ and then estimates
\begin{equation}\label{equation:estimated_incentives_definition}
    \icvest_b(\wl) = \hicv_b(\lceil \log T^\beta \rceil)+1/T^\beta + \acst \,\mB \, T^{-\extra} \eqsp,
\end{equation}
where the quantity $1/T^\beta$ ensures that $\icvest_b(\wl)$ is above $\icvs_b(\wl)$ with high probability while the term $\acst \,\mB \, T^{-\extra}$ is there to make action $b$ really better than the others and thus, force $\wl$ to play it. As it is shown formally in \Cref{proposition:estimated_incentives_good} from \Cref{appendix:omitted_proofs}, we have the following for the estimated payment
\begin{equation}\label{equation:good_precision_incentives}
    |\icvest_b(\wl)-\icvs_b(\wl)| = o(T) \eqsp.
\end{equation}
After the payment exploration phase, suppose that a player $\vl$ takes action $A_t^\vl$ at step $t \in [T]$ and offers an incentive $(B_t^{\ch(\vl)},\icvest_{B_t^{\ch(\vl)}}(\ch(\vl)))$ to her children. If they all best-respond, then $(A_t^\vl,A_t^{\ch(\vl)}) = (A_t^\vl,B_t^{\ch(\vl)})$, which means that we can define the quantity $Z_t(A_t^\vl,A_t^{\ch(\vl)})$ that is the reward observed by player $\vl$, taking into account the cost of the actions, following $Z_t^\vl(A_t^\vl,B_t^{\ch(\vl)}) = X_t^\vl(A_t^\vl,B_t^{\ch(\vl)}) - \sum\icvest_\wl(B_t^\wl)$, as well as the true mean of this quantity $\nu_t^\vl(A_t^\vl,B_t^{\ch(\vl)}) = \theta_t^\vl(A_t^\vl,B_t^{\ch(\vl)}) - \sum\icvest_\wl(B_t^\wl) $.

We can now state our most important technical lemma, which allows the nodes of the tree to play well enough as we go from the leaves to the nodes. If node $\vl$ runs $\alg$ with parameters $(\extra, \alpha, \beta)$ and her children satisfy \Cref{assumption:bounded_regret} with parameters $(\wait, \acst, \kappa, \zeta)$, we need to have
\begin{equation}\label{equation:condition_parameters}
    \beta/\alpha < 1- \kappa \eqsp,
\end{equation}
for the payment exploration subroutine to be successful (see \Cref{appendix:omitted_proofs}, \Cref{lemma:classification_search} for more details). As shown in \eqref{equation:estimated_incentives_definition}, $\mB \acst T^{-\extra}$ is the amount of extra payment added to each arm to ensure that the children accept the incentives for enough steps. The regret bounds imply both a term evolving as $\bigO(T^{\kappa+\eta})$ and a term evolving as $\bigO(T^{1-\eta})$, hence the need of choosing $\eta$ such that $0<\extra<1-\kappa$, which is satisfied with our choice of parameters in \eqref{equation:definition_good_parameters}.

\begin{restatable}{lemma}{repeatassumptionbis}\label{lemma:repeat_assumption}
Consider a player $\vl \in \trV$ running $\alg$ with parameters $(\extra, \alpha, \beta)$ such that any child $\wl \in \ch(\vl)$ satisfies \Cref{assumption:bounded_regret} with parameters $(\wait, \acst, \kappa, \zeta)$ and $\beta/\alpha<1-\kappa$. Assume that \Cref{assumption:bounded_rewards} holds. In that case, $\vl$ also satisfies \Cref{assumption:bounded_regret} with parameters
$(\wait+K\lceil T^\alpha \rceil \lceil \log T^\beta \rceil, 10 \sqrt{K^{\mB+1}\log(K^{\mB+1}T^3)}, \max\{1/2, \kappa+\extra, 1-\beta\}, \Tilde{\zeta})$ with $\Tilde{\zeta} = \alpha \zeta- \log(4 \mB K \log T)/\log T$.
\end{restatable}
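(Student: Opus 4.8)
The plan is to verify \Cref{assumption:bounded_regret} for $\vl$ on any window $\iint{s+1}{s+t}$ with $s>\wait+K\lceil T^\alpha\rceil\lceil\log T^\beta\rceil$ and $s+t\le T$. Past this threshold both phases preceding exploitation are over: every child has cleared its own warm-up $\wait$, and $\expsub$ has already returned the estimates $(\icvest_b(\wl))_{b\in\cA,\wl\in\ch(\vl)}$, so on the window $\vl$ merely runs her black-box subroutine on the $K^{\mB+1}$-armed instance with means $\nu^\vl$, folding the (stationary) parent incentive $\icv_{B_l^\vl}$ into the reward of the arm with $a=B_l^\vl$. I would therefore separate a deterministic regret decomposition valid on a good event $\mathcal{G}$ from a union bound that controls the probability of $\mathcal{G}^c$.

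\textbf{The good event and $\Tilde\zeta$.} On $\mathcal{G}$ I ask that (i) the estimates be one-sided accurate, $\icvs_b(\wl)\le\icvest_b(\wl)\le\icvs_b(\wl)+2T^{-\beta}+\acst\mB T^{-\extra}$ for all $b,\wl$ — this is \Cref{proposition:estimated_incentives_good}/\eqref{equation:good_precision_incentives}, and it is here that condition \eqref{equation:condition_parameters}, $\beta/\alpha<1-\kappa$, is spent, guaranteeing through \Cref{lemma:classification_search} that each binary-search midpoint is classified correctly despite the children's residual regret — and (ii) each child obeys its own \Cref{assumption:bounded_regret} bound on the window. The exponent $\Tilde\zeta$ comes precisely from the failure probability: $\expsub$ performs $K\lceil\log T^\beta\rceil$ classifications per child, each decided on a batch of length $\lceil T^\alpha\rceil$ on which a child fails with probability at most $(T^\alpha)^{-\zeta}=T^{-\alpha\zeta}$; a union bound over the at most $\mB K\lceil\log T^\beta\rceil$ batches gives failure probability at most $4\mB K\log T\cdot T^{-\alpha\zeta}=T^{-\Tilde\zeta}\le t^{-\Tilde\zeta}$ (using $t\le T$), and the extra $\bigO(T^{-2})$ failure terms coming from the subroutine's confidence level $\log(K^{\mB+1}T^3)$ and from the children's in-window regret are absorbed since they carry strictly larger exponents.

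\textbf{Decomposition on $\mathcal{G}$.} I would then split $\actionregret^\vl(s+1,s+t)$ using $\mu^\vl=\nu^\vl+\sum_{\wl\in\ch(\vl)}(\icvest_{B_l^\wl}(\wl)-\icvs_{B_l^\wl}(\wl))$. The crucial simplification is that the overhead $\acst\mB T^{-\extra}$ is the \emph{arm-independent} part of this shift, so it cancels between the benchmark $\max_a\{\rewv_a+\1_{B_l^\vl}(a)\icv_l(\vl)\}$ and the realized term $\mu^\vl(A_l^\vl,B_l^{\ch(\vl)})$; hence no $t^{1-\extra}$ term survives. What remains is (a) the subroutine's regret against $\nu^\vl$, bounded by the windowed UCB guarantee by $10\sqrt{K^{\mB+1}\log(K^{\mB+1}T^3)}\,t^{1/2}$, yielding the exponent $1/2$ and the announced constant; (b) the residual estimation gap, at most $2\mB T^{-\beta}$ per round, hence $\le 2\mB\,t^{1-\beta}$ after $t\le T$, yielding the exponent $1-\beta$; and (c) the children's deviations. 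For (c), since $\icvest$ makes each recommended action strictly best for its child by a margin at least $\acst\mB T^{-\extra}$, every deviating round costs that child at least this margin in its own action regret, so \Cref{assumption:bounded_regret} for the children caps the deviating rounds by $\acst\,t^{\kappa}/(\acst\mB T^{-\extra})$ — the source of the exponent $\kappa+\extra$ — and these are the only rounds on which the subroutine's feedback is corrupted, each by $\bigO(1)$. Taking the maximum of $\{1/2,\kappa+\extra,1-\beta\}$ and the worst constant closes the induction.

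\textbf{Main obstacle.} The delicate step is item (c). I must show that the rare corrupted observations do not mislead the black-box subroutine beyond the deviation count itself — i.e.\ that a bounded number of corrupted pulls shifts the empirical means too little to reorder the well-sampled arms — and I must keep the $t$-versus-$T$ bookkeeping honest, since the margin $\acst\mB T^{-\extra}$ is frozen at exploration time while the window length $t$ ranges freely, so that the deviation contribution is genuinely controlled by $t^{\kappa+\extra}$ on every admissible window rather than only at $t=T$. Reconciling this frozen margin with the free window length, while keeping all failure events inside the single budget $t^{-\Tilde\zeta}$, is where the real work lies; the windowed regret bound of (a) and the overhead cancellation in the decomposition are comparatively routine.
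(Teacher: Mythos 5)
Your proposal follows essentially the same route as the paper's proof: the same good event (success of every classification batch, via \Cref{lemma:classification_search} and \Cref{proposition:estimated_incentives_good}, intersected with the children's in-window regret bounds), the same union bound producing $\Tilde{\zeta} = \alpha\zeta - \log(4\mB K \log T)/\log T$, and the same three-term split of the action regret --- deviation steps controlled through the margin $\acst\,\mB\,T^{-\extra}$, estimation-gap rounds giving the $t^{1-\beta}$ term, and the black-box \texttt{UCB} guarantee giving the $t^{1/2}$ term --- with the arm-independent overhead cancelling between benchmark and realized utility exactly as you describe.

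Two remarks on what you call the main obstacle. First, the corrupted-feedback issue is settled by algorithm design rather than by analysis: \Cref{algorithm:ucb} updates its empirical means only on steps where $A_t^{\ch(\vl)} = B_t^{\ch(\vl)}$, so non-compliant rounds never contaminate the estimates; the paper simply charges each such round a cost of $2$, adding $2\,\Card(\Jmax_{\iint{s+1}{s+t}})$ to the bound. Second, your worry about the frozen margin versus the free window length is well founded, and the paper's proof does not actually resolve it: in the step leading to \eqref{equation:cardjmaxlittle} it asserts $\Card(J_{\iint{s+1}{s+t}}^\wl) \leq t^{\kappa}/(T^{-\extra}\,\mB) \leq t^{\kappa+\extra}/\mB$, but the second inequality requires $T^{\extra} \leq t^{\extra}$, which fails on every window with $t<T$. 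The honest bound is $t^{\kappa}T^{\extra}/\mB$, which matches the claimed $t^{\kappa+\extra}/\mB$ only when $t = \Theta(T)$; establishing \Cref{assumption:bounded_regret} for $\vl$ on all admissible windows would require either propagating the mixed bound $t^{\kappa}T^{\extra}$ (with a correspondingly adjusted exponent) or restricting the class of windows. So the step you flagged as ``where the real work lies'' is precisely the step the paper glosses over; your proposal is not missing anything that the paper's own proof supplies.
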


\begin{restatable}{corollary}{boundedregretv}\label{corollary:boundedregretv}
    Consider a player $\vl \in \trV$ running $\alg$ such that any child $\wl \in \ch(\vl)$ satisfies \Cref{assumption:bounded_regret}. Then, we have that
    \begin{equation*}
        \E[\regret^\vl(T)] \leq o(T) \eqsp.
    \end{equation*}
\end{restatable}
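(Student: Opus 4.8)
The plan is to read off the corollary from \Cref{lemma:repeat_assumption} together with the regret decomposition of \Cref{lemma:decomposition_regret}, by showing that each of the three components is $o(T)$ in expectation. First I would apply \Cref{lemma:decomposition_regret} on the full window $\iint{1}{T}$ and take expectations,
\[
\E[\regret^\vl(T)] \leq \E[\actionregret^\vl(1,T)] + \E[\paymentregret^\vl(1,T)] + \E[\deviationregret^\vl(1,T)],
\]
and then bound the three terms separately. Throughout I assume $\vl$ runs $\alg$ with the parameters of \eqref{equation:definition_good_parameters}, so that $\beta/\alpha < 1-\kappa$ and $0 < \extra < 1-\kappa$, and I write $\wait_\vl = \wait + K\lceil T^\alpha \rceil \lceil \log T^\beta \rceil$ for the waiting time produced by \Cref{lemma:repeat_assumption}.

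For the action regret I would split the horizon at $\wait_\vl$. On the exploration window $\iint{1}{\wait_\vl}$, \Cref{assumption:bounded_rewards} and the boundedness of the payments give a constant per-step bound, so this part contributes $\bigO(\wait_\vl)$; since $\alpha,\beta < 1$ and, by induction over the bounded depth $\mD$, the children's waiting time $\wait$ is itself $o(T)$, we get $\wait_\vl = o(T)$. On the remaining window, \Cref{lemma:repeat_assumption} guarantees that $\vl$ satisfies \Cref{assumption:bounded_regret} with exponent $\kappa_\vl = \max\{1/2,\, \kappa+\extra,\, 1-\beta\} < 1$ and confidence exponent $\Tilde\zeta > 0$; I would convert this high-probability bound into an expectation by paying at most $\bigO(T)$ on the failure event of probability $T^{-\Tilde\zeta}$, yielding $\E[\actionregret^\vl(\wait_\vl+1,T)] \leq \acst_\vl T^{\kappa_\vl} + \bigO(T^{1-\Tilde\zeta}) = o(T)$.

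For the payment regret, note that outside the exploration phase $\vl$ offers the estimated transfers $\icvest$, which by \eqref{equation:estimated_incentives_definition} overpay $\icvs$ by at most $\bigO(T^{-\beta} + T^{-\extra})$ per step (the precision term $1/T^\beta$ plus the dominance margin $\acst\,\mB\,T^{-\extra}$); summed over at most $T$ steps this is $\bigO(T^{1-\beta} + T^{1-\extra}) = o(T)$, and the exploration phase again contributes only $\bigO(\wait_\vl)$. For the deviation regret, I would use that once a child $\wl$ has passed its own waiting time it best-responds except on steps charged to its action regret: because the added margin $\acst\,\mB\,T^{-\extra}$ makes the recommended arm strictly dominant by at least that amount in $\wl$'s incentivized utility, each deviation costs $\wl$ at least $\acst\,\mB\,T^{-\extra}$ of action regret, so the number of deviating steps is at most $\actionregret^\wl/(\acst\,\mB\,T^{-\extra}) = \bigO(T^{\kappa+\extra})$. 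Since each deviating step costs at most $1$ under \Cref{assumption:bounded_rewards} and there are $\mB$ children, $\E[\deviationregret^\vl(1,T)] = \bigO(\mB\,T^{\kappa+\extra}) = o(T)$. Summing the three bounds gives $\E[\regret^\vl(T)] = o(T)$.

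I expect the main obstacle to be the deviation-regret step, where the design choices interact most delicately: one must turn the children's \emph{cumulative} action-regret guarantee into a bound on their \emph{number} of deviations, which only works because the dominance margin $\acst\,\mB\,T^{-\extra}$ lower-bounds the per-deviation cost, and because $\extra < 1-\kappa$ keeps the resulting exponent $\kappa+\extra$ below $1$. A secondary technical point is the expectation conversion, which relies on $\Tilde\zeta$ remaining positive; for finite $T$ this requires $\alpha\zeta > \log(4\mB K \log T)/\log T$, so I would state it for $T$ large enough.
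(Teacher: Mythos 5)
Your proposal is correct and follows essentially the same route as the paper's own proof: bound the exploration window of length $\wait + K\lceil T^\alpha \rceil \lceil \log T^\beta \rceil$ by a constant per step, apply \Cref{lemma:decomposition_regret} afterwards, control the action regret via \Cref{lemma:repeat_assumption}, the payment regret via the precision of $\icvest$, and the deviation regret via the dominance-margin argument (the bound of \eqref{equation:cardjmaxlittle}), finishing with the high-probability-to-expectation conversion using $\Tilde\zeta>0$. In fact your write-up makes explicit several steps (the per-deviation cost of $\acst\,\mB\,T^{-\extra}$ and the failure-event accounting) that the paper's proof only invokes implicitly.
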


\textbf{Tree Structure.} We now specifically study the game within the tree structure $\cT$. The hyperparameters $(\extra_d, \alpha_d,\beta_d)$ are chosen accordingly to the depth $d \in \iint{2}{D}$. In our setting for the next theorems, $(\wait_d, \acst_d, \kappa_d, \zeta_d, \alpha_d,\beta_d)$ are the layers' parameters and are common to any node $\vl$ that is located at depth $d$ in the tree. Assume that for any $d \in \iint{2}{D}$, a node $\vl$ located at depth $d$ runs $\alg$ with parameters
\begin{equation}\label{equation:definition_good_parameters}
\begin{aligned}
    & \extra_d = \frac{1}{2d(d-1)} \; , \; \alpha_d = \frac{(D+1)(d-1)}{D \, d} \; , \; \beta_d = \frac{1}{2d} \eqsp,
\end{aligned}
\end{equation}
where $(\alpha, \beta, \extra)$ are not defined for the leaves which run directly \Cref{algorithm:ucb} without \Cref{algorithm:search_subroutine}. \Cref{lemma:repeat_assumption} is the key ingredient to prove our regret bounds. An important feature is the extra payment $\acst \, \mB \, T^{-\eta}$ in $\icvest$, which creates a larger reward gap and forces the children to play the action recommended by the parent. 
While this extra payment largely increases the payment regret, it yields a better control of the deviation regret as it makes deviating from the recommended actions more costly for the children. Moreover, any children deviation yields feedback on different arms than the ones intended by the principal and thus leads to lost exploration steps. As a consequence, this extra payment is not only needed to control the deviation regret, but also the action regret.

\begin{restatable}{lemma}{everyoneassumption}\label{lemma:everyone_assumption}
Consider a large enough horizon so that $\log T \geq D^2 \log(4 \mB K \log T)$. Assume that any player at depth $d$ in the tree runs $\alg$ with the parameters defined in \eqref{equation:definition_good_parameters}. Then, any player at depth $d \in \iint{1}{D}$ satisfies \Cref{assumption:bounded_regret} with parameters $(d\, K \lceil T^\alpha \rceil \lceil \log T^\beta \rceil, \acst, 1-1/2d, \zeta_d)$, and we have $\zeta_d \geq 1/d$.
\end{restatable}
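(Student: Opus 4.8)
The plan is to induct on the depth $d$, propagating \Cref{assumption:bounded_regret} from the leaves ($d=1$) up to the root ($d=D$) via one application of \Cref{lemma:repeat_assumption} per level. Throughout I write $\alpha\equiv\alpha_d$, $\beta\equiv\beta_d$ for the depth-$d$ schedule of \eqref{equation:definition_good_parameters} and $\acst\equiv 10\sqrt{K^{\mB+1}\log(K^{\mB+1}T^3)}$. For the base case, a leaf has $\ch(\vl)=\varnothing$ and runs plain \texttt{UCB}, so by \citet[Proposition~2]{scheid2024learning} it satisfies \Cref{assumption:bounded_regret} with $(0,8\sqrt{K\log(KT^3)},1/2,2)$. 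Since \Cref{assumption:bounded_regret} only weakens when the waiting time grows, when $\acst$ grows, when $\kappa$ grows, or when $\zeta$ shrinks, these leaf parameters are dominated by the claimed depth-$1$ parameters $(K\lceil T^{\alpha_1}\rceil\lceil\log T^{\beta_1}\rceil,\acst,1-\tfrac12,\zeta_1)$ with $\zeta_1=2\geq 1$.

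For the inductive step, assume every depth-$(d-1)$ node satisfies \Cref{assumption:bounded_regret} with the common layer parameters $(\wait_{d-1},\acst,1-\tfrac{1}{2(d-1)},\zeta_{d-1})$ and $\zeta_{d-1}\geq 1/(d-1)$. A node $\vl$ at depth $d$ has all its children at depth $d-1$, so \Cref{lemma:repeat_assumption} applies once its hypothesis \eqref{equation:condition_parameters} is checked: from \eqref{equation:definition_good_parameters}, $\beta_d/\alpha_d<1-\kappa_{d-1}$ reads $\tfrac{D}{2(D+1)(d-1)}<\tfrac{1}{2(d-1)}$, i.e. $\tfrac{D}{D+1}<1$, which always holds. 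The lemma then yields that $\vl$ satisfies \Cref{assumption:bounded_regret} with $\kappa_d=\max\{1/2,\ \kappa_{d-1}+\extra_d,\ 1-\beta_d\}$, with $\zeta_d=\alpha_d\zeta_{d-1}-c$ where $c=\log(4\mB K\log T)/\log T$, with constant $\acst$, and with waiting time $\wait_{d-1}+K\lceil T^{\alpha_d}\rceil\lceil\log T^{\beta_d}\rceil$.

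It then remains to match these to the claimed depth-$d$ parameters. For $\kappa$, a short computation gives $\kappa_{d-1}+\extra_d=1-\tfrac{1}{2(d-1)}+\tfrac{1}{2d(d-1)}=1-\tfrac{1}{2d}$ and $1-\beta_d=1-\tfrac{1}{2d}$, both exceeding $1/2$, so $\kappa_d=1-\tfrac{1}{2d}$ exactly. For $\zeta$, since $x\mapsto\alpha_d x-c$ is increasing, $\zeta_{d-1}\geq 1/(d-1)$ gives $\zeta_d\geq\tfrac{\alpha_d}{d-1}-c=\tfrac{D+1}{Dd}-c$, and $\tfrac{D+1}{Dd}-\tfrac1d=\tfrac{1}{Dd}$, so $\zeta_d\geq 1/d$ precisely when $c\leq 1/(Dd)$, i.e. $\log T\geq Dd\log(4\mB K\log T)$; as $d\leq D$ this follows from the horizon hypothesis $\log T\geq D^2\log(4\mB K\log T)$. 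For the waiting time, I would bound $\wait_{d-1}=(d-1)K\lceil T^{\alpha_{d-1}}\rceil\lceil\log T^{\beta_{d-1}}\rceil\leq (d-1)K\lceil T^{\alpha_d}\rceil\lceil\log T^{\beta_d}\rceil$, using that the polynomial gain $T^{\alpha_d-\alpha_{d-1}}=T^{(D+1)/(Dd(d-1))}$ dominates the constant ratio $\beta_{d-1}/\beta_d=d/(d-1)$ in the large-horizon regime; the total is then at most $dK\lceil T^{\alpha_d}\rceil\lceil\log T^{\beta_d}\rceil$, and monotonicity of \Cref{assumption:bounded_regret} lets me replace the true waiting time by this larger claimed value.

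I expect the $\zeta$ accounting to be the main obstacle: the recursion $\zeta_d=\alpha_d\zeta_{d-1}-c$ is contractive ($\alpha_d<1$) and subtracts a positive $c$ at every depth, so the high-probability guarantee erodes from the leaves upward; the verification above makes precise that the stated large-horizon condition $\log T\geq D^2\log(4\mB K\log T)$ is exactly the slack needed to keep $\zeta_d\geq 1/d$ all the way to the root, which is in turn what later keeps the social-welfare regret at $o(T)$. By contrast, the $\kappa$ identity is a clean telescoping and the waiting-time comparison is routine bookkeeping once the monotonicity of the batch schedule across adjacent depths is established.
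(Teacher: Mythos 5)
Your proof is correct and follows the same overall strategy as the paper's: induction on the depth, one application of \Cref{lemma:repeat_assumption} per layer, the same verification of the condition $\beta_d/\alpha_d < 1-\kappa_{d-1}$ (reducing to $D/(D+1)<1$), and the same telescoping identity $\kappa_{d-1}+\extra_d = 1-\beta_d = 1-\frac{1}{2d}$. The one genuine difference is the $\zeta$ bookkeeping. The paper does not induct directly on $\zeta_d \geq 1/d$: it feeds the recursion $\zeta_{d+1}=\alpha_{d+1}\zeta_d-\epsilon$ into the auxiliary \Cref{lemma:bound_zeta}, which lower-bounds $\alpha_{d+1}$ by $d/(d+1)$ (discarding the factor $(D+1)/D$), solves the resulting arithmetico-geometric recursion to obtain $\zeta_d \geq 2/d - d\epsilon$, and only then uses the horizon hypothesis to pass from $2/d-d\epsilon$ to $1/d$; there the slack comes from the initial value $\zeta_1=2$. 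You instead keep the exact value of $\alpha_d$, so that $\alpha_d/(d-1) = (D+1)/(Dd)$ exceeds the target $1/d$ by exactly $1/(Dd)$, and you spend the horizon hypothesis ($c \leq 1/D^2 \leq 1/(Dd)$ since $d\leq D$) to absorb the subtraction of $c$ at every level — a direct induction on the invariant $\zeta_d\geq 1/d$ that avoids the auxiliary lemma entirely. Both accountings are valid and consume exactly the same hypothesis $\log T \geq D^2\log(4\mB K\log T)$; yours is slightly tighter in spirit because it exploits the $(D+1)/D$ factor the paper throws away, while the paper's retains the stronger intermediate bound $\zeta_d\geq 2/d-d\epsilon$. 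A second, smaller difference: you make explicit the waiting-time comparison across adjacent layers ($\wait_{d-1}\leq (d-1)K\lceil T^{\alpha_d}\rceil\lceil\log T^{\beta_d}\rceil$ for $T$ large, since the polynomial gap $T^{\alpha_d-\alpha_{d-1}}$ dominates the constant ratio $\beta_{d-1}/\beta_d$), a bookkeeping step that the paper's proof leaves implicit.
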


The assumption $\log T \geq D^2 \log(4 \mB K \log T)$ may seem strong. However, it only reflects the fact that the horizon needs to be large enough as compared to the size of the tree, which is natural since the opposite would not allow players to learn from others within the available amount of time.

\begin{restatable}{theorem}{mainconvergence}\label{theorem:main_convergence}
Consider the game running for a horizon $T$, large enough so that $\log T \geq D^2 \log(4 \mB K \log T)$. Assume that any player in the tree runs \Cref{algorithm:algorithm} with the parameters specified in \eqref{equation:definition_good_parameters}. In that case, for any player $\vl \in \trV$ at depth $d$ in the tree, we have that
\begin{equation*}
        \regret^\vl(T) = \Tilde{\bigO}(T^{1-\frac{1}{2d^2}})
        \eqsp,
\end{equation*}
with probability at least $1-T^{\zeta_d}$ for $\zeta_d\geq 1/d$, and $\Tilde{\bigO}$ hiding logarithmic factors.
\end{restatable}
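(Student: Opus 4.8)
The plan is to derive the full-horizon bound as an almost direct consequence of \Cref{lemma:everyone_assumption}, which already controls the \emph{action regret}, combined with the decomposition of \Cref{lemma:decomposition_regret} and explicit bounds on the two remaining components. First I would dispose of the leaves: a node at depth $d=1$ has no children, so $\paymentregret^\vl = \deviationregret^\vl = 0$ and $\rewv(a)=\theta^\vl(a)$, and the \texttt{UCB} guarantee invoked right after \Cref{assumption:bounded_regret} gives $\regret^\vl(T)=\actionregret^\vl(1,T)=\Tilde{\bigO}(\sqrt{T})$, which is precisely $\Tilde{\bigO}(T^{1-1/2d^2})$ at $d=1$. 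This also clarifies the slightly loose exponent $1-1/2d^2$: it is the cleanest formula that stays valid at the leaves while dominating the sharper per-layer rates obtained for $d\geq 2$.

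For an internal node $\vl$ at depth $d\geq 2$ I would split the horizon into the warm-up window $\iint{1}{\wait_d}$ and the steady-state window $\iint{\wait_d+1}{T}$, where $\wait_d = d K \lceil T^{\alpha_d}\rceil \lceil \log T^{\beta_d}\rceil$ is the $\waittext$ produced by \Cref{lemma:everyone_assumption}. On the warm-up window the rewards and the estimated transfers are uniformly bounded (\Cref{assumption:bounded_rewards} and the range $[0,1]$ of $\expsub$), so each of the three regret terms contributes at most $\Tilde{\bigO}(\wait_d)=\Tilde{\bigO}(T^{\alpha_d})$; using $d\leq D$ one verifies $\alpha_d=(D+1)(d-1)/(Dd)\leq 1-1/2d^2$. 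The steady-state window is where the three terms of \Cref{lemma:decomposition_regret} are handled separately.

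The \emph{action regret} on $\iint{\wait_d+1}{T}$ is exactly the quantity bounded in \Cref{lemma:everyone_assumption}, so $\actionregret^\vl\leq \acst\,T^{1-1/2d}$ with probability $1-T^{-\zeta_d}$. For the \emph{payment regret} I would invoke \eqref{equation:estimated_incentives_definition} and the precision guarantee \eqref{equation:good_precision_incentives}: after exploration the per-step, per-child excess $(\icvest_b(\wl)-\icvs_b(\wl))_+$ is of order $T^{-\beta_d}+\acst\mB T^{-\extra_d}$, dominated by $\acst\mB T^{-\extra_d}$ since $\extra_d=1/(2d(d-1))<\beta_d=1/2d$, which after summing over $T$ steps and $\mB$ children gives $\paymentregret^\vl=\Tilde{\bigO}(T^{1-1/(2d(d-1))})$. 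For the \emph{deviation regret}, the crucial point is that the extra payment $\acst\mB T^{-\extra_d}$ makes each recommended action strictly best for the child by a margin at least $\acst\mB T^{-\extra_d}$; since each child at depth $d-1$ satisfies \Cref{assumption:bounded_regret} with $\kappa_{d-1}=1-1/2(d-1)$, the number of its deviating rounds is at most its action regret divided by that margin, i.e. $\Tilde{\bigO}(T^{\kappa_{d-1}+\extra_d})=\Tilde{\bigO}(T^{1-1/2d})$, and each deviating round costs at most one unit of reward. Collecting exponents yields $\regret^\vl(T)=\Tilde{\bigO}(T^{\max\{\alpha_d,\,1-1/2d,\,1-1/(2d(d-1))\}})$; since $1/(2d(d-1))\leq 1/2d$ and $1/(2d(d-1))\geq 1/2d^2$ for $d\geq 2$, the payment term is the largest and is itself bounded by $T^{1-1/2d^2}$, which closes the bound. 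The high-probability claim follows by a union bound over the finitely many per-layer events of \Cref{lemma:everyone_assumption}, each of confidence at least $1-T^{-\zeta_d}$ with $\zeta_d\geq 1/d$, the horizon condition $\log T\geq D^2\log(4\mB K\log T)$ ensuring these confidences stay bounded away from $0$.

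The main obstacle is the \emph{deviation regret}. Bounding it requires that the children genuinely almost-best-respond throughout the long steady-state window, not merely in expectation: I must convert the children's \emph{action}-regret guarantee (\Cref{assumption:bounded_regret}) into a high-probability bound on the \emph{number} of deviating rounds, which is feasible only because the extra payment $\acst\mB T^{-\extra_d}$ creates a strictly positive reward gap and because \Cref{lemma:everyone_assumption} certifies that gap exceeds the children's residual learning error on the relevant window. Propagating the constants and the confidence $\zeta_d$ consistently from depth $d-1$ to depth $d$ — rather than letting the failure probabilities accumulate across the $\Card(\trV)$ nodes — is the delicate bookkeeping, and is precisely what the horizon assumption and the form of $\Tilde{\zeta}$ in \Cref{lemma:repeat_assumption} are designed to absorb.
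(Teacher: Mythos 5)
Your proposal is correct and follows essentially the same route as the paper's proof: the warm-up window is bounded trivially, the regret is split via \Cref{lemma:decomposition_regret}, the action regret is controlled by \Cref{lemma:everyone_assumption}, the payment regret by \Cref{proposition:estimated_incentives_good} (giving the dominant $T^{1-1/(2d(d-1))}$ term), and the deviation regret by the margin argument behind \eqref{equation:cardjmaxlittle}, all collected into $T^{1-1/(2d^2)}$. The only cosmetic difference is that you verify $\alpha_d \leq 1-1/(2d^2)$ explicitly and phrase the confidence via a union bound over layers, whereas the paper absorbs both into the parameter choice \eqref{equation:definition_good_parameters} and the nice event of \Cref{lemma:repeat_assumption}.
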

In particular, \Cref{theorem:main_convergence} implies that we have
\begin{align*}
\E\parentheseDeux{\regret^\vl(T)} = o(T) \eqsp,
\end{align*}
and hence $\alg$ is actually a no-regret strategy for the players in the tree. This result is exciting because the \textit{orchestration} of the whole tree appears without the need of any kind of collaboration nor centralized leader. The single fact of allowing contracts makes the players behave optimally and maximize the social welfare, as it is formulated clearly in \Cref{corollary:final_convergence}, from a \textit{coalition} point of view. This result shows that, although being selfish on their own, the players can converge together towards the global equilibrium. This is accomplished through repeated transfers between the players, which helps avoiding globally suboptimal actions at any step. Note that the scaling of the regret depends on your depth in the tree: you have a lower regret when you interact with better-responding agents - i.e. lower ranked in the depth.

\begin{restatable}{corollary}{finalconvergence}\label{corollary:final_convergence}
    Suppose that the horizon is large enough so that $\log T \geq D^2 \log(4 \mB K \log T)$ and that any player $\vl \in \trV$ runs \Cref{algorithm:algorithm} with the parameters specified in \eqref{equation:definition_good_parameters}. In that case, we have that
\begin{align*}
    & \E\parentheseDeux{\sum_{t =1}^T \max_{(a^\vl) \in \cA^{|\trV|}} \sum_{\vl \in \trV}  \theta^\vl(a^\vl, a^{\ch(\vl)})- \theta^{\vl}(A_t^\vl, A_t^{\ch(\vl})} \\
    & \quad = o(T) \eqsp.
\end{align*}
\end{restatable}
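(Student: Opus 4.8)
The plan is to bound the social welfare regret appearing on the left-hand side by the \emph{sum} of the individual regrets $\regret^\vl(T)$ and then invoke \Cref{theorem:main_convergence}. Write $\mathrm{OPT} = \max_{(a^\vl) \in \cA^{\trV}} \sum_{\vl \in \trV} \theta^\vl(a^\vl, a^{\ch(\vl)})$, so that the quantity to control is the expectation of the nonnegative per-round sum $\sum_{t=1}^T \bigl(\mathrm{OPT} - \sum_{\vl \in \trV}\theta^\vl(A_t^\vl, A_t^{\ch(\vl)})\bigr)$. The first step is to sum the regret definition \eqref{equation:definition_regret} over all nodes $\vl \in \trV$ and carefully track the incentive terms it contains.

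The crucial structural observation is that the transfers telescope across the tree. In $\sum_{\vl}\regret^\vl(T)$, the realized incentive \emph{received} by each node, $\1_{B_t^\vl}(A_t^\vl)\,\icv_t(\vl)$, and the realized incentives it \emph{pays} to its children, $\sum_{\wl \in \ch(\vl)} \1_{B_t^\wl}(A_t^\wl)\,\icv_t(\wl)$, enter with opposite signs once summed over the whole tree. Re-indexing by edges, every payment on an edge $(\pa(\wl),\wl)$ is counted once as paid by $\pa(\wl)$ and once as received by $\wl$; since the root receives no incentive, the two global sums coincide exactly, pointwise for each $t$, and hence cancel. Using also $\max_{b^{\ch(\vl)}} \mu^\vl(a,b^{\ch(\vl)}) = \rewv(a)$, this leaves
\[
\sum_{\vl \in \trV}\regret^\vl(T) = \sum_{\vl \in \trV}\sum_{t=1}^T \max_{a \in \cA}\bigl\{\rewv(a) + \1_{B_t^\vl}(a)\,\icv_t(\vl)\bigr\} - \sum_{t=1}^T\sum_{\vl \in \trV} \theta^\vl(A_t^\vl, A_t^{\ch(\vl)})\eqsp.
\]

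Next, since the received incentive is nonnegative, $\max_{a}\{\rewv(a) + \1_{B_t^\vl}(a)\,\icv_t(\vl)\} \geq \max_{a}\rewv(a)$, so the first term is at least $T\sum_{\vl}\max_a \rewv(a)$. Here I would apply the identity from \Cref{lemma:optimal_incentives}, $\sum_{\vl \in \trV}\max_a \rewv(a) = \mathrm{OPT}$ — precisely the statement that the optimal-transfer individual targets sum to the collaborative optimum. Combining yields the pointwise bound $\sum_{t}\bigl(\mathrm{OPT} - \sum_{\vl}\theta^\vl(A_t^\vl, A_t^{\ch(\vl)})\bigr) \leq \sum_{\vl}\regret^\vl(T)$. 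Taking expectations, using that there are only $\Card(\trV) = (\mB^{\mD+1}-1)/(\mB-1)$ nodes — a constant independent of $T$ — together with $\E[\regret^\vl(T)] = o(T)$ for every $\vl$ (the expectation consequence of \Cref{theorem:main_convergence}), gives $\E\bigl[\sum_t(\mathrm{OPT} - \sum_\vl \theta^\vl(A_t^\vl, A_t^{\ch(\vl)}))\bigr] \leq \sum_{\vl}\E[\regret^\vl(T)] = o(T)$; as the left-hand side is nonnegative, this is the claim.

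The main obstacle is the transfer bookkeeping: one must verify that the realized received-incentive and paid-incentive terms cancel \emph{exactly} across the tree, whereas the benchmark received-incentive terms (those inside the max defining the individual target) do not cancel but are nonnegative and can be dropped in the favorable direction. Getting this accounting right — and checking that the benchmark of the individual regret, which includes the contract received from the parent, aligns via \Cref{lemma:optimal_incentives} with the raw-reward social optimum — is the heart of the argument; the remaining steps (the union over a constant number of players and the nonnegativity squeeze) are routine.
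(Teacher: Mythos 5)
Your proposal is correct, and it shares the paper's two key ingredients — the welfare identity $\sum_{\vl\in\trV}\max_{a}\rewv(a) = \max_{(a^\vl)}\sum_{\vl}\theta^\vl(a^\vl,a^{\ch(\vl)})$ from \Cref{lemma:optsimal\-\_incentives}\Cref{lemma:optimal_incentives} and the reorganization of transfer terms across the tree — but your execution of the middle step is genuinely different and cleaner. You sum the definition \eqref{equation:definition_regret} over all nodes, note that the realized received-transfer and paid-transfer terms cancel \emph{exactly} by re-indexing over edges (the root receives nothing, the leaves pay nothing, and interpreting $\up(\pl)$ in \eqref{equation:definition_regret} as $\ch(\vl)$, which is the intended reading), and lower-bound the benchmark terms by dropping the nonnegative received incentives; this yields the deterministic, pathwise inequality
\[
\sum_{t=1}^T\Bigl(\max_{(a^\vl)}\sum_{\vl\in\trV}\theta^\vl(a^\vl,a^{\ch(\vl)}) - \sum_{\vl\in\trV}\theta^\vl(A_t^\vl,A_t^{\ch(\vl)})\Bigr) \le \sum_{\vl\in\trV}\regret^\vl(T) \eqsp,
\]
with no correction terms and no probabilistic events, after which taking expectations and using $\E[\regret^\vl(T)]=o(T)$ (the expectation consequence of \Cref{theorem:main_convergence}) over the constant number of nodes concludes. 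The paper performs the same transfer reorganization but then detours through algorithm-specific quantities: it splits off the exploration phase, upper-bounds realized payments by the offered estimates $\icvest$, converts back to $\mu^\vl$ via the optimal incentives, and controls payment-estimation errors and deviation counts (terms like $4T^{1-\min_d \beta_d}$, $\mB\max_d\acst_d\,T^{1-\min_d \extra_d}$, $T^{\max_d \kappa_d+\extra_d}$) on a high-probability event before taking expectations. Your route buys simplicity and robustness — the reduction to individual regrets holds surely, for any realization and any algorithm — while the paper's route makes the size of the extra $o(T)$ terms explicit by reusing its earlier estimates, at the cost of considerably heavier bookkeeping.
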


As we show formally in \Cref{lemma:spne_outcome_convergence} in the appendix, this lemma and its proof allow to show that the profile of actions played by players in $\trV$ following $\alg$ converge to the profile of actions of players $\trV$ whose strategy are a subgame perfect Nash equilibrium.

\section{Final Remarks and Conclusion}

\textbf{Speed of convergence.} An interesting feature is the evolution of the regret's \textit{speed of convergence} along the layers -- the worst regret bound being for the root. \textit{The better your agent behaves, the lower is your regret} -- the optimal scenario being \textit{interactions with a best-responding} agent. Here, we showed that incentivization is still possible among a tree-structure of players.

\textbf{Why Decomposing the Regret?} \Cref{assumption:bounded_regret} could be stated in terms of the overall regret $\regret^\vl$ and it would still yield convergence guarantees, but with dramatic regret bounds, as we show here. 
Beyond enhancing clarity, decomposing the regret as we did in \Cref{lemma:decomposition_regret} is crucial to mitigate the dramatic propagation of regret terms.

\begin{restatable}{lemma}{lowerboundish}\label{lemma:lowerboundish}
    Following the same setup as described, consider a parent $\vl$ interacting with his child $\wl$. For any strategy of the parent $\vl$, there exists a game $(\theta^\vl, \theta^\wl)$ and a child's strategy that achieves a regret upper bound $\regret^\wl(T) \leq \bigO(T^\kappa)$ such that the parent's regret is lower bounded as $$\regret^\vl(T)\geq \Omega \, (T^{\frac{\kappa+1}{2}}) \eqsp.$$
\end{restatable}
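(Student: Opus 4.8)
The plan is to exhibit a single fixed two-action instance together with an adversarial but $O(T^\kappa)$-regret child, and to show that \emph{any} parent strategy — even one that knows the instance exactly — is trapped in a trade-off between \emph{paying a safety margin} so the child does not deviate, and \emph{suffering deviations} when the margin is small. The exponent $(\kappa+1)/2$ is the geometric mean of the payment scale and the deviation scale, attained at the threshold margin $\delta^\star := T^{(\kappa-1)/2}$; this already signals that the result is structural rather than a learning (estimation) lower bound.

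Concretely, I would take $\cA=\{1,2\}$, set the child's means to $\theta^\wl(1)=1/2+g$ and $\theta^\wl(2)=1/2$ with $g=1/4$ (so the child privately prefers action $1$ and $\icvs_2(\wl)=\theta^\wl(1)-\theta^\wl(2)=g$), and let the parent's reward depend only on the child's action, $\theta^\vl(a,1)=0$ and $\theta^\vl(a,2)=1/2$ for all $a\in\cA$. A short computation gives $\mu^\vl(a,2)=1/2-g=1/4$ and $\mu^\vl(a,1)=0$, so the per-round benchmark is $1/4$. Since $\vl$ receives no incentive, the regret collapses to $\regret^\vl(T)=\tfrac14 T-\sum_t\bigl(\theta^\vl(A_t^\wl)-\1_{B_t^\wl}(A_t^\wl)\,\icv_t(\wl)\bigr)$. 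I then define the child strategy: she observes $(B_t^\wl,\icv_t(\wl))$ and best-responds, except on \emph{medium} rounds where $B_t^\wl=2$ and $\icv_t(\wl)\in[g,g+\delta^\star)$, on which she instead plays $1$, doing so on at most $N:=\lceil T^\kappa/\delta^\star\rceil=T^{(\kappa+1)/2}$ rounds. Each such deviation lowers her own utility by at most $\icv_t(\wl)-g<\delta^\star$, so her accumulated regret is at most $N\delta^\star=O(T^\kappa)$, and on every other round she is exactly best-responding and incurs no regret; this verifies the required $\regret^\wl(T)\le\bigO(T^\kappa)$.

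For the parent's bound I partition the horizon by the offer into \emph{high} rounds ($B_t^\wl=2$, $\icv_t(\wl)\ge g+\delta^\star$, forcing overpayment $\ge\delta^\star$ when the child accepts), \emph{medium} rounds (on which the child deviates until the budget $N$ is spent, each deviation leaving the parent with reward $0$ instead of the benchmark $1/4$), and \emph{bad} rounds ($B_t^\wl=1$, or $B_t^\wl=2$ with $\icv_t(\wl)<g$), each giving per-round parent regret $\ge 1/4$ since the child plays $1$. Writing $M_H,M_M,B$ for the three counts, a case split closes the argument: if $M_M\ge N$ the deviation regret alone is $\ge\tfrac14 N=\tfrac14 T^{(\kappa+1)/2}$; otherwise $M_H+B\ge T/2$ for large $T$ (using $\kappa<1$, so $N=o(T)$), and then either $M_H\gtrsim T$, giving payment regret $\gtrsim T\delta^\star=T^{(\kappa+1)/2}$, or $B\gtrsim T$, giving regret $\gtrsim T\ge T^{(\kappa+1)/2}$. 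In all cases $\regret^\vl(T)\ge c\,T^{(\kappa+1)/2}$.

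The main obstacle is the deviation-budget accounting: I must guarantee the child's number of deviations is $\ge\min(M_M,N)$ \emph{no matter how} the parent spreads her medium offers — small margins let her deviate more often but cost the parent the full $1/4$ each time, whereas margins near $g+\delta^\star$ cap the count at $\approx N$ — while simultaneously keeping the child's own regret provably $O(T^\kappa)$. I also must check that no parent move escapes the three categories; in particular, recommending the child's own preferred action $1$ yields realized utility $-\icv_t(\wl)\le 0$ and hence per-round regret $\ge 1/4$, so it is never cheaper than paying for action $2$ and folds into the bad rounds. Pinning down the constants in $g$, $\delta^\star$, and the case thresholds so every inequality holds for large $T$ is then routine bookkeeping.
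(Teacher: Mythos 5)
Your proposal is correct and follows essentially the same route as the paper's proof: the same threshold margin $T^{-(1-\kappa)/2}$, the same child deviation budget of $T^{(\kappa+1)/2}$ low-margin rounds, and the same dichotomy between deviation regret (many low-margin offers, each costing the parent a constant) and overpayment regret (few low-margin offers, forcing an extra payment of order $T^{-(1-\kappa)/2}$ on a constant fraction of rounds). If anything, your version is slightly more careful than the paper's, since you make the instance fully explicit, keep the child's strategy online rather than prescient with respect to the sequence of payments, and explicitly fold the ``bad'' parent moves (recommending the child's preferred action or underpaying) into the case analysis, whereas the paper dismisses these via a strict-dominance argument.
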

This lemma demonstrates that regret propagation can be extremely severe in hierarchical structures. Considering \Cref{lemma:lowerboundish}, a straightforward inductive argument shows that a player $\vl$ at depth $d$ would suffer a regret on the order of $\regret^\vl \approx T^{1-\frac{1}{2^d}}$. Although our method — through a careful decomposition of payment, deviation, and action regret — mitigates this escalation, it is important to recognize that even small amounts of regret at the agent level can amplify dramatically as they propagate upward. This observation also justifies our conservative regret upper bounds, which may initially seem loose but capture the unavoidable propagation of errors in the tree.

\textbf{Why transfers.} One could ask why the transfers were introduced in this sequential multi-agent game. As shown in \citet{scheid2024learning} in the simpler two-players case, without transfers, one would necessarily have
\begin{align*}
    & \sum_{t =1}^T \max_{(a^\vl) \in \cA^{|\trV|}} \sum_{\vl \in \trV}  \theta^\vl(a^\vl, a^{\ch(\vl)})- \theta^{\vl}(A_t^\vl, A_t^{\ch(\vl}) \geq \Delta T \eqsp,
\end{align*}
for some reward gap $\Delta>0$. It is directly linked to the fact that without transfers, the utilities of the players are not aligned, which creates inefficiencies and results in a loss of social welfare. Without transfers in this multi-agent game, the global utility breaks down. Hence, \textit{contracting and incentives restore the efficiency of the game.}

\textbf{Experiments.} We illustrate our theoretical findings with experiments on a toy example in \Cref{figure:regret_comparison_non_contextual} where there is a depth $D=3$ and each node -- except the leaves -- has $3$ children. It makes a total of $13$ players in the game.

\begin{figure}[t]
  \centering
  \includegraphics[width=\columnwidth]{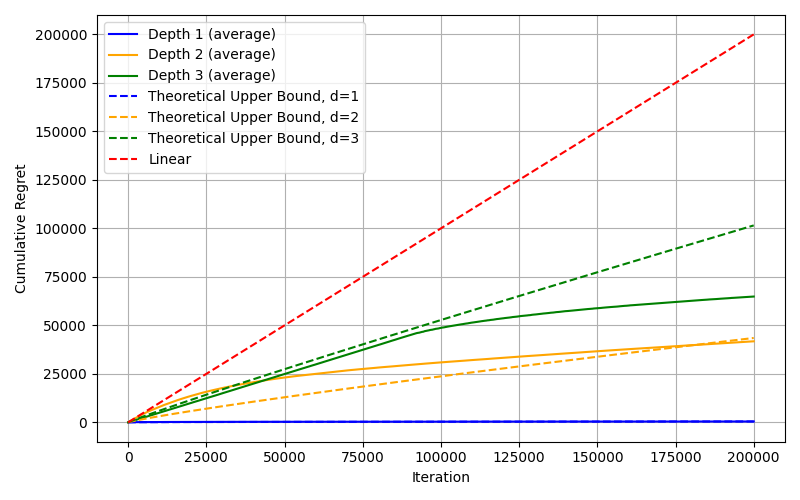}
  \caption{Cumulative regret for different algorithms on a $5$ arms instance.}
    \label{figure:regret_comparison_non_contextual}
\end{figure}

There are $K=5$ arms for each players and each reward for a user and an arm is a uniform draw in $[0,1]$. Additionally, the dotted line represents the dominant term in the upper bound, $T^{1-1/2d^2}$.

\paragraph{Conclusion.} We showed that allowing transfers between non-collaborative players allows no-regret learners to converge to the global optimum within a hierarchical principal-agent tree structure. Algorithmically, the first step for the players is to learn the optimal incentive needed to \textit{convince} their children to play some specific action. Then, a bandit subroutine can be used on the shifted instance. Interestingly, the players end up behaving as if they were collaborating. The straightforward extensions of this work would be similar studies with dynamic trees or additional context. It would also be interesting to relax to some extent the \textit{myopic} and \textit{regret-maximizing} assumptions on our agents.

\newpage

\section*{Acknowledgements}

Funded by the European Union (ERC, Ocean, 101071601). Views and opinions expressed are however those of the authors only and do not necessarily reflect those of the European Union or the European Research Council Executive Agency. Neither the European Union nor the granting authority can be held responsible for them.


\bibliography{sample}

\clearpage
\appendix
\thispagestyle{empty}

\onecolumn

\newpage
\appendix

\section{Extended Related Works}

The emergence of large models considered as economic agents \citep{shapira2024can, immorlica2024generative, deng2024llms} motivates the study of learning in repeated \textit{Stackelberg games} \citep{brown2024learning, morri2025learning}. The latter is challenging because the leader generally does not know the follower's preferences \citep{arunachaleswaran2024learning}. From a global perspective, welfare maximization with learning agents at stake is a complex task \citep{anagnostides2024barriers}: a central question is whether players can converge to equilibrium (which has been explored in \citet{zhang2023steering}) or if it is possible to steer no-regret agents towards desired outcomes \citep{widmer2025steering}. However, the setup of \citet{zhang2023steering} assumes the presence of a mediator interacting with multiple agents, thereby overlooking the intricate interactions and decentralized nature of the problem. Incentives have proved to be efficient to align players with initially misaligned goals \citep{saig2024incentivizing, maheshwari2024adaptive, freund2025fair}, and framing learning problems as games in general can help to design algorithms for interacting environments  \citep{isa2025learning}.

\section{Detailed Explanations on the Algorithms}\label{appendix:algorithmic_details}

\textbf{Payment Exploration Phase for Player $\vl \in \trV$.} A first step is to explain how works the payment exploration subroutine \Cref{algorithm:search_subroutine}, which roughly corresponds to a batched binary search operated by the parent node $\vl \in \trV$ on all of her children $\wl \in \ch(\vl)$. As presented in \Cref{lemma:optimal_incentives}, the true rewards (which takes into account the transfers to the children) of the actions can be defined by induction. The goal of a player $\vl \in \trV$ is to learn first the optimal transfer $\icvs_b(\wl)$ for any $b \in \cA, \wl \in \ch(\vl)$ which then gives the optimal contract $(b,\icvs_b(\wl))$. For that purpose, $\vl$ first waits $\wait = \max_{\wl \in \ch(\vl)} \wait_\wl$ steps, so that all her children almost best-respond on their own bandit instances with their own children nodes. After these steps, the player $\vl$'s payment exploration phase starts. The binary search strategy is the same across all the children. More precisely, for any $b \in \cA$, we define
\begin{equation}\label{equation:psi}
    \Psi_b^\vl = \wait + (b-1) \, \lceil \log T^\beta \rceil \, \lceil T^\alpha \rceil \eqsp,
\end{equation}
which is the step at which $\vl$ starts the binary search procedure to estimate arm $b$'s associated optimal transfer $\icvs_b(\wl)$ - for any $\wl \in \ch(\vl)$. For any binary search step $\lambda \in \iint{1}{\lceil \log T^\beta \rceil}$, we also define
\begin{equation}\label{equation:k_b_d}
    k_{b,\lambda}^\vl = \Psi_b^\vl + (\lambda-1)\, \lceil T^\alpha \rceil \eqsp,
\end{equation}
which corresponds to the step after which $\vl$ starts the $\lambda$-th batch of binary search on her children's arm $b$. Now, we can define
\begin{equation}\label{equation:Tnot}
    \Tnot_{b,\lambda}(\wl) = \Card\{ t \in \{k_{b,\lambda}^\vl+1, \ldots, k_{b,\lambda}^\vl+\Texp\} \; \text{ such that }\; A_t^\wl \ne b\} \eqsp,
\end{equation}
which corresponds to the number of steps during the payment exploration phase on arm $b$, player $\wl$ when the child did not accept the transfer. We also define our "nice" event $\evgoodw$ for any player $\wl \in \trV$ on a window of steps $\iint{s+1}{s+t}$ as
\begin{align}
\evgoodw(\iint{s+1}{s+t}, \acst, \kappa) & = \left\{\sum_{l=s+1}^{s+t} \max_{a \in \cA} \{\reww_a + \1_{B_l^\wl}(a)\, \icv_l(\wl)\} \right. \notag \\
& \quad \left. - \left\{ \mu^\wl(A_l^\wl, B_l^{\ch(\wl)}) +\1_{B_l^\wl}(A_l^\wl)\, \icv_l(\wl)\right\} \leq \acst t^{\kappa} \right\} \eqsp,
\end{align}
and any node $\wl \in \ch(\vl)$ which satisfies \Cref{assumption:bounded_regret} with parameters $(\wait, \acst, \kappa, \zeta)$, also satisfies $\evgoodw([s+1,\ldots, s+t], \acst, \kappa)$ for any $s>\wait$.

To clarify, \Cref{assumption:bounded_regret} ensures that the children actions $(A^{\ch(\vl)})$ do not deviate too much from the recommended action $(B^{\ch(\vl)})$. With mean-based algorithms such as \texttt{UCB}, the contract $(B^\vl_t,\icv_t(\vl))$ received from $\pa(\vl)$ at each step $t$ can simply be added to the score of each arm to be taken into account. Based on this assumption, \Cref{lemma:repeat_assumption} shows that if $\vl$ uses the algorithm $\alg$ and all her children satisfy \Cref{assumption:bounded_regret}, $\vl$ actually satisfies the assumption in turn.

Once the leaves satisfy \Cref{assumption:bounded_regret}, their parents -- if they use $\alg$ -- also satisfy \Cref{assumption:bounded_regret} with new constants, again by \Cref{lemma:repeat_assumption}. By induction, \Cref{assumption:bounded_regret} is then satisfied by all the nodes up to the root if everyone plays $\alg$, where the convergence rates are detailed in \Cref{lemma:everyone_assumption}. In that way, we show that if all the players use $\alg$, then they converge to the optimum as a whole.

Before moving to the results, we first describe more precisely the different subroutines used in $\alg$. Assuming that the children $\ch(\vl)$ of a node $\vl$ satisfy \Cref{assumption:bounded_regret} with parameters $(\acst_\wl, \wait_\wl, \kappa_\wl, \zeta_\wl)$, we can define
\begin{equation}\label{equation:domination_parameters}
\begin{aligned}
    &\acst = \max_{\wl \in \ch(\vl)}\acst_\wl \;\; , \;\; \wait =\max_{\wl \in \ch(\vl)} \wait_\wl \eqsp, \kappa = \max_{\wl \in \ch(\vl)} \kappa_\wl  \;\; ,  \;\; \zeta = \min_{\vl \in \ch(\wl)} \zeta_\wl \eqsp.
\end{aligned}
\end{equation}

\begin{algorithm}[tb]
   \caption{$\alg$ (player $\vl$).}
   \label{algorithm:algorithm}
\begin{algorithmic}
   \STATE {\bfseries Input:} Set of action $\cA$, horizon $T$, subroutine $\sub^\wl$ and parameters $\acst_\wl$, $\wait_\wl$, $\kappa_\wl$, $\zeta_\wl$ for any $\wl \in \ch(\vl)$, parameters $\alpha, \beta, \extra$
   \STATE Compute $\cH^\vl_0 = \varnothing, \wait = \max_\wl \wait_\wl,\acst = \max_\wl \acst_\wl, \kappa = \max_\wl \kappa_\wl, \zeta = \min_\wl \zeta_\wl$
   \STATE Pick action $B \in \cA$ at random.
   \FOR{$t=1, \ldots, \wait$}
    \STATE Offer contract $(B,0)$ to any child $\wl \in \ch(\vl)$ 
    \STATE Play action $A_t^\vl \sim \mathrm{Unif}(K)$.
   \ENDFOR
   \FOR{any agent $\wl \in \ch(\vl)$}
   \FOR{any action $b \in \cA$}
   \STATE $\licv, \hicv = \expsub(T,\wl, \sub^\wl, b, \wait, \acst, \kappa, \alpha, \beta, \extra)$
   \STATE Compute $\icvest_b(\wl) = \hicv_b(\lceil \log T^\beta \rceil)+1/T^\beta + \acst \,\mB \, T^{-\extra}$
   \STATE \textcolor{blue}{\# See \Cref{algorithm:search_subroutine}}
   \ENDFOR
   \ENDFOR
   \FOR{$t = \wait+K \lceil T^\alpha \rceil \lceil \log T^\beta\rceil+1, \ldots, T$}
   \STATE Get set of actions from the $K^{\mB+1}$ bandit instance: \\
   $(A_t^\vl, B_t^{\ch(\wl)}) = \bandsub(U_t,\cH_{t-1}^\vl, B^\vl_t, \icv_t(\vl))$.
   \STATE Play action $A_t^\vl$, offer contract $\cC_t^\wl = (B_t^\wl,\icvest_{B_t^\wl}(\wl))$ to any player $\wl \in \ch(\vl)$.
   \STATE Observe $A_t^\wl = \sub^\wl(U_t^\wl, \cH_{t-1}^\wl, B_t^\wl, \icv_t(\wl))$ for any child $\wl \in \ch(\vl)$
   \STATE Update history $\cH_t^\vl$ and history $\cH_t^\wl$ for any $\wl \in \ch(\vl)$
   \ENDFOR
\end{algorithmic}
\end{algorithm}

\begin{algorithm}[tb]
   \caption{$\expsub$ (player $\vl$).}
   \label{algorithm:search_subroutine}
\begin{algorithmic}
   {\STATE \bfseries Input:} $T,\wl, \sub^\wl, b, \wait, \acst, \kappa, \alpha, \beta, \extra$.
   \STATE Initialize $\licv_b(0)=0, \hicv_b(0) = 1, \Texp = \lceil T^\alpha \rceil$ 
    \FOR{$\lambda = 0, \ldots, \lceil \log (T^\beta) \rceil-1$}
        \STATE Compute $\icvmid_b(\lambda) = (\hicv_b(\lambda) + \licv_b(\lambda))/2$, $\Tnot_b=0$.
        \FOR{$t = \lambda \Texp+1, \ldots, \lambda \Texp + \Texp$}
            \STATE Propose contract $(b, \icvmid_b(\lambda))$ to player $\wl$
            \STATE $A^\wl_t = \sub^\wl(\cU_{t+1}^\wl,\cH_t^\wl,b, \icvmid_b(\lambda))$
            \IF{$A_t^\wl \ne b$:}
            \STATE $\Tnot_b +=1$
            \ENDIF
            \STATE Update player's $\wl$ history $\cH^\wl_{t+1}$ following the definition
    \ENDFOR
    \IF{$\acst \Texp^{\kappa +\beta/\alpha} < \Tnot_b < \Texp -\acst\Texp^{\kappa+\beta/\alpha}$}
    \STATE Return $\licv_b(\lambda), \hicv_b(\lambda)$.
    \ELSIF{ $\Tnot_b \leq \Texp - \acst \Texp^{\kappa+\beta/\alpha}$}
        \STATE $\hicv_b(\lambda) = \icvmid_b(\lambda)+1/T^\beta$ and update history $\cH^\vl_{t}$.
        \ELSE
        \STATE $\; \licv_b(\lambda) = \icvmid_b(\lambda)-1/T^\beta$ and update history $\cH^\vl_{t}$.
    \ENDIF
    \ENDFOR
\end{algorithmic}
\end{algorithm}

\textbf{\texttt{UCB} Subroutine.} We present the player's \texttt{UCB} subroutine. As explained above, once the optimal contracts are estimated by \Cref{algorithm:search_subroutine}, $\alg$ uses \Cref{algorithm:ucb} to be no-regret on her \textit{shifted} bandit instance (the reward of each arm is shifted by the contract that needs to be paid so the children actually play the right action). As shown in \citet{scheid2024learning}, the policy \Cref{algorithm:ucb} satisfies \Cref{assumption:bounded_regret} with $\kappa=1/2, \zeta=2$. We make use of this for the proofs in \Cref{appendix:omitted_proofs}. Since the leaves of the tree directly use \Cref{algorithm:ucb} without the need of offering contracts, they are guaranteed to satisfy \Cref{assumption:bounded_regret}.

\begin{algorithm}[!ht]
\caption{\texttt{UCB}'s subroutine - Player $\vl$}\label{algorithm:ucb}
\begin{algorithmic}[1]
    \STATE {\bfseries Input:} Set of arms $\cA$, horizon $T$, $\extra$, estimated incentives $(\icvest_b(\wl))_{\wl \in \ch(\vl), b \in \cA}$
    \STATE {\bfseries Initialize:} For any combination of arms $(a,b^{\ch(\vl)}) \in \cA^{\mB+1}$, set $\hat{\nu}_{a,b^{\ch(\vl)}} = 0, \, T_a = 0$
    \FOR{$1 \leq t \leq K^{\mB+1}$:}
        \STATE Pull an arm $A_t,B_t^{\ch(\vl)}$ that has not been pulled yet.
        \STATE Update $\hat{\nu}_{A_t,B_t^{\ch(\vl)}} = X_{A_t,B_t^{\ch(\vl)}}(t) -\sum_{\wl \in \ch(\vl)} \icvest_{B_t^{\wl}}(\wl), \, T_{A_t,B_t^{\ch(\vl)}}(t)= 1$
    \ENDFOR
    \FOR{$t \geq K^{\mB+1}+1$}
        \STATE Observe the contract $(B_t^\vl, \icv_t(\vl))$.
        \STATE Pick $(A_t^\vl,B_t^{\ch(\vl)}) \in \argmax_{a,b^{\ch(\vl)} \in \cA^{\mB+1}} \left\{\hat{\nu}_{a,b^{\ch(\vl)}}(t-1) + 2\sqrt{\frac{\log (K^{\mB+1} \, T^3)}{T_{a,b^{\ch(\vl)}}(t-1)}} + \1_{B_t^\vl}(a)\icv_t(\vl) \right\}$
        \IF{$A_t^{\ch(\vl)}=B_t^{\ch(\vl)}$}
        \STATE Update $T_{A_t^\vl,A_t^{\ch(\vl)}}(t)=T_{\A_t^\vl,A_t^{\ch(\vl)}}(t-1)+1, \, \hat{\nu}_{A_t^\vl,A_t^{\ch(\vl)}}(t) = \frac{1}{T_{A_t^\vl,A_t^{\ch(\vl)}}(t)}(T_{A_t^\vl,A_t^{\ch(\vl)}}(t-1)\hat{\nu}_{A_t^\vl,A_t^{\ch(\vl)}}(t-1) + X_{A_t^\vl,A_t^{\ch(\vl)}}(t)-\sum_{\wl \in \ch(\vl)} \icvest_{B_t^{\wl}}(\wl))$
        \ENDIF
    \ENDFOR
\end{algorithmic}
\end{algorithm}

\section{Proofs}\label{appendix:omitted_proofs}

\begin{proof}[Proof of \Cref{lemma:optimal_incentives}]
    We conduct the proof by induction starting from the first layer $d=1$ contracting with the leaves and ascending to the root.

    \textbf{Initialization.} The optimal utility any leaf $\vl$ can extract from the arms is $\theta^\vl(a)$ for any action $a\in \cA$. Therefore, we have that for $\vl$, the average optimal utility is defined as
    \begin{equation*}
        \mu^\vl(a) = \theta^\vl(a) \; \text{  and  } \; \rewv(a) = \theta^\vl(a) \eqsp,
    \end{equation*}
    since $\ch(\vl) = \varnothing$ and hence $\vl$'s contracts are empty. As we show it in the main text, we also have that for any node $\ul$ in the second layer (in interaction with children $\vl \in \ch(\ul)$ being leaves), we have that
    \begin{equation*}
        \icvs_{b^\vl}(\vl) = \max_{b \in \cA} \rewv(b) - \rewv(b^\vl) = \max_{b \in \cA} \theta^\vl(b) - \theta^\vl(b^\vl) \eqsp,
    \end{equation*}
    and this quantity is well-defined and unique. Considering $\ul=\pa(\vl)$, it leads to
    \begin{equation*}
        \mu^\ul(a,b^{\ch(\ul)}) = \theta^\ul(a,b^{\ch(\ul)}) - \sum_{\vl \in \ch(\ul)} \icvs_{b^\vl}(\vl) \eqsp,
    \end{equation*}
    which is well defined too. Finally, for any $a \in \cA, \rewu(a)$ is well-defined and unique as a maximum over a finite number of scalars. Therefore, our initialization holds.

    \textbf{Induction step.} Consider a player $\vl \in \trV$ that is not a leaf. We consider a single child $\wl \in \ch(\vl)$ to show our result. Our induction assumption gives us that $\reww(b)$ is well-defined for any $b \in \cA$. For any $b^\wl \in \cA$, picking $b^\wl$ can lead in hindsight to an optimal average utility of $\reww(b^\wl)$ by definition. Since $\icvs_{b^\wl}(\wl)$ is defined as the smallest transfer such that $b^\wl$ is the best action $\wl$ could play in hindsight, we necessarily have that for any $b \in \cA$
    \begin{equation}\label{equation:fsrgf}
        \1_{b^\wl}(b^\wl)\, \icvs_{b^\wl}(\wl) + \reww(b^\wl) \geq \1_{b^\wl}(b)\, \icvs_{b^\wl}(\wl) + \reww(b) \eqsp,
    \end{equation}
    which gives that necessarily
    \begin{align*}
        \icvs_{b^\wl}(\wl) \geq \max_{\b \in \cA} \reww(b) - \reww(b^\wl) \eqsp.
    \end{align*}
    As long as $\icv_{b^\wl}(\wl) >\max_{\b \in \cA} \reww(b) - \reww(b^\wl)$, \eqref{equation:fsrgf} is satisfied with a strict inequality: which makes $b^\wl$ the best possible action in hindsight. Hence, taking the infimum of $\icvs_{b^\wl}(\wl)$ gives that
    \begin{align*}
        \icvs_{b^\wl}(\wl) = \max_{\b \in \cA} \reww(b) - \reww(b^\wl) \eqsp,
    \end{align*}
    hence the first part of the result. For any couple $(a,b^{\ch(\vl)}) \in \cA \times \cA^\mB$, if the players $\ch(\vl)$ are fully rational and have a full knowledge of the game, then offering to any player $\wl \in \ch(\vl)$ the contract $\cC^\wl = (b^\wl, \icv_{b^{\wl}}(\wl))$ such that $\icv_{b^{\wl}}(\wl)\geq \icvs_{b^{\wl}}(\wl)$ ensures that $A^{\wl} = b^\wl$. This leads to the average utility
    \begin{align*}
        \E\parentheseDeux{\cU^\vl(a,\cC^{\ch(\vl)})} = \theta^\wl(a,A^{\ch(\vl)}) - \sum_{\wl \in \ch(\vl)} \1_{b^\wl}(A^\wl) \, \icv_{b^\wl}(\wl) = \theta^\wl(a,b^{\ch(\vl)}) - \sum_{\wl \in \ch(\vl)} \icv_{b^\wl}(\wl) \eqsp,
    \end{align*}
    and finally taking the maximum over the transfers that only need to satisfy $\icv_{b^{\wl}}(\wl)\geq \icvs_{b^{\wl}}(\wl)$ gives that
    \begin{align*}
        \mu^\vl(a,b^{\ch(\vl)}) = \theta^\wl(a,b^{\ch(\vl)}) - \sum_{\wl \in \ch(\vl)} \icvs_{b^\wl}(\wl) \eqsp.
    \end{align*}
    Since $(\mu^\vl(a,b^{\ch(\vl)}))_{a \in \cA, b^{\ch(\vl)}\in \cA^\mB}$ are finite, unique and well-defined quantities, we have that
    \begin{equation*}
        \rewv(a) = \max_{b^{\ch(\vl)}\in \cA^\mB} \mu^\vl(a,b^{\ch(\vl)})
    \end{equation*}
    is also well-defined and unique. Therefore, our induction holds, hence the result.

For the second part of the proof, we first write $\trV_d$ for the set of nodes $\vl \in \trV$ such that $\vl$ is located at depth $d$ and we consider the quantity
    \begin{align*}
        \sum_{\vl \in \trV} \max_{(a^\vl) \in \cA^{\Card(\trV)}}
        \rewv(a^\vl) & = \sum_{\vl \in \trV_1}
        \max_{a^\vl}\theta^\vl(a^\vl) + 
        \sum_{\vl \in \trV_2}
        \max_{a^\vl,a^{\ch(\vl)}} \left\{\theta^\vl(a^\vl,a^{\ch(\vl)}) -\sum_{\wl \in \ch(\vl)}\icvs_{a^{\wl}}(\wl)\right\} + 
        \ldots \\
        & \quad + \sum_{\vl \in \trV_D}\max_{a^\vl,a^{\ch(\vl)}} \left\{\theta^\vl(a,a^{\ch(\vl)}) -\sum_{\wl \in \ch(\vl)}\icvs_{a^{\wl}}(\wl)\right\} \\
        & = \sum_{\vl \in \trV_1}
        \max_{a^\vl}\theta^\vl(a^\vl) \\
        & \quad + \sum_{\vl \in \trV_2}
        \max_{a^\vl,a^{\ch(\vl)}} \left\{\theta^\vl(a^\vl,a^{\ch(\vl)}) -\sum_{\wl \in \ch(\vl)}(\max_{b \in \cA}\reww(b)-\reww(a^\wl)) \right\}\\
        & \quad + \ldots  + \sum_{\vl \in \trV_D}\max_{a^\vl,a^{\ch(\vl)}} \left\{\theta^\vl(a,a^{\ch(\vl)}) -\sum_{\wl \in \ch(\vl)}(\max_{b \in \cA}\reww(b)-\reww(a^\wl)) \right\} \\
        & = \sum_{\vl \in \trV_1}
        \max_{a^\vl}\theta^\vl(a^\vl) - \sum_{\vl \in \trV_1}
        \max_{a^\vl}\theta^\vl(a^\vl) \\
        & \quad + \sum_{\vl \in \trV_2}
        \max_{a^\vl,a^{\ch(\vl)}} \left\{\theta^\vl(a^\vl,a^{\ch(\vl)}) +\sum_{\wl \in \ch(\vl)}\theta^\wl(a^\wl) \right\} \\
        & \quad + \ldots  + \sum_{\vl \in \trV_D}\max_{a^\vl,a^{\ch(\vl)}} \left\{\theta^\vl(a,a^{\ch(\vl)}) -\sum_{\wl \in \ch(\vl)}(\max_{b \in \cA}\reww(b)-\reww(a^\wl)) \right\} \\
        & =\max_{a^\vl,a^{\ch(\vl)}} \left\{ \sum_{\vl \in \trV_1\cup \trV_2}\theta^\vl(a^\vl,a^{\ch(\vl)})\right\} +\ldots \\
        & \quad + \sum_{\vl \in \trV_D}\max_{a^\vl,a^{\ch(\vl)}} \left\{\theta^\vl(a,a^{\ch(\vl)}) -\sum_{\wl \in \ch(\vl)}(\max_{b \in \cA}\reww(b)-\reww(a^\wl)) \right\} \eqsp,
    \end{align*}
    where by convention $\ch(\vl)=\varnothing$ for any $\vl \in \trV_1$. Applying the same technique iteratively finally gives that
    \begin{align*}
         \sum_{\vl \in \trV} \max_{(a^\vl) \in \cA^{\Card(\trV)}}
        \rewv(a^\vl) =  \max_{(a^\vl) \in \cA^{\Card(\trV)}} \sum_{\vl \in \trV} 
        \theta(a^\vl,a^{\ch(\vl)}) \eqsp.
    \end{align*}
\end{proof}

Recall our main assumption \Cref{assumption:bounded_regret} that is fundamental for the convergence of the algorithms. A large part of the proof deals with this assumption and how it can be satisfied.

\begin{restatable}{lemma}{classificationsearch}\label{lemma:classification_search}
    Consider a player $\vl \in \trV$ and a child $\wl \in \ch(\vl)$ that satisfies \Cref{assumption:bounded_regret} with parameters $(\wait, \acst, \kappa, \zeta)$. Assume that $\vl$ runs the $\lambda$-th batch of binary search on player $\wl$'s arm $b \in \cA$: for any $t \in \iint{k_{b,\lambda}^\vl+1}{k_{b,\lambda}^\vl+\lceil T^\alpha \rceil}, (B^\wl_t, \icv_t(\wl)) = (b, \icvmid_b(\lambda))$. Let $\beta \in (0,1)$ such that $\beta<\alpha (1-\kappa)$ and recall the definition of $\Tnot_{b,\lambda}(w)$ from \eqref{equation:Tnot}. Conditionally on the event $\evgoodw(\iint{k_{b,\lambda}^\vl+1}{k_{b,\lambda}^\vl+\lceil T^\alpha\rceil}, \acst \Texp^\kappa)$, writing $\Texp = \lceil T^\alpha \rceil$, we have that
    \begin{itemize}
        \item If $\Tnot_{b,\lambda} < \Texp-\acst \Texp^{\kappa +\beta/\alpha}$, then $\icvs_b(\wl)<\icvmid_b(d) + 1/T^\beta$.
        \item If $\Tnot_{b,\lambda} > \acst \Texp^{\kappa +\beta/\alpha}$; then $\icvs_b(\wl)>\icvmid_b(\lambda) - 1/T^\beta$.
    \end{itemize}
    Consequently, with probability at least $1-2\,T^{-\alpha \zeta}$, if $\acst \Texp^{\kappa+\beta/\alpha} < \Tnot_{b,\lambda} < \Texp-\acst \Texp^{\kappa + \beta/\alpha}$, then $|\icvs_b(\wl)-\icvmid_b(\lambda)| \leq 1/T^\beta$.
\end{restatable}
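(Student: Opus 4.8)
The plan is to condition on the good event $\evgoodw$ and convert the child's action-regret bound into a statement about the deviation count $\Tnot_{b,\lambda}(\wl)$. First I would note that throughout the $\lambda$-th batch the child $\wl$ faces the \emph{same} contract $(b,\icvmid_b(\lambda))$, so \Cref{assumption:bounded_regret} applies with $s=k_{b,\lambda}^\vl$ and $t=\Texp$: the window lies beyond the waiting time $\wait$ and the offered transfer is a fixed function of the recommended arm, as the assumption requires, so $\evgoodw$ holds with probability at least $1-\Texp^{-\zeta}\ge 1-T^{-\alpha\zeta}$. The central reduction is that, since $\mu^\wl(A_l^\wl,B_l^{\ch(\wl)})\le \reww(A_l^\wl)$ by the definition of $\reww$ as a maximum over $\wl$'s own downstream recommendations (\eqref{equation:mu_utility_star}), the per-step summand appearing in $\evgoodw$ dominates the per-step regret of the \emph{reduced} $K$-armed problem with arm values $g(a)=\reww(a)+\1_b(a)\,\icvmid_b(\lambda)$. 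This lets me discard the grandchildren entirely and reason only with $\reww$ and the midpoint transfer.

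Next I would split on the position of $\icvmid_b(\lambda)$ relative to $\icvs_b(\wl)=\max_a\reww(a)-\reww(b)$. When $\icvmid_b(\lambda)\ge\icvs_b(\wl)$, arm $b$ is optimal in the reduced problem and every step with $A_l^\wl\ne b$ costs at least $\icvmid_b(\lambda)-\icvs_b(\wl)$; summing and invoking $\evgoodw$ gives $\Tnot_{b,\lambda}(\wl)\,(\icvmid_b(\lambda)-\icvs_b(\wl))\le \acst\Texp^\kappa$. Symmetrically, when $\icvmid_b(\lambda)\le\icvs_b(\wl)$, arm $b$ is suboptimal and every step with $A_l^\wl=b$ costs at least $\icvs_b(\wl)-\icvmid_b(\lambda)$, whence $(\Texp-\Tnot_{b,\lambda}(\wl))(\icvs_b(\wl)-\icvmid_b(\lambda))\le\acst\Texp^\kappa$.

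The two bullets then follow by contraposition, using $T^\beta=(T^\alpha)^{\beta/\alpha}\le\Texp^{\beta/\alpha}$. If $\icvs_b(\wl)\ge\icvmid_b(\lambda)+1/T^\beta$, the second inequality with $\icvs_b(\wl)-\icvmid_b(\lambda)\ge T^{-\beta}$ forces $\Texp-\Tnot_{b,\lambda}(\wl)\le\acst\Texp^\kappa T^\beta\le\acst\Texp^{\kappa+\beta/\alpha}$, i.e. $\Tnot_{b,\lambda}(\wl)\ge\Texp-\acst\Texp^{\kappa+\beta/\alpha}$, which is exactly the contrapositive of the first bullet; the second bullet is obtained identically from the first inequality. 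Intersecting the two implications on $\evgoodw$, the hypothesis $\acst\Texp^{\kappa+\beta/\alpha}<\Tnot_{b,\lambda}(\wl)<\Texp-\acst\Texp^{\kappa+\beta/\alpha}$ rules out both $\icvs_b(\wl)\ge\icvmid_b(\lambda)+1/T^\beta$ and $\icvs_b(\wl)\le\icvmid_b(\lambda)-1/T^\beta$, yielding $|\icvs_b(\wl)-\icvmid_b(\lambda)|\le 1/T^\beta$; the stated probability $1-2T^{-\alpha\zeta}$ is the (conservatively bounded) probability of $\evgoodw$. The main obstacle is the reduction step: one must verify that the child's possibly suboptimal downstream play only strengthens the lower bound (through $\mu^\wl\le\reww$) and never weakens it, and that the condition $\beta<\alpha(1-\kappa)$ is precisely what keeps the inconclusive interval $(\acst\Texp^{\kappa+\beta/\alpha},\,\Texp-\acst\Texp^{\kappa+\beta/\alpha})$ non-empty for large $T$, so the trichotomy driving $\expsub$ is well-posed.
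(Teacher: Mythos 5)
Your proposal is correct and follows essentially the same route as the paper's proof: condition on $\evgoodw$, split on whether $\icvmid_b(\lambda)$ lies above or below $\icvs_b(\wl)$ by at least $1/T^\beta$, convert the child's action-regret bound into a count of deviating (resp.\ accepting) steps via the per-step gap, and conclude by contraposition with $T^\beta \le \Texp^{\beta/\alpha}$. The only difference is presentational: you state explicitly the reduction $\mu^\wl(A_l^\wl, B_l^{\ch(\wl)}) \le \reww(A_l^\wl)$ that discards the grandchildren, which the paper uses implicitly inside its first chain of inequalities.
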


\begin{proof}[Proof of \Cref{lemma:classification_search}]
    The whole proof is done conditionally on the event $\evgoodw(\iint{k_{b,\lambda}^\vl+1}{k_{b,\lambda}^\vl+\lceil T^\alpha \rceil}, \acst \Texp^\kappa)$, which holds with probability at least $1-T^{-\alpha \zeta}$, since $\wl$ satisfies \Cref{assumption:bounded_regret} with parameters $(\wait, \acst, \kappa, \zeta)$. For the ease of notation, we define $\icv_b = \icvmid_b(\wl)$. First suppose that $\icv_b \geq \icvs_b(\wl)+1/T^\beta$. By definition of the optimal transfer, we have that
    \begin{align*}
        \1_b(b) \icv_b + \reww(b) &\geq \icvs_b(\wl) + \reww(b) + 1/T^\beta \\
        & = \max_{b' \in \cA}\{ \reww(b')+\1_{b}(b')\icv_b \}+1/T^\beta \\
        & \geq \mu^\vl(b', B^{\ch(\wl)}) + \1_{b}(b') \icv_b +1/T^\beta\eqsp,
    \end{align*}
    for any $b', B^{\ch(\wl)} \in \cA\times \cA^\mB$. Hence $b$ is the best arm $\wl$ could play during $\vl$'s payment exploration batch $\{k_{b,\lambda}^\vl+1, \ldots, k_{b,\lambda}^\vl + \Texp\}$. Since $\wl$ satisfies \Cref{assumption:bounded_regret} with parameters $(\wait, \acst, \kappa, \zeta)$, we have that
    \begin{align*}
        \acst \Texp^\kappa & \geq \Tnot_{b,\lambda} \, \cdot (\icv_b+\reww(b)-\max_{b' \in \cA} \reww(b')) \\
        & \geq \Tnot_{b,\lambda}  \, \cdot (\icv_b-\icvs_b(\wl)) \\
        & \geq  \Tnot_{b,\lambda}/T^\beta \eqsp,
    \end{align*} 
    and since $\Texp\geq T^\alpha$, we obtain: $\Tnot_{b,\lambda}\leq \acst \Texp^{\kappa+\beta/\alpha}$. The contrapositive gives us that if $\Tnot_{b,\lambda}> \acst \Texp^{\kappa+\beta/\alpha}$ during the sequence $\{k_{b,\lambda}^\vl+1, \ldots, k_{b,\lambda}^\vl+\Texp\}$, then with probability at least $1-1/T^{\alpha \zeta}$, we have $\icv_b<\icvs_b(\wl)+1/T^\beta$.
    
    The exact same computation starting with the case $\icv_b \leq \icvs_b(\wl)-1/T^\beta$ gives that if $\Tnot_{b,\lambda}^\vl<\Texp-\acst \Texp^{\kappa+\beta/\alpha}$, then with probability at least $1-T^{-\alpha \zeta}$, $\icv_b>\icvs_b(\wl)-1/T^\beta$.

    Finally, considering the case $\acst \Texp^{\kappa+\beta/\alpha}<\Tnot_{b,\lambda} < \Texp^{\kappa+\beta/\alpha}-\acst \Texp^{\kappa+\beta/\alpha}$ concludes the proof.
\end{proof}

\begin{restatable}{lemma}{boundedincentives}\label{lemma:bounded_incentives}
    Assume that we run \Cref{algorithm:search_subroutine} and consider some payment exploration batch $\lambda \in \lceil \log T^\beta \rceil$ run on arm $b \in \cA$. Then $0 \leq \licv_b(\lambda) \leq \icvmid_b(\lambda) \leq \hicv_b(\lambda)  \leq 1$.
\end{restatable}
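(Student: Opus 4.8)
The plan is to argue by induction on the batch index $\lambda$, carrying along the two endpoints $\licv_b(\lambda),\hicv_b(\lambda)$ together with their midpoint $\icvmid_b(\lambda)=(\licv_b(\lambda)+\hicv_b(\lambda))/2$. The base case $\lambda=0$ is immediate, since \Cref{algorithm:search_subroutine} initializes $\licv_b(0)=0$ and $\hicv_b(0)=1$, whence $0\leq 0\leq 1/2\leq 1\leq 1$. For the inductive step I would split according to the three branches of the algorithm. The \texttt{Return} branch leaves $\licv_b(\lambda),\hicv_b(\lambda)$ untouched, so the bound is inherited verbatim from the induction hypothesis; it therefore remains to treat the two update branches, which set either $\hicv_b(\lambda+1)=\icvmid_b(\lambda)+1/T^\beta$ (keeping $\licv_b$ fixed) or $\licv_b(\lambda+1)=\icvmid_b(\lambda)-1/T^\beta$ (keeping $\hicv_b$ fixed).

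The crucial auxiliary quantity is the interval width $g_\lambda \coloneqq \hicv_b(\lambda)-\licv_b(\lambda)$. A short computation shows that in \emph{both} update branches the width obeys the same recursion $g_{\lambda+1}=g_\lambda/2+1/T^\beta$: substituting $\icvmid_b(\lambda)=(\licv_b(\lambda)+\hicv_b(\lambda))/2$ into either update, the new endpoint discards half of the interval and reintroduces the slack $1/T^\beta$. The map $g\mapsto g/2+1/T^\beta$ is increasing with fixed point $2/T^\beta$, so starting from $g_0=1$ the sequence $(g_\lambda)$ decreases monotonically towards $2/T^\beta$ and never drops below it; formally, $g_\lambda\geq 2/T^\beta$ follows by induction from $g_0=1\geq 2/T^\beta$, which holds as soon as $T^\beta\geq 2$ (a mild consequence of the large-horizon regime). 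In particular $g_\lambda>0$, so the midpoint always lies between the two endpoints and the ordering $\licv_b(\lambda)\leq\icvmid_b(\lambda)\leq\hicv_b(\lambda)$ is automatic at every step.

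It then remains to close the endpoint bounds. In the upper-update branch, rewriting the update as $\hicv_b(\lambda+1)=\hicv_b(\lambda)-g_\lambda/2+1/T^\beta$ and invoking $g_\lambda\geq 2/T^\beta$ gives $-g_\lambda/2+1/T^\beta\leq 0$, hence $\hicv_b(\lambda+1)\leq\hicv_b(\lambda)\leq 1$ by the induction hypothesis, while $\licv_b(\lambda+1)=\licv_b(\lambda)\geq 0$ is unchanged. The lower-update branch is symmetric: $\licv_b(\lambda+1)=\licv_b(\lambda)+g_\lambda/2-1/T^\beta\geq\licv_b(\lambda)\geq 0$ and $\hicv_b(\lambda+1)=\hicv_b(\lambda)\leq 1$. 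This closes the induction and yields the claim for every batch index $\lambda\in\iint{0}{\lceil \log T^\beta\rceil}$.

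The only genuine subtlety — and the step I expect to be the main obstacle — is that the additive slack $1/T^\beta$ in each update could \emph{a priori} push $\hicv_b$ above $1$ or $\licv_b$ below $0$, so the bounds are not preserved by the naive halving alone. The resolution is precisely the width lower bound $g_\lambda\geq 2/T^\beta$, which guarantees that the halving shrinks the interval by at least as much as the slack it reintroduces, forcing each endpoint to move monotonically inward. Everything else is routine, and it is worth emphasizing that the argument is entirely deterministic: it does not rely on the good event $\evgoodw$ nor on \Cref{assumption:bounded_regret}.
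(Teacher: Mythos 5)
Your proof is correct, and it is in fact \emph{more} complete than the paper's own argument. The paper proves this lemma by the same induction over batches with the same case analysis on the two update branches, but at the key moment it simply asserts that $0 \leq \licv_b(\lambda+1) \leq \icvmid_b(\lambda+1) \leq \hicv_b(\lambda+1) \leq 1$ still holds after the update — without addressing exactly the subtlety you isolate, namely that the slack $\pm 1/T^\beta$ could a priori push $\hicv_b$ above $1$ or $\licv_b$ below $0$. Your width recursion $g_{\lambda+1} = g_\lambda/2 + 1/T^\beta$ together with the invariant $g_\lambda \geq 2/T^\beta$ is precisely the missing ingredient that makes the endpoint updates move monotonically inward, so your write-up patches a genuine gap in the paper's proof. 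Interestingly, the paper does derive this same arithmetico-geometric recursion, but only later (in the proof of \Cref{lemma:precision_incentives}) and only to \emph{upper}-bound the width as $1/2^\Lambda + 2/T^\beta$; it never exploits the corresponding lower bound $g_\lambda \geq 2/T^\beta$, which is what the boundedness claim actually requires. Your observation that the statement implicitly needs $T^\beta \geq 2$ (and is false otherwise, e.g. $\hicv_b(1) = 1/2 + 1/T^\beta > 1$) is also accurate and worth making explicit; it is harmless under the paper's large-horizon assumptions, but it is a real hypothesis that the paper's proof silently bypasses.
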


\begin{proof}[Proof of \Cref{lemma:bounded_incentives}]
    We prove the result by induction. Considering some action $b\in \cA$, the initialization of the proof holds thanks to the initialization of our algorithm: $\licv_a(0) = 0, \hicv_a(0)=1$ and therefore $\icvmid_a(0)\in [\licv_a(0), \hicv_a(0)]$. We now consider that $\lambda$ payment exploration batches have been run on $b$. Suppose that we run an additional batch on action $a$. We have
    \begin{equation}
    \label{lemBIbound}
    \icvmid_b(\lambda+1) = \frac{\hicv_b(\lambda)+\licv_b(\lambda)}{2} \text{ which gives } \icvmid_b(\lambda+1) \in [\licv_b(\lambda), \hicv_b(\lambda)] \eqsp.
    \end{equation}
    After this iteration of binary search, we either have $\hicv_b(\lambda+1) = \icvmid_b(\lambda+1)+1/T^\beta$ and $\licv_b(\lambda+1)=\licv_b(\lambda)$ or $\hicv_b(\lambda+1)=\icvmid_b(\lambda)$ and $\licv_b(\lambda+1)=\icvmid_b(\lambda+1)-1/T^\beta$. Therefore, we still have $0\leq \licv_b(\lambda+1)\leq \icvmid_b(\lambda+1) \leq \hicv_b(\lambda+1)\leq 1$, hence the result for any $\lambda \in \lceil \log T^\beta \rceil$ by induction.
\end{proof}

\begin{restatable}{lemma}{intersectionboundedregret}\label{lemma:intersection_bounded_regret}
    Consider a player $\vl \in \trV$ who runs $\Lambda$ batches of binary search on action $b \in \cA$ for all her children $\ch(\vl)$. Assume that any child $\wl \in \ch(\vl)$ satisfies \Cref{assumption:bounded_regret} with parameters $(\wait_\wl, \acst_\wl,\kappa_\wl, \zeta_\wl)$ such that for any $\wl \in \ch(\vl)$, we have $\wait_\wl \leq \wait, \acst_\wl \leq \wait,\kappa_\wl \leq \wait, \zeta_\wl \geq \zeta$. Then, we have that
    \begin{align*}
        \P\left(\bigcap_{\wl \in \ch(\vl)} \bigcap_{\lambda \in [\Lambda]} \evgoodw(\iint{k_{b,\lambda}^\vl+1}{k_{b,\lambda}^\vl+\lceil T^\alpha \rceil},\acst, \kappa) \right) \geq 1-\mB \, \Lambda /T^{\alpha \zeta} \eqsp,
    \end{align*}
    which can be rewritten as: the following event
    \begin{align*}
    & \text{For any } \wl \in \ch(\vl), \lambda \in [\Lambda], b \in \cA, \\
    &\quad \sum_{l=k_{b,\lambda}^\vl+1}^{k_{b,\lambda}^\vl+\lceil T^\alpha \rceil} \left\{\max_{b \in \cA} \{\rewv_b + \1_{B_t^\vl}(b)\, \icv_t(\vl)\} - 
    \left( \mu^\vl(A_t^\vl, B_t^{\ch(\vl)}) +\1_{B_t^\vl}(A_t^\vl)\, \icv_t(\vl) \right)\right\} \leq \acst \lceil T^\alpha \rceil^\kappa ,
\end{align*}
holds with probability at least $1-\mB \, \Lambda /T^{\alpha \zeta}$.
\end{restatable}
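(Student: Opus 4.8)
The plan is to treat this as essentially a union bound over children and batches, the only real content being to check that the hypotheses of \Cref{assumption:bounded_regret} are met for each individual window. First I would fix a single child $\wl \in \ch(\vl)$ and a single batch index $\lambda \in [\Lambda]$ and examine the window $\iint{k_{b,\lambda}^\vl+1}{k_{b,\lambda}^\vl+\lceil T^\alpha \rceil}$. Throughout this batch, by construction of \Cref{algorithm:search_subroutine}, player $\vl$ proposes the \emph{constant} contract $(b,\icvmid_b(\lambda))$ to $\wl$, so from $\wl$'s viewpoint the transfer received from its parent satisfies $(B_l^\wl,\icv_l(\wl))=(b,\icvmid_b(\lambda))$ at every step $l$ of the window; this is exactly the fixed, recommendation-dependent form of transfer required by \Cref{assumption:bounded_regret}. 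Moreover, by \eqref{equation:psi} and \eqref{equation:k_b_d} the starting index obeys $k_{b,\lambda}^\vl \geq \wait \geq \wait_\wl$, so the window lies past $\wl$'s $\waittext$ and sits inside $\iint{1}{T}$, and the assumption indeed applies.

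Given this, the second step is to invoke \Cref{assumption:bounded_regret} for $\wl$ on this window, whose length is $t = \lceil T^\alpha \rceil \geq T^\alpha$. Recall that the good event $\evgoodw(\cdot,\acst,\kappa)$ is precisely the event $\actionregret^\wl \leq \acst\, t^{\kappa}$. Since $\wl$ satisfies the assumption with parameters $(\wait_\wl,\acst_\wl,\kappa_\wl,\zeta_\wl)$ dominated by the common $(\wait,\acst,\kappa,\zeta)$ in the sense of \eqref{equation:domination_parameters}, the bound $\actionregret^\wl \leq \acst_\wl\, t^{\kappa_\wl}$ entails $\actionregret^\wl \leq \acst\, t^{\kappa}$ (using $\acst_\wl \leq \acst$, $\kappa_\wl \leq \kappa$ and $t \geq 1$); this monotonicity is exactly what lets me replace each child's own parameters by the dominating ones inside $\evgoodw(\cdot,\acst,\kappa)$. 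The failure probability of this single event is then at most $t^{-\zeta_\wl} \leq (T^\alpha)^{-\zeta_\wl} = T^{-\alpha \zeta_\wl} \leq T^{-\alpha \zeta}$, where the last step uses $\zeta_\wl \geq \zeta$ and $T \geq 1$. Hence each event $\evgoodw(\iint{k_{b,\lambda}^\vl+1}{k_{b,\lambda}^\vl+\lceil T^\alpha \rceil},\acst,\kappa)$ holds with probability at least $1 - T^{-\alpha\zeta}$.

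Finally I would take a union bound over the $\Card(\ch(\vl)) = \mB$ children and the $\Lambda$ batches, noting that no independence between the events is needed, to conclude that the complement of the intersection has probability at most $\mB\,\Lambda\,T^{-\alpha\zeta}$, which is the claimed bound; the reformulation in the statement is merely $\evgoodw$ spelled out, so nothing further is required. I do not anticipate a genuine obstacle, since the argument is a union bound. The only points demanding care are bookkeeping ones: confirming that the proposed contract is held fixed throughout each batch so that the transfer hypothesis of \Cref{assumption:bounded_regret} is satisfied, and observing that it is the per-batch window length $\lceil T^\alpha\rceil$, rather than the full horizon $T$, that drives the $T^{-\alpha\zeta}$ rate.
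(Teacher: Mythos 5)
Your proposal is correct and follows essentially the same route as the paper's proof: apply \Cref{assumption:bounded_regret} to each child on each batch window (with failure probability at most $T^{-\alpha\zeta}$ since the window has length $\lceil T^\alpha\rceil$), then union bound over the $\mB$ children and $\Lambda$ batches. Your write-up is in fact somewhat more careful than the paper's, since you explicitly verify that the contract is held fixed within each batch, that the window starts past $\wait_\wl$, and that the dominating parameters $(\acst,\kappa,\zeta)$ can replace each child's own parameters by monotonicity — steps the paper's proof leaves implicit.
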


\begin{proof}[Proof of \Cref{lemma:intersection_bounded_regret}]
We do the proof considering the opposite event and applying an union bound. We have that
\begin{align*}
& \P\left(\bigcup_{\wl \in \ch(\vl)} \bigcup_{\lambda \in [\Lambda]} \left\{ \sum_{l=k_{b,\lambda}^\vl+1}^{k_{b,\lambda}^\vl+\lceil T^\alpha \rceil} \max_{b \in \cA} \{\rewv_b + \1_{B_l^\vl}(b)\, \icv_l(\vl)\} \right. \right. \notag \\
& \quad \quad \left. \left. - \sum_{l=k_{b,\lambda}^\vl+1}^{k_{b,\lambda}^\vl+\lceil T^\alpha \rceil}\left\{ \mu^\vl(A_l^\vl, B_l^{\ch(\vl)}) +\1_{B_l^\vl}(A_l^\vl)\, \icv_l(\vl)\right\} \right\} > \acst \lceil T^\alpha \rceil^\kappa \right) \\
& \quad \leq \sum_{\wl \in \ch(\vl)} \sum_{\lambda \in [\Lambda]}\P\left(\left\{\sum_{l=k_{b,\lambda}^\vl+1}^{k_{b,\lambda}^\vl+\lceil T^\alpha \rceil} \max_{b \in \cA} \{\rewv_b + \1_{B_l^\vl}(b)\, \icv_l(\vl)\} \right. \right. \notag \\
& \quad \quad \left. \left. - \sum_{l=k_{b,\lambda}^\vl+1}^{k_{b,\lambda}^\vl+\lceil T^\alpha \rceil}\left\{ \mu^\vl(A_l^\vl, B_l^{\ch(\vl)}) +\1_{B_l^\vl}(A_l^\vl)\, \icv_l(\vl)\right\} > \acst \lceil T^\alpha \rceil^\kappa \right\} \right) \\
    & \quad \leq \sum_{\wl \in \ch(\vl)} \sum_{\lambda \in [\Lambda]} 1/\lceil T^\alpha \rceil^\zeta \\
    & \quad \leq \mB \, \Lambda /T^{\alpha \zeta} \eqsp,
\end{align*}
where the first inequality holds with two union bounds, the second inequality holds because any $\wl \in \ch(\vl)$ satisfy \Cref{assumption:bounded_regret} and the last inequality holds because $\Card(\ch(\vl)) = \mB$. Considering the opposite event, we obtain that for any $\wl \in \ch(\vl)$, we have that the event
\begin{equation*}
    \left\{ \bigcap_{\wl \in \ch(\vl)} \bigcap_{\lambda \in [\Lambda]} \evgoodw(\iint{k_{b,\lambda}^\vl+1}{k_{b,\lambda}^\vl+\lceil T^\alpha \rceil},\acst, \kappa) \right\}
\end{equation*}
holds with probability at least $1-\mB \Lambda/T^{\alpha \zeta}$. Finally, the definition of $\evgoodw$ gives the first line of the lemma.
\end{proof}

\begin{restatable}{lemma}{precisionincentives}\label{lemma:precision_incentives}
Consider a player $\vl \in \trV$ who runs $\Lambda$ batches of binary search on action $b \in \cA$ for all his children $\ch(\vl)$. Assume that any child $\wl \in \ch(\vl)$ satisfies \Cref{assumption:bounded_regret} with parameters $(\wait_\wl, \acst_\wl,\kappa_\wl, \zeta_\wl)$ such that for any $\wl \in \ch(\vl)$, we have $\wait_\wl \leq \wait, \acst_\wl \leq \wait,\kappa_\wl \leq \wait, \zeta_\wl \geq \zeta$. Consider some action $b \in \cA$ such that $\Lambda$ payment exploration batches of length $\Texp$ are run on arm $b$ - $\Lambda \in \iint{1}{\lceil \log T^\beta \rceil}$ - for any player $\wl \in \ch(\vl)$. Then
\begin{align*}
    \bigcap_{\lambda \in [\Lambda]} \evgoodw(\iint{k_{b,\lambda}^\vl+1}{k_{b,\lambda}^\vl+\Texp}, \acst, \kappa) \subseteq \{\icvs_b(\wl) \in [\licv_b(\Lambda), \hicv_b(\Lambda)]\} \eqsp,
\end{align*}
and the probability of these events is at least $1- \lceil \log T^\beta \rceil/T^{\alpha \zeta}$. We also have that for any $\wl \in \ch(\vl)$
\begin{align*}
    \{ |\hicv_b(\Lambda) - \licv_b(\Lambda) | \leq 1/2^{\Lambda} +2/T^\beta \} \text{ holds almost surely} \eqsp.
\end{align*}
\end{restatable}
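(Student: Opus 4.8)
The plan is to treat the three assertions in turn, using \Cref{lemma:classification_search} as the single-batch engine and \Cref{lemma:bounded_incentives} for the deterministic bookkeeping of the endpoints $\licv_b(\lambda), \hicv_b(\lambda)$.

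First I would establish the containment $\bigcap_{\lambda \in [\Lambda]} \evgoodw(\iint{k_{b,\lambda}^\vl+1}{k_{b,\lambda}^\vl+\Texp}, \acst, \kappa) \subseteq \{\icvs_b(\wl) \in [\licv_b(\Lambda), \hicv_b(\Lambda)]\}$ by induction on the batch index $\lambda$. For the base case, \Cref{assumption:bounded_rewards} together with the closed form $\icvs_b(\wl) = \max_{a \in \cA} \reww(a) - \reww(b)$ from \Cref{lemma:optimal_incentives} forces $\icvs_b(\wl) \in [0,1] = [\licv_b(0), \hicv_b(0)]$ deterministically. For the induction step I condition on the batch-$\lambda$ good event and assume $\icvs_b(\wl) \in [\licv_b(\lambda), \hicv_b(\lambda)]$; then I match each branch of \Cref{algorithm:search_subroutine} to the corresponding conclusion of \Cref{lemma:classification_search}. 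In the branch where $\Tnot_{b,\lambda} \leq \Texp - \acst \Texp^{\kappa+\beta/\alpha}$, that lemma gives $\icvs_b(\wl) < \icvmid_b(\lambda) + 1/T^\beta = \hicv_b(\lambda+1)$ while $\licv_b$ is left unchanged; symmetrically, in the branch $\Tnot_{b,\lambda} > \Texp - \acst \Texp^{\kappa+\beta/\alpha}$ (hence a fortiori $\Tnot_{b,\lambda} > \acst \Texp^{\kappa+\beta/\alpha}$) it gives $\icvs_b(\wl) > \icvmid_b(\lambda) - 1/T^\beta = \licv_b(\lambda+1)$ while $\hicv_b$ is unchanged; in the early-return (sweet-spot) branch both endpoints are frozen and containment is inherited verbatim. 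In all cases $\icvs_b(\wl)$ stays trapped in $[\licv_b(\lambda+1), \hicv_b(\lambda+1)]$, closing the induction.

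Next, for the probability I would invoke that $\wl$ satisfies \Cref{assumption:bounded_regret} with $\zeta_\wl \geq \zeta$, so each event $\evgoodw$ on a window of length $\Texp = \lceil T^\alpha \rceil \geq T^\alpha$ fails with probability at most $\Texp^{-\zeta} \leq T^{-\alpha \zeta}$. A union bound over the $\Lambda \leq \lceil \log T^\beta \rceil$ batches --- exactly the argument of \Cref{lemma:intersection_bounded_regret} specialized to the single child $\wl$ --- then yields $\P(\bigcap_{\lambda} \evgoodw) \geq 1 - \Lambda\, T^{-\alpha \zeta} \geq 1 - \lceil \log T^\beta \rceil\, T^{-\alpha \zeta}$.

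Finally, the width bound is purely deterministic and follows from \Cref{lemma:bounded_incentives}: whichever branch updates the interval, the midpoint rule gives $\hicv_b(\lambda+1) - \licv_b(\lambda+1) = \tfrac12\bigl(\hicv_b(\lambda) - \licv_b(\lambda)\bigr) + 1/T^\beta$. Setting $w_\lambda = \hicv_b(\lambda) - \licv_b(\lambda)$ with $w_0 = 1$ and unrolling the affine recursion $w_{\lambda+1} = w_\lambda/2 + 1/T^\beta$ gives $w_\Lambda = 2^{-\Lambda} + (1/T^\beta)\sum_{j=0}^{\Lambda-1} 2^{-j} \leq 2^{-\Lambda} + 2/T^\beta$, as claimed. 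The only genuinely delicate point is the induction step: one must line up the algorithm's threshold tests on $\Tnot_{b,\lambda}$ with the two (strict) inequalities of \Cref{lemma:classification_search}, being careful about the boundary $\leq$ versus $<$ and about the frozen-endpoint behavior of the sweet-spot branch, so that containment is genuinely preserved in every case. The probability union bound and the width recursion are then routine.
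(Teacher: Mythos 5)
Your proposal is correct and follows essentially the same route as the paper's proof: induction over batches with \Cref{lemma:classification_search} as the single-batch engine, a union bound over the $\Lambda \leq \lceil \log T^\beta \rceil$ good events (which is exactly \Cref{lemma:intersection_bounded_regret} specialized to one child), and the arithmetico-geometric recursion $w_{\lambda+1} = w_\lambda/2 + 1/T^\beta$ for the width bound. The only difference is cosmetic: you explicitly treat the early-return branch of \Cref{algorithm:search_subroutine} and flag the strict-versus-non-strict threshold boundary, details the paper's proof glosses over but which do not change the argument.
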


\begin{proof}[Proof of \Cref{lemma:precision_incentives}]
Suppose that the conditions of the lemma hold and consider a node $\wl \in \ch(\vl)$ as well as some arm $b \in \cA$.

The proof is done by induction on the number of binary search batches $\Lambda$ that are run on $b$ and player $\wl$, such that $\bigcap_{\lambda \in [\Lambda]} \evgoodw(\iint{k_{b,\lambda}^\vl+1}{k_{b,\lambda}^\vl+\Texp}, \acst, \kappa) \subseteq \{\icvs_b(\wl) \in [\licv_b(\Lambda), \hicv_b(\Lambda)] \}$.

The initialisation holds since $\licv_a(0) = 0$, $\hicv_a(0) = 1$ and $\icvs_b(\wl) = \max_{b' \in \cA}\reww_{b'}-\reww_b \in [0,1]$ with probability $1$ - since $\max_{b' \in \cA}\reww_{b'} \in [0,1]$ and $\reww_b \in [0,1]$.

Suppose that the property is true for some integer $\Lambda < \lceil \log T^\beta \rceil$ and that we have run one more binary search on arm $b$. We have
\begin{equation*}
\icvmid_b(\Lambda+1) = \frac{\hicv_b(\Lambda) + \licv_b(\Lambda)}{2} \eqsp,
\end{equation*}
$\icvmid_b(\Lambda+1)$ being the transfer offered to $\wl$ if he chooses action $b$ during the $\Lambda$-th batch $\iint{k_{b,\lambda}^\vl+1}{k_{b,\lambda}^\vl+\Texp}$, for the contract $(b,\icvmid_b(\Lambda+1))$. After this batch, if $\Tnot_{b,\Lambda} <\Texp- \acst \Texp^{\kappa+\beta/\alpha}$, \Cref{algorithm:search_subroutine} updates $\hicv_b(\Lambda+1) = \icvmid_b(\Lambda+1) +1/T^\beta$, $\licv_b(\Lambda+1) = \licv_b(\Lambda)$ and \Cref{lemma:classification_search} ensures that $\licv_b(\Lambda+1) < \icvs_b< \hicv_b(\Lambda+1)$ given $\evgoodw(\iint{k_{b,\lambda}^\vl+1}{k_{b,\lambda}^\vl+1\Texp}, \acst, \kappa)$. Thus the induction holds.

Otherwise, if $\Tnot_{a,\Lambda} > \acst \Texp^{\kappa+\beta/\alpha}$, \Cref{algorithm:search_subroutine} updates $\licv_b(\Lambda+1) = \icvmid_b(\Lambda+1)-1/T^\beta$, $\hicv_b(\Lambda+1) = \hicv_b(\Lambda)$ and \Cref{lemma:classification_search} ensures that $\licv_b(\Lambda+1) < \icvs_b< \hicv_b(\Lambda+1)$ given $\evgoodw(\iint{k_{b,\lambda}^\vl+1}{k_{b,\lambda}^\vl+1\Texp}, \acst, \kappa)$. The induction still holds.

Therefore, for any number $\Lambda$ of payment exploration batches run on arm $b$, we have that $\icvs_b(\wl) \in [\licv_b(\Lambda), \hicv_b(\Lambda)]$ with probability at least $1-\Lambda/T^{\alpha \zeta}$. \Cref{lemma:intersection_bounded_regret} guarantees that $\bigcap_{\lambda \in [\Lambda]} \evgoodw(\iint{k_{b,\lambda}^\vl+1}{k_{b,\lambda}^\vl+\Texp}, \acst, \kappa)$ holds with probability at least $1-\Lambda/T^{\alpha \zeta}$.

For the second part of the proof, still considering the child $\wl \in \ch(\vl)$, we define the sequence $u_\Lambda^\wl= \hicv_b(\Lambda) - \licv_b(\Lambda)\geq 1$ as the length of the interval containing $\icvs_b(\wl)$ with probability at least $1-\Lambda/T^{\alpha \zeta}$. We have $u_0^\wl = 1$. Suppose that after $\Lambda$ iterations of binary search batches, the next batch of binary search $\iint{k_{b,\Lambda+1}+1}{k_{b,\Lambda+1}+\Texp}$ outputs $\Tnot_{b,\Lambda+1} < \Texp-\acst \Texp^{\kappa+\beta/\alpha}$. Then, the update of \Cref{algorithm:search_subroutine} gives
\begin{align*}
    u_{\Lambda+1}^\wl 
    = \icvmid_b(\Lambda+1)+1/T^\beta - \licv_b(D) = \frac{\hicv_b(\Lambda)+\licv_b(\Lambda)}{2}-\licv_b(\Lambda) +1/T^\beta  
    =u_{\Lambda}^\wl/2+1/T^\beta \eqsp.
\end{align*}
On the other hand, if $\Tnot_{b,\Lambda+1} > \acst \Texp^{\kappa+\beta/\alpha}$, the update gives
\begin{align*}
    u_{\Lambda+1}^\wl = \hicv_b(\Lambda) - (\icvmid_b(\Lambda+1)-1/T^\beta ) = \hicv_b(\Lambda) - \frac{\hicv_b(\Lambda)+\licv_b(\Lambda)}{2} +1/T^\beta =u_{\Lambda}^\wl/2+1/T^\beta \eqsp.
\end{align*}

Thus, $(u_{\Lambda}^\wl)_{\Lambda \geq 1}$ is an arithmetico-geometric sequence defined by $u_{\Lambda+1}^\wl = u_{\Lambda}^\wl/2 +1/T^\beta$ with an initial term $u_0^\wl=1$. Writing $r= 1/T^\beta/(1-1/2) = 2/T^\beta$, we obtain that
\begin{align*}
    |\hicv_b(\Lambda) - \licv_b(\Lambda)| = u_{\Lambda}^\wl = 1/2^{\Lambda}(1-r)+r= 1/2^{\Lambda}(1-2/T^\beta)+2/T^\beta \leq 1/2^{\Lambda}+2/T^\beta \eqsp,
\end{align*}
for any $\Lambda \in \iint{1}{\lceil \log T^\beta \rceil}$, hence the result.
\end{proof}

\begin{restatable}{proposition}{estimatedincentivesgood}\label{proposition:estimated_incentives_good}
Consider a player $\vl \in \trV$ such that any children $\wl \in \ch(\vl)$ satisfies \Cref{assumption:bounded_regret} with parameters $(\wait_\wl, \acst_\wl, \kappa_\wl, \zeta_\wl)$ and $\wait_\wl \leq \wait, \acst_\wl \leq \acst, \kappa_\wl \leq \kappa, \zeta_\wl \geq \zeta$. Suppose that \Cref{assumption:bounded_rewards} holds and that $\vl$ runs \Cref{algorithm:search_subroutine} with parameters $\wait, \acst, \kappa, \alpha, \beta, \extra$ on any action $b \in \ch(\vl)$ for any player $\wl \in \ch(\vl)$. In that case, we have that the event
\begin{align*}
    & \text{for any }\wl \in \ch(\vl), b \in \cA, \\
    & \icvest_b(\wl)-4/T^\beta -\acst \, \mB\, T^{-\extra} \leq \icvs_b(\wl) \leq \icvest_b(\wl) \eqsp,
\end{align*}
holds with probability at least $1-K \mB \lceil \log T^\beta \rceil /T^{\alpha \zeta}$.
\end{restatable}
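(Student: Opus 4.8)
The plan is to deduce the proposition directly from \Cref{lemma:precision_incentives}, whose two conclusions already do the analytic work; what remains is purely arithmetic bookkeeping plus a union bound. Throughout I would abbreviate $\Lambda = \lceil \log T^\beta \rceil$, the number of binary-search batches run per arm, so that by \eqref{equation:estimated_incentives_definition} the estimate reads $\icvest_b(\wl) = \hicv_b(\Lambda) + 1/T^\beta + \acst\,\mB\,T^{-\extra}$. First I would fix a single child $\wl \in \ch(\vl)$ and a single arm $b \in \cA$ and argue deterministically on the containment event $\{\icvs_b(\wl) \in [\licv_b(\Lambda), \hicv_b(\Lambda)]\}$. \Cref{lemma:precision_incentives}, applied with the dominating parameters $(\wait, \acst, \kappa, \zeta)$ (legitimate since $\wait_\wl \leq \wait$, $\acst_\wl \leq \acst$, $\kappa_\wl \leq \kappa$, $\zeta_\wl \geq \zeta$ by hypothesis), guarantees both that this containment holds with probability at least $1 - \Lambda/T^{\alpha\zeta}$ and that, almost surely, $\hicv_b(\Lambda) - \licv_b(\Lambda) \leq 1/2^\Lambda + 2/T^\beta$. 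Because $\Lambda = \lceil \log T^\beta\rceil$ forces $2^\Lambda \geq T^\beta$ (reading $\log$ in base $2$), this width is at most $3/T^\beta$; this single estimate is the crux of the whole argument.

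Given these facts the two target inequalities are immediate on the containment event. For the upper bound, containment yields $\icvs_b(\wl) \leq \hicv_b(\Lambda)$, and adding the nonnegative quantities $1/T^\beta + \acst\,\mB\,T^{-\extra}$ gives $\icvs_b(\wl) \leq \icvest_b(\wl)$. For the lower bound I would rewrite the left-hand quantity as $\icvest_b(\wl) - 4/T^\beta - \acst\,\mB\,T^{-\extra} = \hicv_b(\Lambda) - 3/T^\beta$, and then chain the width estimate $\hicv_b(\Lambda) - 3/T^\beta \leq \licv_b(\Lambda)$ with the containment $\licv_b(\Lambda) \leq \icvs_b(\wl)$. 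Thus both bounds hold simultaneously on the single containment event, and one sees transparently that the $+1/T^\beta$ safety margin baked into $\icvest_b(\wl)$ together with the $3/T^\beta$ interval width combine to produce exactly the $4/T^\beta$ slack appearing in the statement.

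It only remains to upgrade from the fixed pair $(\wl, b)$ to all pairs and to collect the probability. Taking the complement of the containment event for each of the $K$ arms and each of the $\mB$ children and applying a union bound, the total failure probability is at most $K\,\mB\,\Lambda/T^{\alpha\zeta} = K\,\mB\,\lceil \log T^\beta\rceil/T^{\alpha\zeta}$, which is precisely the claimed bound. I do not anticipate a genuine obstacle here: the hard analytic content, namely the correctness of the batched binary-search classification and the arithmetico-geometric contraction of the confidence interval, is already established in \Cref{lemma:classification_search} and \Cref{lemma:precision_incentives}. The only points demanding care are cosmetic, namely verifying the constant arithmetic that turns the $3/T^\beta$ width into the stated $4/T^\beta$, and fixing the logarithm base so that $1/2^\Lambda \leq 1/T^\beta$.
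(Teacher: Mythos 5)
Your proposal is correct and follows essentially the same route as the paper's own proof: invoke \Cref{lemma:precision_incentives} (with the dominating parameters) to get, for each pair $(\wl,b)$, containment $\icvs_b(\wl)\in[\licv_b(\Lambda),\hicv_b(\Lambda)]$ with probability at least $1-\lceil \log T^\beta\rceil/T^{\alpha\zeta}$ together with the deterministic width bound $1/2^{\Lambda}+2/T^\beta\leq 3/T^\beta$, then do the $3/T^\beta$-to-$4/T^\beta$ arithmetic against the definition \eqref{equation:estimated_incentives_definition} and finish with a union bound over the $K\mB$ pairs. The only cosmetic difference is that the paper organizes the probability accounting through the complementary event from the start, whereas you union-bound at the end; the content is identical.
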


\begin{proof}[Proof of \Cref{proposition:estimated_incentives_good}]
We start the proof by considering the opposite event and applying an union bound, following the definition
$\icvest_b(\wl) = \hicv_b(\lceil \log T^\beta \rceil)+1/T^\beta + \acst \, \mB \, T^{-\extra} $
\begin{align*}
& \P\left(\text{for any }\wl \in \ch(\vl), b \in \cA, \icvest_b(\wl)-4/T^\beta -\acst \, \mB \, T^{-\extra} \leq \icvs_b(\wl) \leq \icvest_b(\wl) \right) \\
& \quad \geq 1- \sum_{\wl \in \ch(\vl)} \sum_{b \in \cA} \P\left(\{\icvest_b(\wl)-4/T^\beta -\acst \, \mB \, T^{-\extra} \leq \icvs_b(\wl) \leq \icvest_b(\wl) \}^\mrcc \right) \\
& \quad = 1- \sum_{\wl \in \ch(\vl)} \sum_{b \in \cA} \P\left(\{\hicv_b-3/T^\beta\leq \icvs_b(\wl)\leq \hicv_b +1/T^\beta + \acst \, \mB \, T^{-\extra} \}^\rmcc \right) \eqsp,
\end{align*}
by definition of $\icvest_b(\wl)$ from \eqref{equation:estimated_incentives_definition}. If $\bigcap_{\lambda \in \lceil \log T^\beta \rceil} \evgoodw(\iint{k_{b,\lambda}^\vl+1}{k_{b,\lambda}^\vl+\Texp}, \acst, \kappa)$ holds, then \Cref{lemma:precision_incentives} gives that $\icvs_b(\wl) \in [\licv_b(\lceil \log T^\beta\rceil), \hicv_b(\lceil \log T^\beta\rceil)]$, and the second part of the lemma gives that
\begin{equation*}
|\hicv_b(\lceil \log T^\beta\rceil)-\licv_b(\lceil \log T^\beta\rceil)| \leq 1/2^{\lceil \log T^\beta \rceil} +2/T^\beta \leq 1/T^\beta + 2/T^\beta = 3/T^\beta \eqsp.
\end{equation*}
In that case, we can write
\begin{align*}
& \P\left(\text{for any }\wl \in \ch(\vl), b \in \cA, \icvest_b(\wl)-4/T^\beta -\acst \, \mB \, T^{-\extra}\leq \icvs_b(\wl) \leq \icvest_b(\wl) \right) \\
& \quad \geq 1 - \sum_{\wl \in \ch(\vl)} \sum_{b \in \cA} \P\left(\{\licv_b(\lceil \log T^\beta\rceil) \leq \icvs_b(\wl)\leq \hicv_b(\lceil \log T^\beta\rceil)\}^\rmcc \right) \eqsp.
\end{align*}
However, \Cref{lemma:intersection_bounded_regret} ensures that $\bigcap_{\lambda \in \lceil \log T^\beta \rceil} \evgoodw(\iint{k_{b,\lambda}^\vl+1}{k_{b,\lambda}^\vl+\Texp}, \acst, \kappa)$ happens with probability at least $1- \lceil \log T^\beta \rceil/T^{\alpha \zeta}$. Therefore,
\begin{align*}
& \P\left(\text{for any }\wl \in \ch(\vl), b \in \cA, \icvest_b(\wl)-4/T^\beta -\acst \, \mB \, T^{-\extra} \leq \icvs_b(\wl) \leq \icvest_b(\wl) \right) \\
& \quad \geq 1 - \mB \, K \, \lceil \log T^\beta \rceil /T^{\alpha \zeta} \eqsp.
\end{align*}
\end{proof}

\begin{proof}[Proof of \Cref{lemma:decomposition_regret}]
We first write our regret on the window $\iint{s+1}{s+t}$ depending on whether all the children accepted the contract or not, following
\begin{align*}
    \nonumber \regret^\vl(s,t) & = \sum_{l=s}^{t} \max_{a, b^{\ch(\vl)}} \left\{\mu^\vl(a, b^{\ch(\vl)}) + \1_{B_l^\vl}(a) \, \icv_l(\vl) \right\} \\
    & \quad - \sum_{l=s}^t \left\{ \theta^\pl(A_l^\vl, A_l^{\ch(\vl)}) +\1_{B_l^\vl}(A_l^\vl)\, \icv_l(\vl) - \sum_{\wl \in \up(\pl)} \1_{B_l^\wl}(A^\wl_l)\icv_l(\wl)\right\} \\
    & = \sum_{l=s}^{t} \max_{a, b^{\ch(\vl)}} \left\{\mu^\vl(a, b^{\ch(\vl)}) + \1_{B_l^\vl}(a) \, \icv_l(\vl) \right\} \\
    & \quad - \sum_{l=s}^t \left\{ \theta^\pl(A_l^\vl, B_l^{\ch(\vl)}) +\1_{B_l^\vl}(A_l^\vl)\, \icv_l(\vl) - \sum_{\wl \in \up(\pl)} \icvs_{B_l^\wl}(\wl) \right\} \\
    & \quad + \sum_{l=s}^t \sum_{\wl \in \ch(\vl)} (\1_{B_l^\wl}(A^\wl_l)\icv_l(\wl) -\icvs_{B_l^\wl}(\wl)) + \sum_{l=s}^t (\theta^\vl(A_l^\vl, B_l^{\ch(\vl)}) - \theta^\vl(A_l^\vl, A_l^{\ch(\vl)})) \\
    & \leq  \sum_{l=s}^{t} \left\{ \max_{a, b^{\ch(\vl)}} \{ \mu^\vl(a, b^{\ch(\vl)}) + \1_{B_l^\vl}(a) \, \icv_l(\vl)\} - (\rewv(A_l^\vl, B_l^{\ch(\vl)}) +\1_{B_l^\vl}(A_l^\vl)\, \icv_l(\vl))\right\} \\
    & \quad + \sum_{l=s}^t \sum_{\wl \in \ch(\vl)} (\icv_l(\wl) -\icvs_{B_l^\wl}(\wl))_+ + \sum_{l=s}^t (\theta^\vl(A_l^\vl, B_l^{\ch(\vl)}) - \theta^\vl(A_l^\vl, A_l^{\ch(\vl)}))_+ \\
    & \leq \actionregret^\vl(s,t) +  \paymentregret^\vl(s,t) + \deviationregret^\vl(s,t) \eqsp,
\end{align*}
hence the result.
\end{proof}

\begin{lemma}\label{lemma:bound_regret_decomposition}
Considering the setup of \Cref{lemma:decomposition_regret}, we have that
\begin{align*}
    & \paymentregret^\vl(s,t)\leq \mB \, (t-s+1)\max_{\underset{l \in \iint{s}{t}}{\wl \in \ch(\vl)}} |\icv_{l}(\wl)-\icvs_{B_l^\wl}(\wl)| \eqsp, \\
    & \quad \text{ and } \;\; \deviationregret^\vl(s,t) \leq \Card \left\{l \in \{s,\cdots,t\} \colon A_l^{\ch(\vl)}\ne B_l^{\ch(\vl)}\right\} \eqsp.
\end{align*}
\end{lemma}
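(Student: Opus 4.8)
The plan is to bound the two regret terms separately by elementary arguments, unpacking the definitions from \Cref{lemma:decomposition_regret} and invoking \Cref{assumption:bounded_rewards} only for the deviation term.

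For the \textbf{payment regret}, I would start from its definition $\paymentregret^\vl(s,t) = \sum_{l=s}^t \sum_{\wl \in \ch(\vl)} (\icv_l(\wl)-\icvs_{B_l^\wl}(\wl))_+$ and apply the pointwise inequality $x_+ \leq |x|$ to each summand, giving $(\icv_l(\wl)-\icvs_{B_l^\wl}(\wl))_+ \leq |\icv_l(\wl)-\icvs_{B_l^\wl}(\wl)|$. I would then bound each absolute value by its maximum over the index set appearing in the double sum, namely $\max_{l \in \iint{s}{t},\, \wl \in \ch(\vl)} |\icv_l(\wl)-\icvs_{B_l^\wl}(\wl)|$. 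Since each player has exactly $\mB = \Card(\ch(\vl))$ children and the window $\iint{s}{t}$ contains $t-s+1$ steps, the double sum has precisely $\mB(t-s+1)$ terms, each controlled by this maximum, which yields the first bound immediately.

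For the \textbf{deviation regret}, I would analyze the summands $(\theta^\vl(A_l^\vl, B_l^{\ch(\vl)}) -\theta^\vl(A_l^\vl, A_l^{\ch(\vl)}))_+$ via two observations. First, \Cref{assumption:bounded_rewards} guarantees that both mean rewards lie in $[0,1]$, so their difference lies in $[-1,1]$ and its positive part is at most $1$. Second, whenever $A_l^{\ch(\vl)} = B_l^{\ch(\vl)}$ the two reward arguments coincide, so the difference is exactly $0$ and contributes nothing; thus the summand is nonzero (and bounded by $1$) only on steps where the children's realized actions differ from the recommended ones. Bounding each summand by the indicator $\1\{A_l^{\ch(\vl)} \neq B_l^{\ch(\vl)}\}$ and summing over $l \in \iint{s}{t}$ produces exactly $\Card\{l \in \{s,\ldots,t\} \colon A_l^{\ch(\vl)}\ne B_l^{\ch(\vl)}\}$, giving the second bound.

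Neither estimate presents a genuine technical obstacle: both reduce to combining the elementary inequality $x_+ \leq \min(|x|,1)$ — with the cap at $1$ supplied by the boundedness of rewards — together with a count of the terms in the relevant sum. The only point deserving minor care is the bookkeeping in the payment regret, namely verifying that the double sum contains exactly $\mB(t-s+1)$ terms and that the maximum is taken over the correct index set $\{(l,\wl) : l \in \iint{s}{t},\, \wl \in \ch(\vl)\}$; this is routine once the definitions are made explicit.
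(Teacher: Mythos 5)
Your proof is correct and follows essentially the same route as the paper's: the payment bound via $x_+ \leq |x|$ plus a count of the $\mB(t-s+1)$ terms, and the deviation bound by noting each summand vanishes when $A_l^{\ch(\vl)} = B_l^{\ch(\vl)}$ and is at most $1$ otherwise by \Cref{assumption:bounded_rewards}. If anything, your bookkeeping of the window $\iint{s}{t}$ is cleaner than the paper's, whose displayed chain drifts between the index sets $\iint{s}{t}$ and $\iint{s+1}{s+t}$.
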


\begin{proof}[Proof of \Cref{lemma:bound_regret_decomposition}]
We start with the payment regret, which can be written
\begin{align*}
    \paymentregret^\vl(s,t) & = \sum_{l=s}^t \sum_{\wl \in \ch(\vl)} (\icv_l(\wl) -\icvs_{B_l^\wl}(\wl))_+ \\
    & \leq \sum_{l=s+1}^{s+t} \mB \max_{\wl \in \ch(\vl)}|\icv_{l}(\wl)-\icvs_{B_l^\wl}(\wl)| \\
    & \leq \sum_{l=s+1}^{s+t} \mB \max_{\wl \in \ch(\vl)} \max_{l \in \iint{s+1}{s+t}}|\icv_{l}(\wl)-\icvs_{B_l^\wl}(\wl)| \\
    & \leq \mB \, t \max_{\underset{l \in \iint{s+1}{s+t}}{\wl \in \ch(\vl)}} |\icv_{l}(\wl)-\icvs_{B_l^\wl}(\wl)| \eqsp.
\end{align*}
We can now decompose the deviation regret, following
\begin{align*}
    \deviationregret^\vl(s,t) & = \sum_{l=s}^t (\theta^\vl(A_l^\vl, B_l^{\ch(\vl)}) - \theta^\vl(A_l^\vl, A_l^{\ch(\vl)}))_+ \\
    & = \sum_{l=s}^t \1_{B_l^{\ch(\vl)}}(A_l^{\ch(\vl)}) \underbrace{(\theta^\vl(A_l^\vl, B_l^{\ch(\vl)}) - \theta^\vl(A_l^\vl, A_l^{\ch(\vl)}))_+}_{\leq 1} \\
    & \leq \sum_{l=s}^t \1_{B_l^{\ch(\vl)}}(A_l^{\ch(\vl)}) \\
    & = \Card \left\{l \in \{s,\cdots,t\} \colon A_l^{\ch(\vl)}\ne B_l^{\ch(\vl)}\right\} \eqsp.
\end{align*}
\end{proof}

\begin{proof}[Proof of \Cref{lemma:repeat_assumption}]
    Consider the setup of the lemma, as well as some set of steps $\iint{s+1}{s+t}\subseteq \iint{\Psi_{K+1}^\vl+1}{T}$. We define the following "nice" event
    \begin{align*}
        \evgood = & \bigcap_{\wl \in \ch(\vl)}\bigcap_{\lambda \in \lceil \log T^\beta \rceil} \bigcap_{b \in \cA} \evgoodw(\iint{k_{b,\lambda}^\vl+1}{k_{b,\lambda}^\vl+\Texp}, \acst, \kappa) \\
        & \quad \bigcap_{\wl \in \ch(\vl)} \left\{\sum_{l=s+1}^{s+t} \max_{a \in \cA} \{\reww_a + \1_{B_l^\wl}(a)\, \icv_l(\wl)\} - \left\{ \mu^\wl(A_l^\wl, B_l^{\ch(\wl)}) +\1_{B_l^\wl}(A_l^\wl)\, \icv_l(\wl)\right\} \leq \acst t^\kappa \right\} \eqsp,
    \end{align*}
and a straight computation gives that
\begin{align*}
\P(\evgood) & \geq 1 - \mB \, K \, \lceil \log T^\beta \rceil /T^{\alpha \zeta}-\mB t^{-\zeta} \\
& \geq 1- \mB \, K \, \lceil \log T^\beta \rceil /t^{\alpha \zeta}-\mB/t^{-\alpha \zeta} \\
& \geq 1 - \max\left\{\frac{2 \, \mB\, K \lceil \log T^\beta \rceil}{T^{\alpha \zeta}}, \frac{2 \, \mB}{t^\zeta} \right\} \\
& \geq 1 - \max\left\{\frac{1}{T^{\alpha \zeta-\epsilon}}, \frac{2\, \mB}{t^{\zeta}} \right\} \\
& \geq 1-1/t^{\alpha \zeta -\epsilon} \eqsp,
\end{align*}
for $\epsilon$ defined as
\begin{equation}\label{equation:definition_epsilon}
    \epsilon = \log(4 \mB K \log T)/\log(T) \eqsp,
\end{equation}

and $t \geq (2\mB)^{\zeta - \alpha \zeta+\epsilon}$ with $\zeta - \alpha \zeta+\epsilon>0$. We suppose that $\cE$ holds for the rest of the proof. Based on the children $\wl \in \ch(\vl)$ time to learn, we can define player $\vl$'s time to learn, following $\wait_\vl = \Psi_{K+1}^\vl + \max_{\wl \in \ch(\vl)} \wait_\wl$, and the following sets
\begin{align*}
    I_{\iint{s+1}{s+t}}^\wl &= \{l \in \iint{s+1}{s+t} \text{ such that } A_l^\wl = B_l^\wl\} \eqsp, \\
    J_{\iint{s+1}{s+t}}^\wl &= \{l \in \iint{s+1}{s+t} \text{ such that } A_l^\wl \ne B_l^\wl\} \eqsp,
\end{align*}
as well as
\begin{align*}
    &\Imax_{\iint{s+1}{s+t}}(\vl) = \{l \in \iint{s+1}{s+t} \text{ such that } l \in I_{\iint{s+1}{s+t}}^\wl \text{ for any } \wl \in \ch(\vl)\} \eqsp, \\
    &\Jmax_{\iint{s+1}{s+t}}(\vl) = \{l \in \iint{s+1}{s+t} \text{ such that } l \in J_{\iint{s+1}{s+t}}^\wl \text{ for some } \wl \in \ch(\vl)\} \eqsp.
\end{align*}
Note that $I_{\iint{s+1}{s+t}}\cup J_{\iint{s+1}{s+t}} = \iint{s+1}{s+t}$  and $\Imax_{\iint{s+1}{s+t}} \cup \Jmax_{\iint{s+1}{s+t}} = \iint{s+1}{s+t}$. Since $\cE$ holds, we have that for any $\wl \in \ch(\vl)$, by definition of $\icvest_b(\wl)$
\begin{align*}
    \reww_b + \icvest_b(\wl) > \, \reww_b + \max_{b'' \in \cA} \reww_{b''}- \reww_b + \acst \,\mB\, T^{-\extra} \, \geq \, \reww_{b'} + \acst \,\mB\, T^{-\extra} \eqsp,
\end{align*}
for any $b' \in \cA$ such that $b'\ne b$. Therefore, we have that $B_t^\wl = \argmax_{b \in \cA} \{\reww_b+\1_{B_t^\wl}(b)\, \icvest_b(\wl)\}$, as well as
\begin{equation}\label{equation:reward_gap_incentives}
    (\reww_{B_t^\wl}+\1_{B_t^\wl}(B_t^\wl) \, \icvest_{B_t^\wl}) - \max_{\underset{b\ne B_t^\wl}{b \in \cA}} \{\reww_b+\1_{B_t^\wl}(b) \, \icvest_{B_t^\wl} \} \geq \acst \, \mB\, T^{-\extra} \eqsp.
\end{equation}
Thus, for any step $l \in \iint{s+1}{s+t}$, \eqref{equation:reward_gap_incentives} implies that the reward gap between $B_t^\wl$ and any other action $b \in \cA$ is at least $\acst \,\mB \, T^{-\extra}$, and hence
\begin{align*}
    \Card(J_{\iint{s+1}{s+t}}^\wl) \, \acst \, T^{-\extra}\, \mB & \leq \sum_{l\in \iint{s+1}{s+t}} \max_{a \in \cA} \{\rewv_a + \1_{B_l^\vl}(a)\, \icv_l(\vl)\} \\
    & \quad - \left\{ \mu^\vl(A_l^\vl, B_l^{\ch(\vl)}) +\1_{B_l^\vl}(A_l^\vl)\, \icv_l(\vl)\right\} \\
    & \leq \acst \, t^\kappa \eqsp,
\end{align*}
by \Cref{assumption:bounded_regret}. We can conclude that $\Card(J_{\iint{s+1}{s+t}}^\wl) \leq t^\kappa/(T^{-\extra} \,\cdot \mB) \leq t^{\kappa+\extra}/\mB$ for any $\wl \in \ch(\vl)$, which gives that
\begin{equation}\label{equation:cardjmaxlittle}
    \Card(\Jmax_{\iint{s+1}{s+t}}) \leq \sum_{\wl \in \ch(\vl)} \Card(J_{\iint{s+1}{s+t}}^\wl) \leq  t^{\kappa+\eta} \eqsp.
\end{equation}
We can now decompose the quantity that we want to bound
\begin{align*}
    &\sum_{l=s+1}^{s+t}\left\{ \max_{a \in \cA} \{\rewv_a + \1_{B_l^\vl}(a)\, \icv_l(\vl)\} - \left(\mu^\vl(A_l^\vl, B_l^{\ch(\vl)}) + \1_{B_l^\vl}(A_l^\vl)\, \icv_l(\vl)\right)\right\} \leq 2 \, \Card(\Jmax_{\iint{s+1}{s+t}}) \\
    & \quad + \sum_{l \in \Imax_{\in \iint{s+1}{s+t}}}\left\{ \max_{a \in \cA} \{\rewv_a + \1_{B_l^\vl}(a)\, \icv_l(\vl)\} - \left(\mu^\vl(A_l^\vl, B_l^{\ch(\vl)}) + \1_{B_l^\vl}(A_l^\vl)\, \icv_l(\vl)\right)\right\} \\
    & \leq 2\, t^{\kappa+\extra} + \sum_{l \in \Imax_{\iint{s+1}{s+t}}} \max_{a, b^{\ch(\vl)} \in \cA\times \cA^\mB} \left\{ \theta^\vl(a, b^{\ch(\vl)})-\sum_{\wl \in \ch(\vl)}\icvs_{b^\wl}(\wl) + \1_{B_l^\vl}(a) \icv_l(\vl)\right\} \\
    & \quad - \sum_{l \in \Imax_{\iint{s+1}{s+t}}}\left(\theta^\vl(A_l^\vl,A_l^{\ch(\vl)}) + \1_{B_l^\vl}(A_l^\vl)\, \icv_l(\vl) - \sum_{\wl \in \ch(\vl)}\icvs_{A_l^\wl}(\wl) \right) \\
& = 2\, t^{\kappa+\extra} + \sum_{l \in \Imax_{\iint{s+1}{s+t}}} \max_{a, b^{\ch(\vl)} \in \cA\times \cA^\mB} \left\{ \theta^\vl(a, b^{\ch(\vl)}) \right. \notag \\
& \quad \left. -\sum_{\wl \in \ch(\vl)}(\icvs_{b^\wl}(\wl)+\acst \,\mB\,T^{-\extra}+ 4 T^{-\beta}) + \1_{B_l^\vl}(a) \icv_l(\vl)\right\} \\
    & \quad - \sum_{l \in \Imax_{\iint{s+1}{s+t}}}\left(\theta^\vl(A_l^\vl,A_l^{\ch(\vl)}) + \1_{B_l^\vl}(A_l^\vl)\, \icv_l(\vl) - \sum_{\wl \in \ch(\vl)}(\icvs_{A_l^\wl}(\wl)+\acst\,\mB \, T^{-\extra}) \right) + \, 4 \, \mB\, t \, T^{-\beta} \\
    & \leq 2\, t^{\kappa+\extra} + \, 4 \, \mB\, t^{1-\beta} + \sum_{l \in \Imax_{\iint{s+1}{s+t}}} \max_{a, b^{\ch(\vl)} \in \cA\times \cA^\mB} \left\{ \theta^\vl(a, b^{\ch(\vl)})-\sum_{\wl \in \ch(\vl)}\icvest_{b^\wl}(\wl) + \1_{B_l^\vl}(a) \icv_l(\vl)\right\} \\
    & \quad - \sum_{l \in \Imax_{\iint{s+1}{s+t}}}\left(\theta^\vl(A_l^\vl,A_l^{\ch(\vl)}) + \1_{B_l^\vl}(A_l^\vl)\, \icv_l(\vl) - \sum_{\wl \in \ch(\vl)}\icvest_{A_l^\wl}(\wl) \right) 
    \eqsp,
\end{align*}
thanks to \Cref{proposition:estimated_incentives_good}. But player $\vl$ plays on a stochastic bandit with reward and observed feedback $Z_l(A_l^\vl,A_l^{\ch(\vl)})+\1_{B_l^\vl}(A_l^{\ch(\vl)}) = X_l(A_l^\vl,A_l^{\ch(\vl)}) - \sum_{\wl \in \ch(\vl)}\icvest_{A_t^\wl}(\wl)+\1_{B_l^\vl}(A_l^{\ch(\vl)}$ as long as $l \in \Imax_{\iint{s+1}{s+t}}$. Hence, \citet[Proposition 2]{scheid2024learning} gives us that
\begin{align*}
    \sum_{l=s+1}^{s+t}\left\{ \max_{a \in \cA} \{\rewv_a + \1_{B_l^\vl}(a)\, \icv_l(\vl)\} - \left(\mu^\vl(A_l^\vl, B_l^{\ch(\vl)}) - \1_{B_l^\vl}(A_l^\vl)\, \icv_l(\vl)\right)\right\}  \leq 2(t^{\kappa+\extra} + t^{1-\beta}) & \\
    \quad + 8 \sqrt{K^{\mB+1}\log(K^{\mB+1}T^3)} t^{1/2} \eqsp ,& 
\end{align*}
since player $\vl$ plays on a $K^{\mB+1}$ bandit instance. Hence, $\vl$ satisfies \Cref{assumption:bounded_regret} with parameters $(\wait+K\lceil T^\alpha \rceil \lceil \log T^\beta \rceil, 10 \sqrt{K^{\mB+1}\log(K^{\mB+1}T^3)}, \max\{\kappa+\extra, 1-\beta, 1/2\},\alpha \zeta-\epsilon)$, where $\epsilon = \log(4 \mB K \log T)/\log(T)$.
\end{proof}

\begin{proof}[Proof of \Cref{corollary:boundedregretv}]
    Note that \Cref{lemma:repeat_assumption} directly implies that with probability at least $1-1/t^{\zeta_\vl}$ where $\zeta_\vl>0$ coming from \Cref{lemma:repeat_assumption} and $t=T-\wait - K\lceil T^\alpha \rceil \lceil \log T^\beta \rceil$
    \begin{align*}
        \regret^\vl(T) & \leq \wait+K\lceil T^\alpha \rceil \lceil \log T^\beta \rceil + \actionregret^\vl(\wait+K\lceil T^\alpha \rceil \lceil \log T^\beta \rceil, T) \\
        & \quad + \deviationregret(\wait+K\lceil T^\alpha \rceil \lceil \log T^\beta \rceil, T) + \paymentregret(\wait+K\lceil T^\alpha \rceil \lceil \log T^\beta \rceil, T) \eqsp,
    \end{align*}
    and note that \Cref{assumption:bounded_regret} directly implies that $\actionregret^\vl(\wait+K\lceil T^\alpha \rceil \lceil \log T^\beta \rceil, T)=o(T)$. But at the same time, the \Cref{assumption:bounded_regret} on the children gives that $\deviationregret(\wait+K\lceil T^\alpha \rceil \lceil \log T^\beta \rceil, T) = o(T)$ and the precise estimation of the payments guarantees that $\paymentregret(\wait+K\lceil T^\alpha \rceil \lceil \log T^\beta \rceil, T) = o(T)$. Since $\wait+K\lceil T^\alpha \rceil \lceil \log T^\beta \rceil = o(T)$ and $\zeta_\vl>0$ is given by \Cref{lemma:repeat_assumption}, taking the expectation gives that
    \begin{align*}
        \E[\regret^\vl(T)] = o(T) \eqsp.
    \end{align*}
\end{proof}

\begin{proof}[Proof of \Cref{lemma:everyone_assumption}]

    We do the proof by induction, starting with the layer $d=2$: the layer of players interacting with the leaves. \citet[Proposition 2]{scheid2024learning} gives us that \Cref{algorithm:ucb} for the leaves without offering any transfer satisfies \Cref{assumption:bounded_regret} with parameters $(0,8\sqrt{K\log(KT^3)}, 1/2, 2)$. Thus, following \eqref{equation:definition_good_parameters}, we choose parameters $\extra_2=1/4, \alpha_2=(D+1)/2D, \beta_2=1/4$ for any $\vl$ in layer $d=2$. $\kappa_1+\extra_2=3/4, 1-\beta_2=3/4$. Therefore, \Cref{lemma:repeat_assumption} ensures that $\vl$ satisfies \Cref{assumption:bounded_regret} with parameters $(K\lceil T^\alpha\rceil \lceil \log T^\beta \rceil, 10 \sqrt{K^{\mB+1}\log(K^{\mB+1}T^3}), 3/4, 4/3-\epsilon) = ((d-1) \, K \lceil T^\alpha\rceil \lceil \log T^\beta \rceil,10 \sqrt{K^{\mB+1}\log(K^{\mB+1}T^3}), 1-1/2d, \zeta_d)$ for $d=2$. Thus, our initialization holds.
    
    Suppose that the property is true for some node $\wl$ located at depth $d$ and consider a node $\vl = \pa(\wl)$ at depth $d+1$. $\vl$ runs $\alg$ with parameters from \eqref{equation:definition_good_parameters}. It is important to check if \eqref{equation:condition_parameters} is satisfied. Note that for any $d \in \iint{1}{D}$, we have that
    \begin{align*}
        \frac{\beta_{d+1}}{\alpha_{d+1}} = \frac{1}{2(d+1)} \times \frac{D(d+1)}{(D+1)d} = \frac{D}{D+1}\times \frac{1}{2d} \eqsp,
    \end{align*}
    while we have that
    \begin{equation*}
        1-\kappa_d = 1-1+ \frac{1}{2d} = \frac{1}{2d}> \beta_{d+1}/\alpha_{d+1} \eqsp,
    \end{equation*}
    hence $\alpha_{d+1},\beta_{d+1}$ are acceptable parameters for a player at depth $d$ since they satisfy $\beta_{d+1}/\alpha_{d+1}<1-\kappa_d$. We now have that
    \begin{align*}
        \kappa_{d+1} & = \max(1/2, \kappa_d + \extra_{d+1}, 1-\beta_{d+1})\\
        & = \max\left\{1/2, 1-1/(2\, \cdot d)+1/\{2\,d \,(d+1)\}, 1-1/2(d+1)\right\} \\
        & = \max\left\{1-\frac{1}{2d} \left(1-\frac{1}{d+1}\right), 1-\frac{1}{2(d+1)} \right\}\;\;\; \text{for any d}\geq 3 \\
        & = \max\left\{1-\frac{1}{2d} \frac{d+1-1}{d+1}, 1-\frac{1}{2(d+1)} \right\} \\
        & = 1-\frac{1}{2(d+1)} \eqsp.
    \end{align*}
    Now observe that following \Cref{lemma:everyone_assumption} with $\epsilon$ defined in \eqref{equation:definition_epsilon} as
\begin{equation*}
    \epsilon = \log(4\mB K \log T)/\log(T) \eqsp,
\end{equation*}
we have that $\zeta_{d+1} = \alpha_{d+1}\zeta_d-\epsilon$ and \Cref{lemma:bound_zeta} ensures that $\zeta_d \geq 2/d - d \epsilon$. Since we assumed that $\log T\geq D^2 \log (4 \mB K \log T)$, we have for any $d \in \iint{1}{D}, \log T\geq d^2 \log (4 \mB K \log T)$, which implies $1/d \geq d \log(4 \mB K\log T)/\log T = d\epsilon$, and hence we obtain that $\zeta_d\geq 1/d$.
\end{proof}

\begin{lemma}\label{lemma:bound_zeta}
    If $(\zeta_d)_{d\geq 1}$ is defined as $\zeta_1=2$ and for any $d\geq 1$, $\zeta_{d+1}=\alpha_{d+1} \zeta_d - \epsilon$, for $\epsilon \in (0,1)$, then we have that
    \begin{equation*}
        \zeta_d\geq \frac{2}{d} - \epsilon \left(\frac{d+1}{2}-\frac{1}{d}\right) \geq \frac{2}{d}- d \, \epsilon \eqsp.
    \end{equation*}
\end{lemma}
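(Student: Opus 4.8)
The plan is to prove the bound by induction on $d$, exploiting the explicit form of the coefficients $\alpha_{d+1}=\frac{(D+1)d}{D(d+1)}$ coming from \eqref{equation:definition_good_parameters}. The single structural fact I would extract first is that $\frac{D+1}{D}\geq 1$, so that
\[
\alpha_{d+1}=\frac{d}{d+1}\cdot\frac{D+1}{D}\;\geq\;\frac{d}{d+1},
\]
which is what makes the whole statement independent of $D$. Since $\zeta_{d+1}=\alpha_{d+1}\zeta_d-\epsilon$ is a first-order affine recursion, I expect the induction to close in essentially one line once this lower bound on the multiplier is in place.

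For the base case $d=1$ the claimed bound reads $\zeta_1\geq 2-\epsilon\cdot 0=2$, which holds with equality by the definition $\zeta_1=2$. For the inductive step I would assume $\zeta_d\geq \frac2d-\epsilon\bigl(\frac{d+1}2-\frac1d\bigr)$ and substitute into the recursion, replacing $\alpha_{d+1}$ by its lower bound $\frac{d}{d+1}$:
\[
\zeta_{d+1}=\alpha_{d+1}\zeta_d-\epsilon\;\geq\;\frac{d}{d+1}\Bigl(\tfrac2d-\epsilon\bigl(\tfrac{d+1}2-\tfrac1d\bigr)\Bigr)-\epsilon .
\]
Expanding gives $\frac{2}{d+1}-\epsilon\bigl(\frac{d}2-\frac1{d+1}\bigr)-\epsilon=\frac{2}{d+1}-\epsilon\bigl(\frac{d+2}2-\frac1{d+1}\bigr)$, which is exactly the target bound at index $d+1$; the algebra collapses precisely, leaving no residual slack to track. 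The second, weaker inequality $\frac2d-\epsilon\bigl(\frac{d+1}2-\frac1d\bigr)\geq \frac2d-d\epsilon$ is then purely arithmetic, following from $\frac{d+1}2-\frac1d\leq d$ for every $d\geq 1$.

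The one delicate point — and the step I expect to be the real obstacle — is the sign condition hidden in passing from $\alpha_{d+1}\zeta_d$ to $\frac{d}{d+1}\zeta_d$: replacing $\alpha_{d+1}$ by the smaller quantity $\frac{d}{d+1}$ preserves the inequality only when $\zeta_d\geq 0$, so I would carry nonnegativity of $\zeta_d$ along as part of the inductive hypothesis. This is harmless in the regime where the lemma is actually invoked, since the lower bound $\frac2d-\epsilon\bigl(\frac{d+1}2-\frac1d\bigr)$ is itself positive once $\epsilon$ is small relative to $1/d^2$, which is guaranteed by the standing assumption $\log T\geq D^2\log(4\mB K\log T)$ used in \Cref{lemma:everyone_assumption} (it forces $\epsilon=\log(4\mB K\log T)/\log T\leq 1/D^2$). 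As a fully self-contained alternative that sidesteps this bookkeeping, I would keep in reserve the closed-form solution via the integrating factor $\prod_{k=2}^{d}\alpha_k=\bigl(\tfrac{D+1}{D}\bigr)^{d-1}\tfrac1d$, which telescopes cleanly and yields the same estimate after bounding the resulting geometric-type sum term by term using $\tfrac{D}{D+1}\leq 1$.
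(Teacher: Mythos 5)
Your argument is correct and is essentially the paper's own: both proofs rest on the single fact $\alpha_{d+1}=\frac{(D+1)d}{D(d+1)}\geq\frac{d}{d+1}$ and an induction on $d$; the paper merely routes the induction through an auxiliary sequence $v_{d+1}=\frac{d}{d+1}v_d-\epsilon$, $v_1=2$, which it then solves in closed form, whereas you induct directly on the closed-form expression (your algebra in the inductive step collapses exactly as you claim, and the comparison $\frac{d+1}{2}-\frac{1}{d}\leq d$ for $d\geq 1$ gives the weaker bound).

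The point you flag as delicate is not mere bookkeeping: it is a genuine gap in the paper's own proof, which writes $\alpha_{d+1}\zeta_d\geq\frac{d}{d+1}\zeta_d$ with no sign condition. Without that condition the lemma as literally stated (arbitrary $\epsilon\in(0,1)$, all $d\geq 1$) is false. Take $D=10$ and $\epsilon=0.99$: then $\zeta_2=0.11$, $\zeta_3=\frac{11}{15}\cdot 0.11-0.99\approx-0.9093$, and
\begin{equation*}
\zeta_4=\tfrac{33}{40}\,\zeta_3-0.99\approx-1.7402\;<\;\tfrac{1}{2}-0.99\cdot\tfrac{9}{4}=-1.7275\eqsp,
\end{equation*}
so $\zeta_4$ drops strictly below the claimed lower bound at $d=4$ (equivalently, the paper's inductive claim $\zeta_d\geq v_d$ already fails there). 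Your patch is exactly the right one: carry nonnegativity of $\zeta_d$ (equivalently, positivity of the lower bound) along the induction, and observe that in the only regime where \Cref{lemma:bound_zeta} is invoked — $d\leq D$ and $\epsilon=\log(4\mB K\log T)/\log T\leq 1/D^2$ under the standing assumption of \Cref{lemma:everyone_assumption} — one has $\frac{2}{d}-d\epsilon\geq\frac{2}{d}-\frac{d}{D^2}\geq\frac{1}{d}>0$, so the hypothesis propagates for all $d\leq D$.

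One caveat on your reserve plan: the integrating-factor route does not sidestep this issue. Unrolling gives the exact identity $\zeta_d=\left(\frac{D+1}{D}\right)^{d-1}\frac{1}{d}\left(2-\epsilon\sum_{j=2}^{d}j\left(\frac{D}{D+1}\right)^{j-1}\right)$, and while the sum is indeed bounded term by term using $\frac{D}{D+1}\leq 1$, discarding the prefactor $\left(\frac{D+1}{D}\right)^{d-1}\geq 1$ to obtain a lower bound is legitimate only when the bracketed quantity is nonnegative — the same sign condition in disguise, as the counterexample above shows it must be.
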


\begin{proof}[Proof of \Cref{lemma:bound_zeta}]
    First define the sequence $(v_d)_{d\geq 1}$ as $v_1=2$ and then $v_{d+1}=d v_d/(d+1) -\epsilon$. We show by induction that for any $d\geq 1, \zeta_d \geq v_d$. The definition of $\zeta_1=2=v_d$ and now assuming that $\zeta_d\geq v_d$ for some $d\geq 1$, we obtain
    \begin{align*}
        \zeta_{d+1}& =\frac{(D+1)(d-1)}{D \times d}\zeta_d-\epsilon \\
        & \geq \frac{d}{d+1} \zeta_d-\epsilon \\
        & \geq \frac{d}{d+1} v_d-\epsilon \;\;\; \text{by our induction assumption} \\
        &= v_{d+1} \eqsp,
    \end{align*}
    by definition. Some algebra gives that $v_d = \frac{2}{d} + \epsilon \left(\frac{1}{d} -\frac{d+1}{2}\right) \geq \frac{2}{d} - d \, \epsilon$, hence the result.
\end{proof}

\begin{proof}[Proof of \Cref{theorem:main_convergence}]
    Consider player $\vl \in \trV$ located at depth $d$. Since all the nodes run $\alg$, \Cref{lemma:repeat_assumption} gives us that $\ch(\vl)$ satisfy \Cref{assumption:bounded_regret} with parameters $(\wait_\wl, \acst_\wl, \kappa_\wl, \zeta_\wl)$ and we have that $\max_{\wl \in \ch(\vl)} \wait_{\wl} \leq K \lceil T^\alpha \rceil \lceil \log T^\beta\rceil$, $\acst_{\wl} \leq 10 \sqrt{K^{\mB+1} \log(K^{\mB+1}T^3)}$. Also, $\kappa_d=1-1/(2d-2)$ and $\zeta_d \geq 1/d$. Player $\vl$ 's instantaneous regret is always bounded by $2$ almost surely, hence we can write
    \begin{align*}
        \regret^\vl(T) & \leq 2 \, d \, K \lceil T^\alpha \rceil \lceil \log T^\beta\rceil + \sum_{s=\wait + K\lceil T^\alpha \rceil \lceil \log(T^\beta)}^T \max_{a, (b^{\ch(\vl)})} \left\{\mu^\vl(a, b^{\ch(\vl)}) + \1_{B_t^\vl}(a) \, \icv_t(\vl) \right\} \\
    & \quad - \sum_{s=\wait + K\lceil T^\alpha \rceil \lceil \log(T^\beta)}^T  \left\{ \theta^\pl(A_t^\vl, A_t^{\ch(\vl)}) +\1_{B_t^\vl}(A_t^\vl)\, \icv_t(\vl) - \sum_{\wl \in \up(\pl)} \1_{B_t^\wl}(A^\wl_t)\icv_t(\wl)\right\}
    \end{align*}
    Using \Cref{lemma:decomposition_regret}, we can decompose $\regret^\vl$ as follows
    \begin{align*}
        \regret^\vl(T) & \leq 2 \, d \, K \lceil T^\alpha \rceil \lceil \log T^\beta\rceil + \actionregret^\vl((d+1)K \lceil T^\alpha \rceil \lceil \log T^\beta\rceil,T) \\
        & \quad + 2 \, \Card (s \in \iint{1+d K \lceil T^\alpha \rceil \lceil \log T^\beta\rceil}{T} \colon A_s^{\ch(\vl)} \ne B_s^{\ch(\vl)}) \\
        & \quad + \mB \, (T-d K \lceil T^\alpha \rceil \lceil \log T^\beta\rceil) \max_{\underset{s \geq (d+1)K \lceil T^\alpha \rceil \lceil \log T^\beta\rceil}{\wl \in \ch(\vl)}} |\icvest_{B_s^\wl}(\wl)-\icvs_{B_s^\wl}(\wl) | \\
        & \leq 2 \, d \, K \lceil T^\alpha \rceil \lceil \log T^\beta\rceil + \acst_\vl (T-K \lceil T^\alpha \rceil \lceil \log T^\beta\rceil)^{\kappa_\vl} + \mB \, T \, \max_{b\in \cA, \wl \in \trV}|\icvest_b(\wl)-\icvs_b(\wl)| \\
        & \quad + 2 \, \Card (s \in \iint{1+d K \lceil T^\alpha \rceil \lceil \log T^\beta\rceil}{T} \colon A_s^{\ch(\vl)} \ne B_s^{\ch(\vl)}) \eqsp,
    \end{align*}
    with probability at least $1-1/T^{-\zeta_d}$ by \Cref{lemma:repeat_assumption}. Following what is shown in \eqref{equation:cardjmaxlittle}, we have that for any $s \geq \wait$
    \begin{align*}
    \Card(\Jmax_{\iint{s+1}{s+t}}) \leq  t^{\kappa_{d-1}+\eta_d} \eqsp,
\end{align*}
and \Cref{proposition:estimated_incentives_good} gives that
\begin{align*}
    \max_{\wl \in \ch(\vl), b \in \cA} |\icvest_b(\wl)-\icvs_b(\wl)| \leq 4/T^{\beta_d} + \acst \, \mB \, T^{-\extra_d} \eqsp.
\end{align*}
Therefore, we can write, with probability at least $1-1/T^{\zeta_d}$
\begin{align*}
    \regret^\vl(T) & \leq 2 \, d \, K \lceil T^{\alpha_d} \rceil \lceil \log T^{\beta_d} \rceil+ \acst \, T^{\kappa_d} + 2 (T-d \, K \lceil T^{\alpha_d} \rceil \lceil \log T^{\beta_d}\rceil)^{\kappa_{d-1}+\extra_d} \\
    & \quad + \mB\, T \, (4\, T^{-\beta_d} + \acst \, \mB \, T^{-\extra_d}) \\
    & = \bigO(T^{\alpha_d}+T^{\kappa_{d-1}+\extra_d}+T^{1-\beta_d}+T^{1-\extra_d}) \\
    & = \bigO(T^{1-\frac{1}{2d^2}}) \eqsp,
\end{align*}
by definition of the parameters in \eqref{equation:definition_good_parameters} and \Cref{lemma:repeat_assumption}. For the comment that follows \Cref{theorem:main_convergence}, we can write (since $\zeta_d\geq 1/d$)
\begin{align*}
    \E\parentheseDeux{\regret^\vl(T)}  \leq (1-1/T^{\zeta_d}) \bigO(T^{1-\frac{1}{2d^2}}) + 2 \times T /T^{\zeta_d} 
    \leq \bigO(T^{1-\frac{1}{2d^2}}) + \bigO(T^{1-\frac{1}{d}})
    = \bigO(T^{1-\frac{1}{2d^2}}) \eqsp,
\end{align*}
hence the fact that $\E\parentheseDeux{\regret^\vl(T)} = o(T)$.
\end{proof}

\begin{proof}[Proof of \Cref{corollary:final_convergence}]
Following \Cref{lemma:optimal_incentives}, we can write
\begin{align*}
    \sum_{t =1}^T \max_{(a^\vl) \in \cA^{|\trV|}} \left\{\sum_{\vl \in \trV}  \theta^\vl(a^\vl, a^{\ch(\vl)})- \theta^{\vl}(A_t^\vl, A_t^{\ch(\vl})\right\} = \sum_{t =1}^T \sum_{\vl \in \trV} \max_{a^\vl \in \cA}\rewv(a^\vl) - \sum_{t =1}^T \sum_{\vl \in \trV} \theta^\vl(A_t^\vl,A_t^{\ch(\vl)}) \eqsp.
\end{align*}
Since all the players run $\alg$, writing $\wait = (D+1) \, K \,\lceil T^\alpha \rceil \lceil \log T^\beta \rceil \geq \sum_{d=1}^D \max_{\wl \in \trV_d} \wait_\wl$, we can write
\begin{align*}
    &\sum_{t =1}^T \max_{(a^\vl) \in \cA^{|\trV|}} \left\{\sum_{\vl \in \trV}  \theta^\vl(a^\vl, a^{\ch(\vl)})- \theta^{\vl}(A_t^\vl, A_t^{\ch(\vl})\right\} \\
    & \quad \leq 2 \, \wait + \sum_{t = \wait +1}^T \sum_{\vl \in \trV} \max_{a^\vl \in \cA}\rewv(a^\vl) - \sum_{t = \wait+ 1}^T \sum_{\vl \in \trV} (\theta^\vl(A_t^\vl,A_t^{\ch(\vl)})+\1_{B_t^\vl}(A_t^\vl)\icv_t(\vl) -\1_{B_t^\vl}(A_t^\vl)\icv_t(\vl)) ,
\end{align*}
and reorganizing the terms $\1_{B_t^\vl}(A_t^\vl)\icv_t(\vl)$ among the players allows us to write
\begin{align*}
    &\sum_{t =1}^T \max_{(a^\vl) \in \cA^{|\trV|}} \left\{\sum_{\vl \in \trV}  \theta^\vl(a^\vl, a^{\ch(\vl)})- \theta^{\vl}(A_t^\vl, A_t^{\ch(\vl})\right\} \\
    & \quad \leq 2 \, \wait + \sum_{t = \wait +1}^T \sum_{\vl \in \trV} \max_{a^\vl \in \cA}\rewv(a^\vl) \\
    & \quad - \sum_{t=\wait+1}^T \sum_{\vl \in \trV} \left(\theta^\vl(A_t^\vl,A_t^{\ch(\vl)})+\1_{B_t^\vl}(A_t^\vl)\icv_t(\vl) -\sum_{\wl \in \ch(\vl)}\1_{B_t^\wl}(A_t^\vl)\icv_t(\wl) \right) \\
    & \quad \leq 2 \, \wait + \sum_{t = \wait +1}^T \sum_{\vl \in \trV} \max_{a^\vl \in \cA}\rewv(a^\vl) \\
    & \quad -\sum_{t=\wait+1}^T\sum_{\vl \in \trV} \left(\theta^\vl(A_t^\vl,A_t^{\ch(\vl)})-\sum_{\wl \in \ch(\vl)} \icvest_{B_t^\wl}(\wl) + \1_{B_t^\vl}(A_t^\vl)\icv_t(\wl)  \right) \\
    & \quad \leq 2 \wait + \sum_{t = \wait +1}^T \sum_{\vl \in \trV} \max_{a^\vl \in \cA}\rewv(a^\vl) \\
    & \quad -\sum_{t=\wait+1}^T\sum_{\vl \in \trV} \left(\theta^\vl(A_t^\vl,B_t^{\ch(\vl)})-\sum_{\wl \in \ch(\vl)} \icvs_{B_t^\wl}(\wl) + \1_{B_t^\vl}(A_t^\vl)\icv_t(\vl)  \right) \\
    & \quad \quad \sum_{\vl \in \trV}\left\{ T\, \max_{b \in \cA, \wl \in \ch(\vl)}|\icvs_b(\wl)-\icvest_b(\wl)| + \Card\left(t \geq \wait+1 \colon A_t^{\ch(\vl)}\ne B_t^{\ch(\vl)} \right) \right\} \\
    & \quad \leq 2 \wait + \sum_{t = \wait +1}^T \sum_{\vl \in \trV} \max_{a^\vl \in \cA}\rewv(a^\vl) -\sum_{t=\wait+1}^T\sum_{\vl \in \trV} \left(\mu^\vl(A_t^\vl,B_t^{\ch(\vl)}) +\1_{B_t^\vl}(A_t^\vl)\icv_t(\vl)\right) \\
    & \quad \quad + \sum_{\vl \in \trV}\left\{T \, (4/T^{\min_d \beta_d} + \mB \, \max_{d} \acst_d \, T^{-\min_d \extra_d}) + T^{\max_d \kappa_d+\extra_d} \right\} \\
    & \quad \leq 2\, (D+1) \, K \,\lceil T^\alpha \rceil \lceil \log T^\beta \rceil \\
    & \quad + \sum_{\vl \in \trV}\left(\regret^\vl (T) + 4 \, T^{1-\min_d \beta_d}+ \mB \max_d \acst_d \, T^{1-\min_d \extra_d} + T^{max_d \kappa_d+\extra_d}\right) \eqsp,
\end{align*}
with probability at least $1-\sum_{\vl \in \trV}1/T^{\zeta_{\vl}} \to 0$ as $T \to +\infty$. Therefore, dividing by $T$, taking the expectation as in the proof of \Cref{theorem:main_convergence}, and using \Cref{theorem:main_convergence} for the convergence of any $\regret^\vl(T)$, we obtain that
\begin{align*}
    \sum_{t =1}^T \max_{(a^\vl) \in \cA^{|\trV|}} \left\{\sum_{\vl \in \trV}  \theta^\vl(a^\vl, a^{\ch(\vl)})- \theta^{\vl}(A_t^\vl, A_t^{\ch(\vl})\right\}/T \to 0 \eqsp,
\end{align*}
and therefore
\begin{equation*}
    \sum_{t =1}^T \max_{(a^\vl) \in \cA^{|\trV|}} \left\{\sum_{\vl \in \trV}  \theta^\vl(a^\vl, a^{\ch(\vl)})- \theta^{\vl}(A_t^\vl, A_t^{\ch(\vl})\right\} = o(T) \eqsp,
\end{equation*}
hence our result which gives the convergence of the players' global reward towards the optimum.
\end{proof}

\begin{proof}[Proof of \Cref{lemma:lowerboundish}]
    Consider a two-arms bandit instance $\{0,1\}$ for both the parent ($\vl$) and the child ($\wl$) such that the agent's best action is $0$ while the principal's best action is $(1,1)$. We write $\theta^\vl(i,j)$ the principal's mean reward when the agent plays action $j$ and the principal action $i$. Assume that $\theta^\vl(1,j) > \theta^\vl(0,j)$ for any $j\in \{0,1\}$ and that the players interact for $T$ rounds. We look at the ideal case where the principal knows the minimal payment $\tau^\star>0$ that is needed to make action $1$ the agent's best choice and where $\theta^\vl(1,1) - \tau^\star > \theta^\vl(1,0)+c$, $c$ being some constant big enough, chosen to force the principal to get the agent playing action $1$. At each step $t \in \{1,\ldots ,T\}$, the parent's optimal strategy is to play action $1$ and to offer a transfer $\tau^\star+p_t, p_t\geq 0$ to the child (on this instance, offering lower than $\tau^\star$ would lead the agent to play action $0$) so that the child plays action $1$ (because of the strict dominance of action $1$ for the parent). Writing $\cI_T$ for the set of steps at which the child accepted the incentive (i.e. he best-responded) and $\cJ_T$ for the others, the child's regret is
    \begin{align*}
        \regret^\wl(T) & = \sum_{t\in \cJ_T} p_t= T^\kappa \eqsp,
    \end{align*}
    since $p_t$ represents by definition the agent's reward gap at each step. We match the child's regret upper bound and dismiss the constants. For any sequence $p_t$, considering a reward gap of $1$, the parent's regret is then
        \begin{align*}
        \regret^\vl(T) &= \Card(\cJ_T) + \sum_{t\in \cI_T} p_t \eqsp.
    \end{align*}
    since the principal's regret is $p_t$, the extra-payment for the rounds ($t \in \cI_T$) at which the child accepted the payment and $1$ (or similar kind of constant) whenever the agent refuses the incentive: it corresponds to the rounds in $\cI_T$. Now, for any sequence $\{p_t\}_{t=1}^T$ with $p_t\in [0,1]$, define
$$
S = \left\{t\in\{1,\dots,T\}: p_t \le \tau\right\},\quad \text{with}\quad \tau = T^{-\frac{1-\kappa}{2}},
$$
and set its complement as $S^c = \{1,2,\dots,T\} \setminus S$. We consider two cases.

\vspace{2mm}
\noindent\textbf{Case 1.} $\Card(S) \ge T^{\frac{\kappa+1}{2}}$, then choose $\cJ_T \subseteq S \text{ such that } \Card(\cJ_T) = T^{\frac{\kappa+1}{2}}$. For every $t\in S$, $p_t \le \tau$, thus we have
$$
\regret^\wl(T) = \sum_{t\in J_T} p_t \le \Card(\cJ_T) \cdot \,\tau = T^{\frac{\kappa+1}{2}} \cdot T^{-\frac{1-\kappa}{2}} = T^\kappa \eqsp,
$$
hence the regret constraint is satisfied for the child. Moreover, the parent's regret is now
$$
\regret^\vl(T) = \Card(\cJ_T) + \sum_{t\in \cI_T} p_t \geq  \Card(\cJ_T) = T^{\frac{\kappa+1}{2}} \eqsp.
$$

\vspace{2mm}
\noindent\textbf{Case 2.} $\Card(S) < T^{\frac{\kappa+1}{2}}$, choose $J_T = S$.
Then, as in the previous case, the constraint is satisfied for the child
$$
\regret^\wl(T) = \sum_{t\in J_T} p_t \le \Card(S) \cdot \,\tau < T^{\frac{\kappa+1}{2}} \cdot T^{-\frac{1-\kappa}{2}} = T^\kappa.
$$
The parent's regret is now
$$
\regret^\vl(T) =  \Card(\cJ_T) + \sum_{t\in \cI_T} p_t \geq \sum_{t\in \cI_T} p_t \eqsp.
$$
While $\Card(S)$ is smaller than $T^{\frac{\kappa+1}{2}}$, observe that for every $t\in \cI_T = S^c$, we have $p_t > \tau = T^{-\frac{1-\kappa}{2}}$. Thus,
$$
\sum_{t\in \cI_T} p_t \geq \Card(S^c) \cdot \, T^{-\frac{1-\kappa}{2}} \geq (T - T^{\frac{\kappa+1}{2}}) \cdot T^{-\frac{1-\kappa}{2}} = T^{\frac{\kappa+1}{2}} - T^\kappa = \Omega(T^{\frac{\kappa+1}{2}}) \eqsp.
$$

Therefore, it shows that for any sequence of actions and incentives that the parent can choose, the parent's regret is always lower-bounded, following
      \begin{equation*}
        \regret^\vl(T) \geq \Omega(T^{\frac{\kappa+1}{2}}) \eqsp.
    \end{equation*}
\end{proof}

\section{Equilibrium}

For any outcome $x_t^\vl = (A_t^\vl, B_t^{\ch(\vl)}, \icv_t^{\ch(\vl)}) \in [K]^{\mB+1} \times [0,1]^\mB$, we can define an associated utility
\begin{align*}
U(x^\vl, x^{-\vl})& = \theta^\vl(A^\vl, A^{\ch(\vl)})+\1_{B^\vl}(A^\vl) \icv(\vl) - \sum_{\wl \in \ch(\vl)} \1_{B^\wl}(A^\wl) \icv(\wl) \eqsp.
\end{align*}

We show here that the action profile described by \Cref{lemma:optimal_incentives} corresponds exactly to the \textit{Subgame Perfect Nash Equilibrium} of the extensive form game played at each round~\citep{osborne2004introduction}. In a full information game, $\vl$ would observe everything that occured above her parent $\pa(\vl)$. Here, it does not change anything on $\vl$'s utility and for the sake of clarity, we omit such dependencies in the notation. \Cref{lemma:optimal_incentives} makes the incentives explicit and helps us relate \cref{equation:mu_utility_star} to a notion of equilibrium. Once the players in $\trV$ take the set of actions maximizing $\sum_\vl \rewv(a^\vl)$, the cost of deviating for any player is larger than the reward she can expect from a deviated action.  Before stating the definition of a subgame perfect Nash equilibrium (SPNE), we start by the definition of the strategy profile.

Formally, we define the extended class of strategies of any player $\vl$ as $\cF$ where 
\begin{equation}\label{definition:class_answer}
    \begin{aligned}
    \cF \colon & [K]\times [0,1] \to [K]^{\mB+1} \times [0,1]^\mB \\
    & (B^\vl, \tau(\vl)) \mapsto x^\vl = (A^\vl, B^{\ch(\vl)}, \icv^{\ch(\vl)}) \eqsp.
    \end{aligned}
\end{equation}

Following \Cref{equation:definition_policy}, any online algorithm can be seen as a generator of strategies $f_t^\vl\in\cF$ at any step $t\in \iint{1}{T}$ given by
\begin{equation}\label{eq:extended_action}
    f_t^\vl = \pi^\vl(U_{t}^\vl, \cH_{t-1}^\vl, \cdot, \cdot) \eqsp.
\end{equation}

We now define the utility of a player $\vl$. As seen in \eqref{equation:utility_last_layer}, $\vl$'s utility at step $t$ depends on $A_t^\vl$ but also on $B_t^\vl, (\icv_t(\wl))_\wl, A_t^{\ch(\vl)}$. As long as the recommended action and incentives by her parent $B_t^\vl, \icv_t(\vl)$ is fixed, $f_t^\vl(B_t^\vl, \icv_t(\vl)) = x_t^\vl \in [K]^{\mB+1} \times [0,1]^\mB$ becomes fixed too. Hence, omitting $t$, we write
\begin{equation}\label{equation:definition_utility}
\begin{aligned}
    U^\vl(f^\vl, f^{\ch(\vl)}, B^\vl, \icv(\vl)) & = \theta^\vl(A^\vl, A^{\ch(\vl)})+\1_{B^\vl}(A^\vl) \icv(\vl) -\sum_{\wl \in \ch(\vl)} \1_{B^\wl}(A^\wl) \icv(\wl) \eqsp,
\end{aligned}
\end{equation}
and under our notations, the utility can also be written--if we consider actions $A^{\ch(\vl)}_f$ taken under children strategies $f^{\ch(\vl)}$ for $\vl$'s actions being $f^\vl(B^\vl, \icv(\vl))$--as
\begin{equation*}
    U^\vl(f^\vl, f^{\ch(\vl)}, B^\vl, \icv(\vl)) = U^\vl(f^\vl(B^\vl, \icv(\vl)), A^{\ch(\vl)}_{f^\vl}) \eqsp,
\end{equation*}
where we implicitly drop the dependency of $f^\vl(B^\vl, \icv(\vl))$ on $B^\vl, \icv(\vl)$ when writing $A^{\ch(\vl)}_{f^\vl}$.

\begin{definition}[Subgame Perfect Nash Equilibrium]\label{definition:spne} We say that the profile $(f^\vl)_{\vl \in \trV}$ is an $\epsilon$-subgame perfect Nash equilibrium (SPNE) if for any $\vl$, $B^{\vl}$, $\tau(\vl)$ we have that
\begin{equation}\label{equation:definition_spne}
\begin{aligned}
& \sup_{f \in \cF} U^\vl(f(B^\vl, \icv(\vl)), A^{\ch(\vl)}_f) - U^\vl(f^\vl(B^\vl, \icv(\vl)), A^{\ch(\vl)}_{f^\vl}) \leq \epsilon \eqsp.
\end{aligned}
\end{equation}
\end{definition}
When $\epsilon=0$, we talk about a \textit{Subgame Perfect Equilibrium}. Consider a profile of actions defined as 
\begin{equation}\label{equation:optimal_action}
    x^{\star, \vl} = (A^{\star, \vl}, B^{\star, \ch(\vl)}, \icvs_{\ch(\vl)}(B^{\star, \ch(\vl)})) \eqsp,
\end{equation}
with $\mu$ defined in \Cref{lemma:optimal_incentives} and
\begin{equation*}
    A^{\star, \vl} = B^\vl, B^{\star, \ch(\vl)} = \argmax_{b^{\ch(\vl)}} \{ \mu(A^{\star, \vl},b^{\ch(\vl)})\} \eqsp.
\end{equation*}

\begin{assumption}\label{assumption:unique_argmax}
    For any player $\vl \in \trV$, we have that
    \begin{align*}
    \argmax_{a \in \cA} \mu^{\star, \vl}(a) \qquad \text{is a singleton.}
    \end{align*}
\end{assumption}

\begin{lemma}\label{lemma:spne_outcome}
    Assume that \Cref{assumption:unique_argmax} holds. Suppose that the strategies of players in $\trV$ are a subgame perfect Nash equilibrium $(f^\vl)_{\vl \in \trV}$. Then, the associated profiles of actions played under these strategies are $(x^{\star, \vl})_{\vl \in \trV}$ as defined in \Cref{{equation:optimal_action}}.
\end{lemma}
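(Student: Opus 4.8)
The plan is to prove the claim by backward induction on the depth, exploiting that a subgame perfect Nash equilibrium must prescribe an optimal continuation in every subgame, and then to read off the realized actions by a single forward pass from the root down to the leaves. Throughout, I read $\mu^{\star,\vl}$ in \Cref{assumption:unique_argmax} as the value $\rewv$ of \eqref{equation:mu_utility_star}, so the assumption guarantees that each player's optimal \emph{own} action is unique. The inductive claim at depth $d$ (from the leaves $d=1$ upward) is: in the equilibrium profile $(f^\vl)$, any player $\vl$ at depth $d$ receiving a contract $(B^\vl,\icv(\vl))$ plays $A^\vl = \argmax_{a\in\cA}\{\rewv(a)+\1_{B^\vl}(a)\,\icv(\vl)\}$ and offers each child the optimal incentive $(B^{\star,\wl},\icvs_{B^{\star,\wl}}(\wl))$ with $B^{\star,\ch(\vl)}=\argmax_{b^{\ch(\vl)}}\mu^\vl(A^\vl,b^{\ch(\vl)})$, thereby securing utility $\max_a\{\rewv(a)+\1_{B^\vl}(a)\icv(\vl)\}$. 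The base case is immediate: a leaf has no children, so $\rewv(a)=\theta^\vl(a)$, and best-responding in \eqref{equation:definition_utility} to $(B^\vl,\icv(\vl))$ means maximizing exactly $\theta^\vl(a)+\1_{B^\vl}(a)\icv(\vl)$, with a unique maximizer by \Cref{assumption:unique_argmax}.

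For the inductive step I fix $\vl$ at depth $d$ and invoke the induction hypothesis on its children to describe their equilibrium response: a child offered $(b^\wl,\tau^\wl)$ plays $\argmax_a\{\reww(a)+\1_{b^\wl}(a)\tau^\wl\}$. Substituting this into \eqref{equation:definition_utility}, the cheapest way for $\vl$ to induce a target profile $b^{\ch(\vl)}$ is to pay each child exactly $\icvs_{b^\wl}(\wl)$ (\Cref{lemma:optimal_incentives}), after which $\vl$'s utility equals $\mu^\vl(a,b^{\ch(\vl)})+\1_{B^\vl}(a)\icv(\vl)$; maximizing over $b^{\ch(\vl)}$ and then over $a$ yields $\rewv(a)$ and the claimed maximizers, the outer one being unique by \Cref{assumption:unique_argmax}. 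This closes the induction and simultaneously pins down the value every node extracts from any received contract.

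The forward pass then determines the realized actions. The root receives no contract (equivalently $\icv(\vl)=0$), so the induction applied to it gives that it plays $\argmax_a\rewv(a)$ and recommends to each child the optimal pair $(B^{\star,\wl},\icvs_{B^{\star,\wl}}(\wl))$. I descend the tree: a node $\vl$ that receives its optimal recommendation $(B^{\star,\vl},\icvs_{B^{\star,\vl}}(\vl))$ from its parent $\ul$ satisfies $\rewv(B^{\star,\vl})+\icvs_{B^{\star,\vl}}(\vl)=\max_a\rewv(a)$, so $B^{\star,\vl}$ is a best response and $\vl$ plays $A^\vl=B^{\star,\vl}$, i.e. the relation $A^{\star,\vl}=B^\vl$ of \eqref{equation:optimal_action}; its own recommendations to its children are again the optimal ones prescribed by the induction, so the argument propagates to the leaves and reconstructs the full profile $(x^{\star,\vl})_{\vl\in\trV}$.

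The main obstacle is the indifference at the agent level: when $B^{\star,\vl}$ differs from the node's self-optimal action $\argmax_a\rewv(a)$, the optimal incentive $\icvs_{B^{\star,\vl}}(\vl)$ makes $\vl$ \emph{exactly} indifferent between the two, so the recommended action is only a weak best response. I would resolve this by observing that for an exact subgame perfect equilibrium to exist the agent must break ties in favour of the principal: otherwise the principal's payoff from inducing the recommended action is only a non-attained supremum (paying $\icvs_{B^{\star,\vl}}(\vl)+\delta$ as $\delta\downarrow 0$) and her best response would fail to exist. Under this convention $\icvs_{B^{\star,\vl}}(\vl)$ is the optimal \emph{attained} incentive inducing $B^{\star,\vl}$, and \Cref{assumption:unique_argmax} removes the remaining ambiguity by forcing each node's own action — and hence the entire induced profile — to be uniquely determined.
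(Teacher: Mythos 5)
Your proof is correct and follows essentially the same route as the paper: a backward induction on the tree that uses \Cref{lemma:optimal_incentives} to identify each node's best attainable continuation utility $\rewv(a)+\1_{B^\vl}(a)\,\icv(\vl)$ under SPNE, then reads off the realized profile $(x^{\star,\vl})_{\vl\in\trV}$ from the root downward. Your explicit treatment of the indifference at the optimal incentive --- arguing that an exact SPNE can only exist if agents break ties in favour of the recommended action, since otherwise the principal's supremum over transfers is not attained --- is a genuine refinement over the paper's own proof, which silently assumes this tie-breaking when it asserts that paying exactly $\icvs_{b^\wl}(\wl)$ enforces the action $b^\wl$.
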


\begin{proof}[Proof of \Cref{lemma:spne_outcome}]

We justify the result with a backward induction structure on the result.

\textbf{Initialization.} Consider a player $\vl$ being a leaf. Since $\vl$ has no child, the utility she gets for any action $a \in \cA$ is
\begin{equation*}
    \mu^\vl(a) = \theta^\vl(a) \; \text{  and  } \; \rewv(a) = \theta^\vl(a) \eqsp.
\end{equation*}
Now we consider player $\ul = \pa(\vl)$. If $\ul$ uses incentives $\icv(\vl) < \icvs_b(\vl)$ for any recommended action $b$, $\vl$ would play $A^\vl = \argmax_{a \in \cA} \mu^\vl(a)$ as see in the proof of \Cref{lemma:optimal_incentives} to maximize his utility--and the same would happen for all of $\ul$'s children. Therefore $\ul$'s utility for a played action $a \in \cA$ would be
\begin{align}\label{equation:ejflzkfg}
    U^\ul(a,b^{\ch(\ul)}, \icv_{b^{\ch(\ul)}}(\ch(\vl)), B^\ul, \icv(\ul)) & = \theta^\ul(a, \{\argmax_{b}\theta^\wl(b)\}_{\wl \in \ch(\ul)}) + \1_{B^{\ul}}(a)
    \\
    & < \max_{a,b^{\ch(\ul)}} \{\mu^{\star, \ul}(a,b^{\ch(\ul)}) + \1_{B^{\ul}}(a) \} \eqsp,
\end{align}
as shown in the proof of \Cref{lemma:optimal_incentives}, under \Cref{assumption:unique_argmax}. Choosing the smallest $\icv_{b^\wl}(\wl)$ to still enforce action $b^\wl$ for player $\wl$ leads to choosing incentive $\icvs_{b^\wl}(\wl)$.
\begin{align*}
    U^\vl(f, B^\vl, \icv(\vl)) & = \theta^\vl(A^\vl, A^{\ch(\vl)})+\1_{B^\vl}(A^\vl) \icv(\vl) \leq \theta(B^\vl) + \icvs_{B^\vl}(\vl) = \max_{a \in \cA} U(a, B^\vl, \icv(\vl)) \eqsp,
\end{align*}
as we showed. By definition of a SPNE, we thus have that
\begin{align*}
    U^\vl(f^\vl, f^{\ch(\vl)}, B^\vl, \icv(\vl)) = \max_{a \in \cA} U(a, B^\vl, \icv(\vl)) \eqsp,
\end{align*}
which proves that the actions played by $\vl$ under a strategy $f^\vl$ being a SPNE are given by \Cref{equation:optimal_action}.

\textbf{Induction step.} Consider a player $\vl$ not being a leaf. Under our induction hypothesis, her children have action described by \Cref{equation:optimal_action} and all the players' strategies form a SPNE. As shown in the proof of \Cref{lemma:optimal_incentives}, the optimal utility $\vl$ can derive for some action $a \in \cA$ is
\begin{align*}
    \mu^{\star, \vl}(a) = \max_{b^{\ch(\vl)}} \mu^\vl(a,b^{\ch(\vl)}) \eqsp,
\end{align*}
and therefore the maximizing utility--for $\ul=\pa(\vl)$ offering action-incentive pair $B^\vl, \icv_{B^\vl}(\vl)$--is given as
\begin{align*}
    \max_{a \in \cA} \{\mu^{\star, \vl}(a) +\1_{B^\vl}(a) \icv(\vl)\} \eqsp.
\end{align*}
The same reasoning as in the initialization and the proof of \Cref{lemma:optimal_incentives} gives that $\ul$ should offer $\icvs_{B^{\vl}}(\vl)$ to maximize her utility, making $\mu^{\star, \vl}(B^\vl) + \icvs_{B^\vl}(\vl)$ the highest utility $\vl$ can achieve. Since our induction hypothesis gives us that $\vl$'s children under their strategy $f^{\ch(\vl)}$ pick recommended action $B^{\ch(\vl)}$ for incentives as least $\icvs_{B^{\ch(\vl)}}(\ch(\vl))$, we get that $\vl$ can achieve utility
\begin{align*}
    \max_{a \in \cA} \mu^{\star, \vl}(a) = \mu^{\star, \vl}(B^\vl) + \1_{B^\vl}(B^\vl) \icvs_{B^\vl}(\vl) = \max_{a,b^{\ch(\vl)}, \icv_{b^{\ch(\vl)}}(\ch(\vl))} \theta(a,b^{\ch(\vl)}) - \sum_{\wl \in \ch(\vl)} \1_{b^{\wl}}(B^\wl) \icv_{b^\wl}(\wl) + \1_{B^\vl}(a) \icvs_{B^\vl}(\vl) \eqsp,
\end{align*}
by choosing actions $A^\vl = B^\vl, B^{\ch(\vl)} = \argmax_{b^{\ch(\vl)}}\mu^\vl(A^\vl, b^{\ch(\vl)}), \icv(\ch(\vl)) = \icvs_{B^{\ch(\vl)}}(\ch(\vl))$. Therefore, under a strategy $f^\vl \in \cF$ maximizing $\vl$'s utility, since it belongs to a SPNE, we necessarily have that the actions played by $\vl$ under this strategy are given by \Cref{equation:optimal_action}, hence our induction step. Thus, we obtain the result.
\end{proof}

We now define the \textit{empirical joint distribution} as
\begin{equation}\label{equation:definition_empirical_frequency}
    \hat{x}^\vl_T \;\coloneqq\; \frac{1}{T}\sum_{t=1}^T \delta_{x^\vl_t} \in \Delta([K]^{\mB+1} \times [0,1]^\mB) \eqsp,
\end{equation}
where $\delta$ stands for the Dirac distribution. $\hat{x}^\vl_T$ is as the empirical distribution of player $\vl$'s actions. Note that for any measurable $\phi:[K]^{\mB+1} \times [0,1]^\mB \to\R$,
$\E_{x \sim \hat{x}^\vl_T}[\phi(x)]=\frac{1}{T}\sum_{t=1}^T \phi(x^\vl_t)$. Our goal is to show that if the players run $\alg$, their empirical distribution of actions converges to $x^{\star, \vl}$.

Let $\mu$ and $\nu$ be two probability measures on a metric space $(X, d)$. The Wasserstein-$1$ distance between them is defined as
\[
W_1(\mu, \nu) = \inf_{\gamma \in \Pi(\mu, \nu)} 
\int_{X \times X} d(x, y) \, \dist \gamma(x, y)
\]
where $\Pi(\mu, \nu)$ is the set of all couplings $\gamma(x, y)$ whose marginals are $\mu$ and $\nu$ and $\dist(x, y)$ is the distance between points $x$ and $y$. In our case, we write the distance $\dist$ between two action $x,y = (a^x,(b_i^x)_{i \in [K]^\mB}, (\icv_i^x)_{i \in [K]^\mB}), (a^y,(b_i^y)_{i \in [K]^\mB}, (\icv_i^y)_{i \in [K]^\mB})$ as
\begin{align*}
    \dist(x,y) = \1(a^x\ne  a^y) + \sum_{i \in [\mB]} (\1(b^x_i\ne b^y_i) + \1(b^x_i=b^y_i) |\icv_i^x-\icv_i^y|) \eqsp.
\end{align*}
Note that within the space $[K]^{\mB+1}, [0,1]^\mB$, symmetry, separation and triangular inequality are trivially satisfied for $\dist$. By the Kantorovich–Rubinstein duality, it can be rewritten
\[
W_1(\mu, \nu) = 
\sup_{\|f\|_{\text{Lip}} \le 1}
\left(
\int f(x)\, d\mu(x) - \int f(x)\, d\nu(x)
\right) \eqsp,
\]
where the supremum is over all $1$-Lipschitz functions $f$. When considering the Wasserstein-1-distance between the Dirac measure $\delta_{x^{\star, \vl}}$ for $\vl$'s optimal action $x^{\star, \vl}$ and the empirical joint distribution $\hat{x}^\vl_T$, there is a unique admissible coupling given by the product of both distributions, following
\begin{equation}
    \gamma(dx, dy)
    \;=\;
    \delta_{x^{\nu,\star}}(dx)\, \hat{x}^\vl_T(dy)
    \;=\;
    \frac{1}{T}\sum_{t=1}^T \delta_{(x^{\star, \vl}, x_t^\vl)}(dx, dy) \eqsp.
\end{equation}
Substituting this expression into the definition of $W_1$ gives:
\begin{equation}\label{eq:w1_dirac_empirical}
    W_1\!\big(\delta_{x^{\star,\vl}}, \hat{x}^\vl_T\big)
    \;=\;
    \int \dist (x, y)\, d\gamma(x, y)
    \;=\;
    \frac{1}{T}\sum_{t=1}^T \dist \left(x^{\star,\vl},\, x_t^\vl\right) \eqsp.
\end{equation}
Therefore, the 
Wasserstein--1 distance reduces to the \emph{average distance} between 
the point $x^{\star,\vl}$ and the support points $\{x_t^\vl\}_{t=1}^T$ 
of the empirical distribution. Intuitively, since all mass of $\delta_{x^{\nu,\star}}$ is concentrated 
at a single point, there is no optimization over couplings: the entire 
mass must be transported from $x^{\nu,\star}$ to each empirical atom 
$x_t^\nu$, resulting in the average transport cost above. Define the utility gaps (or margins) of a player $\vl \in \trV$ as
\begin{align*}
& \Delta_{b,\wl}^\vl
= \min_{b\neq b^{\star,\wl}}
\Big(\mu^\vl(a^{\star,\vl}, b^{\star,\ch(\vl)}) -
\mu^\vl(a^{\star,\vl}, b^{\star,\ch(\vl) \backslash \wl},b)\Big) > 0 \eqsp, \\
& \Tilde{\Delta}_{b}^\vl = \min_{a \in \cA} \min_{\Tilde{b}\neq \Tilde{b}^{\star,a}\in [K]^\mB} (\theta(a,\Tilde{b}^{\star,a})-\theta(a,\Tilde{b})) \qquad \text{for} \qquad \Tilde{b}^{\star,a} = \argmax_{\Tilde{b} \in [K]^\mB} \; \theta(a,\Tilde{b})
\end{align*}
\begin{assumption}\label{assumption:gap_reward}
    For any player $\vl \in \trV$, we have that $\Delta_{b,\wl}^\vl > 0, \Tilde{\Delta}_{b}^\vl >0$.
\end{assumption}

\medskip

\begin{lemma}\label{lemma:spne_outcome_convergence}
    Assume that \Cref{assumption:unique_argmax} and \Cref{assumption:gap_reward} hold. If all the players run $\alg$, then we have that for any $\vl\in\trV$, with probability greater than $1-o(1)$
    \begin{align*}
    W_1\!\big(\delta_{x^{\star,\vl}}, \hat{x}^\vl_T\big) = \frac{1}{T}\sum_{t=1}^T \dist \left(x^{\star,\vl},\, x_t^\vl\right) = o(1) \eqsp,
    \end{align*}
    which means that the joint empirical distribution of actions converges towards the profile of actions played under a set of players' strategies which is a SPNE.
\end{lemma}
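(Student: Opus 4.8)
The plan is to invoke the identity \eqref{eq:w1_dirac_empirical}, which already reduces $W_1(\delta_{x^{\star,\vl}}, \hat{x}^\vl_T)$ to the time average $\tfrac1T\sum_{t=1}^T \dist(x^{\star,\vl}, x_t^\vl)$, and then to bound this average by $o(1)$ on a high-probability event. Since $\vl$ runs $\alg$, her played profile is $x_t^\vl = (A_t^\vl, B_t^{\ch(\vl)}, (\icvest_{B_t^\wl}(\wl))_{\wl \in \ch(\vl)})$, while $x^{\star,\vl} = (A^{\star,\vl}, B^{\star,\ch(\vl)}, (\icvs_{B^{\star,\wl}}(\wl))_{\wl})$. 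Splitting $\dist$ into its three defining pieces, the average equals $E_1 + E_2 + E_3$ with
\[
E_1 = \tfrac1T\sum_t \1(A_t^\vl \ne A^{\star,\vl}), \quad E_2 = \tfrac1T\sum_t \sum_{\wl \in \ch(\vl)} \1(B_t^\wl \ne B^{\star,\wl}),
\]
and $E_3 = \tfrac1T\sum_t \sum_\wl \1(B_t^\wl = B^{\star,\wl})\,|\icvest_{B^{\star,\wl}}(\wl) - \icvs_{B^{\star,\wl}}(\wl)|$; each per-round summand is at most $1 + \mB$ by \Cref{lemma:bounded_incentives}. Note that $\dist$ only involves quantities $\vl$ controls (her own action and the recommendations/incentives she issues), so no children-deviation term enters.

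I would dispose of $E_3$ directly: on the event of \Cref{proposition:estimated_incentives_good}, which holds with probability $1 - o(1)$, we have $|\icvest_b(\wl) - \icvs_b(\wl)| \le 4/T^{\beta_d} + \acst\,\mB\,T^{-\extra_d}$ for every $b, \wl$, so $E_3 \le \mB\,(4/T^{\beta_d} + \acst\,\mB\,T^{-\extra_d}) = o(1)$ because $\beta_d, \extra_d > 0$.

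For $E_1 + E_2$, note that both vanish on every round where $\vl$'s black-box subroutine plays the optimal arm $(A^{\star,\vl}, B^{\star,\ch(\vl)})$ of her shifted $K^{\mB+1}$-armed instance, hence $E_1 + E_2 \le (1+\mB)\,\tfrac1T\,\Card\{t : (A_t^\vl, B_t^{\ch(\vl)}) \ne (A^{\star,\vl}, B^{\star,\ch(\vl)})\}$. \Cref{assumption:unique_argmax} makes the optimal own action unique and \Cref{assumption:gap_reward} makes the optimal recommendation unique and well-separated, so every suboptimal arm incurs at least a fixed gap $\Delta_{\min} = \min\{\min_{b,\wl}\Delta_{b,\wl}^\vl,\ \min_b \Tilde{\Delta}_b^\vl\} > 0$. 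Once the benchmark in $\actionregret^\vl$ is aligned with $A^{\star,\vl}$ (see below), each nonnegative summand equals the shifted-arm gap $\rewv_{A^{\star,\vl}} - \mu^\vl(A_t^\vl, B_t^{\ch(\vl)}) \ge \Delta_{\min}$, so the count of suboptimal rounds is at most $\actionregret^\vl/\Delta_{\min}$. By \Cref{lemma:everyone_assumption} this action regret is $\bigO(T^{\kappa_d})$ with $\kappa_d = 1 - 1/(2d) < 1$ (with probability $1 - 1/T^{\zeta_d} = 1 - o(1)$), whence $E_1 + E_2 = \bigO(T^{\kappa_d - 1}) = o(1)$.

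The main obstacle is the coupling flagged above: the benchmark inside $\actionregret^\vl(s,t)$ is $\max_a\{\rewv_a + \1(a = B_l^\vl)\,\icv_l(\vl)\}$, which only coincides with the fixed target $A^{\star,\vl}$ once $\pa(\vl)$ recommends $B_l^\vl = A^{\star,\vl}$ with a transfer at least $\icvs_{A^{\star,\vl}}(\vl)$. I would close this by an induction on depth from the root downward: the root has no parent, so the previous paragraph applies verbatim; and the inductive hypothesis that $\pa(\vl)$'s empirical profile is $o(1)$-close to $x^{\star,\pa(\vl)}$ forces $\pa(\vl)$, by coherence of the SPNE profile (\Cref{lemma:optimal_incentives} and \Cref{lemma:spne_outcome}), to recommend $A^{\star,\vl}$ and pay $\icvest_{A^{\star,\vl}}(\vl) \ge \icvs_{A^{\star,\vl}}(\vl)$ on all but $o(T)$ rounds, which makes $A^{\star,\vl}$ benchmark-optimal on those rounds; the remaining $o(T)$ rounds add only $o(1)$ to the averages and need no gap argument. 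Taking a union bound over the events of \Cref{proposition:estimated_incentives_good}, \Cref{lemma:everyone_assumption}, and the finitely many inductive events (each failing with probability $o(1)$), and combining the three bounds gives $W_1(\delta_{x^{\star,\vl}}, \hat{x}^\vl_T) = E_1 + E_2 + E_3 = o(1)$ with probability $1 - o(1)$, as claimed.
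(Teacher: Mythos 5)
Your reduction via \eqref{eq:w1_dirac_empirical}, the split into $E_1+E_2+E_3$, and the treatment of $E_3$ through \Cref{proposition:estimated_incentives_good} all match the paper's proof, and your explicit root-to-leaves induction is a reasonable way to make precise the benchmark alignment that the paper handles by defining $A^{\star,\vl}=B^\vl$ (i.e.\ following the recommendation) in \eqref{equation:optimal_action}. The genuine gap is in the step bounding $E_1+E_2$: the claim that ``every suboptimal arm incurs at least a fixed gap $\Delta_{\min}>0$'' fails for arms that differ from the target in the \emph{own-action} coordinate. By \Cref{lemma:optimal_incentives} the optimal transfer is the \emph{minimal} one making the recommended action weakly best, i.e.\ $\rewv(A^{\star,\vl})+\icvs_{A^{\star,\vl}}(\vl)=\max_{b\in\cA}\rewv(b)$: at the optimal incentive the recommended action is exactly \emph{tied} with $\vl$'s selfish-best action. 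Hence, even after your induction ensures that $\pa(\vl)$ recommends $A^{\star,\vl}$ and pays $\icvest_{A^{\star,\vl}}(\vl)$, the gap in $\vl$'s shifted instance between the target arm and an arm of the form $(\argmax_{b\in\cA}\rewv(b),\,\cdot\,)$ is only the padding $4/T^{\beta}+\acst\,\mB\,T^{-\extra}=o(1)$, not a constant. Neither \Cref{assumption:unique_argmax} (uniqueness of $\argmax_{a}\mu^{\star,\vl}(a)$) nor \Cref{assumption:gap_reward} (the gaps $\Delta_{b,\wl}^\vl$ and $\Tilde{\Delta}_{b}^\vl$ concern the children-recommendation coordinates, the own action being held fixed) supplies a constant own-action gap. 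So, as written, your counting argument yields no nontrivial bound on $E_1$ for any non-root player; it is valid only for the root, which receives no transfer.

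The flaw is repairable, and the repair is exactly what the paper does: own-action mismatches are deviations $A_t^\vl\neq B_t^\vl$, and the padding $\acst\,\mB\,T^{-\extra}$ deliberately added in \eqref{equation:estimated_incentives_definition} makes each such deviation cost at least $\acst\,\mB\,T^{-\extra}$ of action regret, so that \Cref{assumption:bounded_regret} gives $\Card\{t\colon A_t^\vl\neq B_t^\vl\}\leq T^{\kappa+\extra}$ (this is \eqref{equation:cardjmaxlittle}, which the paper's proof of \Cref{lemma:spne_outcome_convergence} invokes directly), with $\kappa+\extra<1$ under \eqref{equation:definition_good_parameters}. You should therefore split $E_1$ from $E_2$: bound $E_1$ by this vanishing-gap count, and keep the constant-gap count only for rounds where $A_t^\vl=B_t^\vl=A^{\star,\vl}$ but $B_t^{\ch(\vl)}\neq B^{\star,\ch(\vl)}$, where $\Delta_{b,\wl}^\vl$ genuinely applies. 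On those rounds, charging the count to $\actionregret^\vl$ as you do is actually tidier than the paper's route: the paper lower-bounds the \emph{full} regret $\regret^\vl(T)$ and must then absorb the underpayment term $\sum_t\sum_{\wl\in\ch(\vl)}(\icvs_{B_t^\wl}(\wl)-\icv_t(\wl))$ by also invoking the children's regrets $\regret^\wl(T)$, whereas your action-regret benchmark already embeds the optimal payments, so no children regrets are needed for $E_2$. With these two changes your argument goes through.
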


\begin{proof}[Proof of \Cref{lemma:spne_outcome_convergence}]
Assume that all the players $\vl \in \trV$ run $\alg$. For each player $\vl$, we have the empirical measure
\[
\hat{x}_T^\vl \;=\; \frac{1}{T} \sum_{t=1}^T \delta_{x_t^\vl},
\qquad \text{with} \qquad
x_t^\vl = (A_t^\vl,\, B_t^{\ch(\vl)},\, \icv_t(\ch(\vl))).
\]
The target action is $x^{\star,\vl} = (a^{\star,\vl},\, b^{\star,\ch(\vl)},\, \icv^\star(\ch(\vl)))$, where
$\icv^\star(\wl) = \icvs_{b^{\star,\wl}}(\wl)$ denotes the optimal incentive to player $\wl$ for action $b^{\star, \wl}$
defined in \Cref{lemma:optimal_incentives}. We want to control
\begin{align*}
  W_1(\delta_{x^{\star,\vl}}, \hat{x}_t^\vl) &= \frac{1}{T} \sum_{t=1}^T\left\{\1\{A_t^\vl \neq a^{\star,\vl}\}
+\sum_{\wl \in\ch(\vl)}
\Big(\1\{B_t^\wl \neq b^{\star,\wl}\}
+\1\{B_t^\wl=b^{\star,\wl}\}\,|\icv_t(\wl)-\icvs(\wl)|\Big)\right\} \eqsp.
\end{align*}
Consider the event $\cE$, defined as
\begin{align*}
    \cE = \bigcap_{\wl \in \ch(\vl), b \in \cA} \, \left\{ \icvest_b(\wl)-4/T^\beta -\acst \, \mB\, T^{-\extra} \leq \icvs_b(\wl) \leq \icvest_b(\wl) \right\}\bigcap_{b \in \cA} \left\{\icvest_b(\vl)-4/T^\beta -\acst \, \mB\, T^{-\extra} \leq \icvs_b(\vl) \leq \icvest_b(\vl) \right\} \eqsp.
\end{align*}
\Cref{proposition:estimated_incentives_good} and an union bound ensure that $\cE$ holds with probability at least $1-2 \,K \mB \lceil \log T^\beta \rceil /T^{\alpha \zeta} = 1-o(1)$. We do the rest of the proof assuming that $\cE$ holds.

As done in the proof of \Cref{lemma:decomposition_regret}, we can write
\begin{align*}
    \regret^\vl(T) & = \sum_{t=1}^{T} \max_{a, b^{\ch(\vl)}} \left\{\mu^\vl(a, b^{\ch(\vl)}) + \1_{B_t^\vl}(a) \, \icv_t(\vl) \right\} \\
    & \quad - \sum_{t=1}^T \left\{ \theta^\pl(A_t^\vl, B_t^{\ch(\vl)}) +\1_{B_t^\vl}(A_t^\vl)\, \icv_t(\vl) - \sum_{\wl \in \up(\pl)} \icvs_{B_t^\wl}(\wl) \right\} \\
    & \quad + \sum_{t=1}^T \sum_{\wl \in \ch(\vl)} (\1_{B_t^\wl}(A^\wl_t)\icv_t(\wl) -\icvs_{B_t^\wl}(\wl)) + \sum_{t=1}^T (\theta^\vl(A_t^\vl, B_t^{\ch(\vl)}) - \theta^\vl(A_t^\vl, A_t^{\ch(\vl)})) \\   
    & = \sum_{t=1}^{T} \max_{a} \left\{\mu^{\star,\vl}(a) + \1_{B_t^\vl}(a) \, \icv_t(\vl) -\mu^\pl(A_t^\vl, B_t^{\ch(\vl)}) - \1_{B_t^\vl}(A_t^\vl)\, \icv_t(\vl)\right\} \\
    & \quad + \sum_{t=1}^T \sum_{\wl \in \ch(\vl)} (\1_{B_t^\wl}(A^\wl_t)\icv_t(\wl) -\icvs_{B_t^\wl}(\wl)) + \sum_{t=1}^T (\theta^\vl(A_t^\vl, B_t^{\ch(\vl)}) - \theta^\vl(A_t^\vl, A_t^{\ch(\vl)})) \eqsp.
\end{align*}
Recall that for any $s<t \in \iint{1}{T}$, we define $\Jmax$ as
\begin{align*}
& J_{\iint{s+1}{s+t}}^\wl = \{l \in \iint{s+1}{s+t} \text{ such that } A_l^\wl \ne B_l^\wl\} \eqsp, \\
    & \Jmax_{\iint{s+1}{s+t}}(\vl) = \{l \in \iint{s+1}{s+t} \text{ such that } l \in J_{\iint{s+1}{s+t}}^\wl \text{ for some } \wl \in \ch(\vl)\} \eqsp,
\end{align*}
and it is showed in \Cref{equation:cardjmaxlittle} during the proof of \Cref{lemma:repeat_assumption} that
\begin{align*}
    \Card(\Jmax_{\iint{s+1}{s+t}}) \leq \sum_{\wl \in \ch(\vl)} \Card(J_{\iint{s+1}{s+t}}^\wl) \leq  t^{\kappa+\eta} \eqsp.
\end{align*}
Under \Cref{assumption:gap_reward}, we can write for the ultimate term in the regret decomposition
\begin{align*}
    |\sum_{t=1}^T (\theta^\vl(A_t^\vl, B_t^{\ch(\vl)}) - \theta^\vl(A_t^\vl, A_t^{\ch(\vl)}))| \leq \Card(\Jmax_{\iint{1}{T}}) \Tilde{\Delta}^\vl_b \leq \Tilde{\Delta}^\vl_b  \cdot T^{\kappa+\eta} \eqsp.
\end{align*}
We also have that--using again the bound on $\Card(\Jmax_{\iint{s+1}{s+t}})$ in the second line
\begin{align*}
    \sum_{t=1}^T \sum_{\wl \in \ch(\vl)} (\icvs_{B_t^\wl}(\wl)-\1_{B_t^\wl}(A^\wl_t)\icv_t(\wl)) & \leq \sum_{t \in \iint{1}{T}, \wl \in\ch(\vl) \colon \icvs_{B_t^\wl}(\wl) >\icv_t(\wl)} (\icvs_{B_t^\wl}(\wl)-\1_{B_t^\wl}(A^\wl_t)\icv_t(\wl)) \\
    & \leq  T^{\kappa+\eta} + \sum_{t \in \iint{1}{T}, \wl \in\ch(\vl) \colon \icvs_{B_t^\wl}(\wl) >\icv_t(\wl), A_t^\wl = B_t^\wl} (\icvs_{B_t^\wl}(\wl)-\icv_t(\wl)) \eqsp,
\end{align*}
but for each step $t \in \iint{1}{T}$ and player $\wl \in \ch(\vl)$ such that $\icvs_{B_t^\wl}(\wl) >\icv_t(\wl)$ and $A_t^\wl = B_t^\wl$, player $\wl$ incurs a regret at least $\icvs_{B_t^\wl}(\wl)-\icv_t(\wl)$, and hence
\begin{align*}
    \sum_{t=1}^T \sum_{\wl \in \ch(\vl)} (\icvs_{B_t^\wl}(\wl)-\1_{B_t^\wl}(A^\wl_t)\icv_t(\wl)) & \leq T^{\kappa+\eta} + \sum_{\wl \in \ch(\vl)} \regret^\wl(T) \eqsp.
\end{align*}
Considering the first term of the regret, we have that
\begin{align*}
    \Jmax_{\iint{1}{T}}(\pa(\vl)) = \{l \in \iint{1}{T} \text{ such that } l \in J_{\iint{1}{T}}^\wl \text{ for some } \wl \in \ch(\pa(\vl))\} \eqsp,
\end{align*}
which includes all the steps of deviation of $\vl$ from action $B_t^\vl$. Therefore, we have that
\begin{align*}
    \Card\{t \in \iint{1}{T}\colon A_t^\vl\ne B_t^\vl\} \leq T^{\kappa +\eta} \eqsp, 
\end{align*}
and at the same time, under the event $\cE$, we have that for $t \geq \wait, \icv_t(\vl) = \icvest_{B_t^\vl}(\wl) \geq \icvs_{B_t^\vl}(\wl)$ (with $\wait$ the time players use for exploration), which makes--by definition of the optimal incentive--$B_t^\vl$ the best action to play in hindsight. Therefore, we have that
\begin{align*}
& \sum_{t=1}^{T} \max_{a} \left\{\mu^{\star,\vl}(a) + \1_{B_t^\vl}(a) \, \icv_t(\vl) -\mu^\pl(A_t^\vl, B_t^{\ch(\vl)}) - \1_{B_t^\vl}(A_t^\vl)\, \icv_t(\vl)\right\} \geq - 2 \, \wait - 2 \cdot T^{\kappa +\eta} \\
& \quad + \sum_{t=1}^T \{\mu^{\star,\vl}(B_t^\vl) + \icvest_{B_t^\vl}(\vl) - \mu^\pl(B_t^\vl, B_t^{\ch(\vl)}) - \icvest_{B_t^vl}(\vl))\} \\
& = -2 \, \wait  -2 \cdot T^{\kappa +\eta} + \sum_{t=1}^T \{\mu^{\star,\vl}(B_t^\vl) - \mu^\pl(B_t^\vl, B_t^{\ch(\vl)})\} \\
& \geq -2 \, \wait - 2 \cdot T^{\kappa +\eta} + \min_{\wl \in \ch(\vl)} \Delta_{b, \wl}^\vl \sum_{t=1}^T \sum_{\wl \in \ch(\vl)} \1\{B_t^\wl \ne b^{\star, \wl}\} \eqsp,
\end{align*}
and we can group all the terms together to obtain
\begin{equation}\label{equation:gfzogh}
\begin{aligned}
    \min_{\wl \in \ch(\vl)} \Delta_{b, \wl}^\vl \sum_{t=1}^T \sum_{\wl \in \ch(\vl)} \1\{B_t^\wl \ne b^{\star, \wl}\} \leq \regret^\vl(T) + \Tilde{\Delta}^\vl_b \cdot T^{\kappa+\eta} + \sum_{\wl \in \ch(\vl)} \regret^\wl(T) + T^{\kappa+\eta} + 2 \wait +2 \cdot T^{\kappa +\eta} \eqsp,
\end{aligned}
\end{equation}
and hence
\begin{align}\label{equation:eghregjzglj}
    \sum_{t=1}^T \sum_{\wl \in \ch(\vl)} \1\{B_t^\wl \ne b^{\star, \wl}\} = o(T) \eqsp,
\end{align}
under the event $\cE$. Now turning to the differences of incentives, we can write
\begin{align*}
\sum_{t=1}^T \sum_{\wl \in\ch(\vl)} \1\{B_t^\wl=b^{\star,\wl}\}\,|\icv_t(\wl)-\icvs(\wl)| & \leq \wait + \sum_{t=\wait+1}^T \sum_{\wl \in\ch(\vl)} |\icvs_{B^\wl_t}(\wl)-\icvest_{B^\wl_t}(\wl)| \\
& \leq \wait + (T-\wait) \cdot \mB(4/T^\beta +\acst \, \mB\, T^{-\extra}) \\
& = o(T) \eqsp,
\end{align*}
under the event $\cE$. Finally, still under $\cE$, we have that the offered incentives make recommended action $B_t^\vl$ optimal. Therefore, as we have
\begin{align*}
        \Card\{t \in \iint{1}{T}\colon A_t^\vl\ne B_t^\vl\} \leq T^{\kappa +\eta} \eqsp,
\end{align*}
and by definition $a^{\star,\vl}$ which boils down to accepting the proposed action--as showed in \Cref{equation:optimal_action}--we can write
\begin{align}\label{equation:egjemghjg}
    \sum_{t=1}^T \1\{A_t^\vl \neq a^{\star,\vl}\} \leq T^{\kappa +\eta} \eqsp.
\end{align}
Combining together \eqref{equation:eghregjzglj}, \eqref{equation:gfzogh} and \eqref{equation:egjemghjg}, we obtain that with probability $1-o(1)$
\begin{align*}
    \frac{1}{T} \sum_{t=1}^T \sum_{\wl \in \ch(\vl)} \1\{B_t^\wl \ne b^{\star, \wl}\} + \frac{1}{T} \sum_{t=1}^T \sum_{\wl \in\ch(\vl)} \1\{B_t^\wl=b^{\star,\wl}\}\,|\icv_t(\wl)-\icvs(\wl)| + \frac{1}{T} \sum_{t=1}^T \1\{A_t^\vl \neq a^{\star,\vl}\} = o(1) \eqsp,
\end{align*}
which allows to conclude that if all players run $\alg$, with probability at least $1-o(1)$
\begin{align*}
    W_1(\delta_{x^{\star,\vl}}, \hat{x}_t^\vl) = o(1) \eqsp.
\end{align*}

\end{proof}

\newpage

\end{document}